\title{%
  Partial Orders for Precise and Efficient Dynamic Deadlock Prediction~\footnote{See corrigendum after appendix.}
}
\author{Bas van den Heuvel, Martin Sulzmann, and Peter Thiemann}
\newtheorem{definition}{Definition}
\newtheorem{lemma}{Lemma}
\newtheorem{theorem}{Theorem}
\definecolor{GrayBgColor}{rgb}{0.9, 0.9, 0.9}
\definecolor{GrayBgBColor}{rgb}{0.7, 0.7, 0.7}
\newcommand*{\backref}[1]{}
\newcommand*{\backrefalt}[4]{%
    \ifcase #1%
          \or Cited on page~#2.%
          \else Cited on pages~#2.%
    \fi%
    }
\newcommand{\xRightarrow}[2][]{\ext@arrow 0359\Rightarrowfill@{#1}{#2}}
\newcommand{\bi}{\begin{array}[t]{@{}l@{}}}
\newcommand{\ei}{\end{array}}
\newcommand{\ba}{\begin{array}}
\newcommand{\ea}{\end{array}}
\newcommand{\bda}[1]{\begin{displaymath}\ba{#1}}
\newcommand{\eda}{\ea\end{displaymath}}
\newcommand{\bp}{\begin{quote}\tt\begin{tabbing}}
\newcommand{\ep}{\end{tabbing}\end{quote}}
\newcommand{\ignore}[1]{}
\newcommand{\mathem}{\sf}
\newcommand{\conc}{\cdot}%%{.}
\newcommand{\thread}[1]{\ensuremath{\tau_{#1}}}
\newcommand{\Vacq}{V_{acq}}
\newcommand{\Vrel}{V_{rel}}
\newcommand{\eventE}[1]{\ensuremath{e_{#1}}}
\newcommand{\mycolorbox}[2]{\adjustbox{margin=.6\fboxsep,bgcolor=#1,margin=-.6\fboxsep}{#2}}
\newcommand{\HIGHLIGHT}[1]{\mycolorbox{GrayBgColor}{\ensuremath{#1}}}
\newcommand{\HIGHLIGHTB}[1]{\mycolorbox{GrayBgBColor}{\ensuremath{#1}}}
\newcommand{\uline}[1]{\rule[0pt]{#1}{0.4pt}}% Fill this blank
\newcommand{\dontCare}{\uline{.15cm}}
\newcommand\evtAA{e}
\newcommand\evtBB{f}
\newcommand\evtCC{g}
\newcommand\evtAcc{a}
\newcommand\evtAccA{\evtAcc'}
\newcommand\evtRel{r}
\newcommand\evtRelA{\evtRel'}
\newcommand{\incC}[2]{{\mathem inc}(#1,#2)}
\newcommand{\maxN}[2]{{\mathem max}(#1,#2)}
\newcommand{\LKA}{\LK1}
\newcommand{\LKB}{\LK2}
\newcommand{\LKC}{\LK3}
\newcommand{\LKD}{\LK4}
\newcommand{\LKE}{\LK5}
\newcommand{\VA}{x}
\newcommand{\VB}{y}
\newcommand{\lockE}[1]{\textit{acq}(#1)} %%{\mathit{acq}(#1)}
\let\acqE\lockE
\newcommand{\reqE}[1]{\textit{req}(#1)}
\newcommand{\reqLockE}[1]{\reqE{#1}} %%{\textit{req}(#1)}
\newcommand{\unlockE}[1]{\textit{rel}(#1)} %%{\mathit{rel}(#1)}
\let\relE\unlockE
\newcommand{\readE}[1]{rd(#1)}
\newcommand{\writeE}[1]{wr(#1)}
\newcommand\TrSymbol{\mbox{\tiny Tr}}
\newcommand{\TrLt}[1][]{\mathrel{<_{\TrSymbol}^{#1}}}
\newcommand\POSymbol{\mbox{\tiny PO}}
\newcommand\POLt[1][]{\mathrel{<_{\POSymbol}^{#1}}}
\newcommand\POLeq[1][]{\mathrel{\leq_{\POSymbol}^{#1}}}
\newcommand{\hbSymbol}{\mbox{\tiny HB}}
\newcommand{\wcpSymbol}{\mbox{\tiny WCP}}
\newcommand{\HBLt}[1][]{\mathrel{<_{\hbSymbol}^{#1}}}
\newcommand{\WCPLt}[1][]{\mathrel{<_{\wcpSymbol}^{#1}}}
\newcommand{\mhbSymbol}{\mbox{\tiny MHB}}
\newcommand{\dpSymbol}{\mbox{\tiny DP}}
\newcommand{\CMHBLt}[2][]{\mathrel{<_{\mhbSymbol}^{#1(#2)}}}
\newcommand{\DPLt}[1][]{\mathrel{<_{\dpSymbol}^{#1}}}
\newcommand{\CMHBConc}[2][]{\mathrel{{||}_{\mhbSymbol}^{#1}}}
\DeclareMathOperator\crp{crp}
\DeclareMathOperator\TRWcrp{TRWcrp}
\newcommand{\ACQ}[1]{\evtAcc_{#1}}% {\mathit{acq}_{{\alpha}_{#1}}}
\newcommand{\REL}[1]{\evtRel_{#1}}
\newcommand{\ReqSymb}{q}
\newcommand{\REQ}[1]{\ReqSymb_{#1}}
\newcommand{\REQB}[1]{\ReqSymb'_{#1}}
\newcommand{\LK}[1]{l_{#1}}%{x_{{\alpha}_{#1}}}
\let\THD\thread
\newcommand{\posP}[2]{\textit{pos}_{{\scriptstyle #1}}(#2)}
\newcommand{\threadVC}[1]{\textit{Th}(#1)}
\newcommand{\lastWriteVC}[1]{\ensuremath{L_W}(#1)}
\newcommand{\lastReadVC}[1]{\ensuremath{L_R}(#1)}
\newcommand{\accVC}[2]{#1[ #2 ]}
\newcommand\conf{\mathrel{\bowtie}}
\newcommand{\LD}[3]{{\langle #1, #2, #3 \rangle}}
\newcommand{\LDt}{\ensuremath{\mathcal{L}_{\mathit D}}} %% temporaries
\newcommand{\LDMapSym}{\ensuremath{\mathcal{M}}}
\newcommand{\LDMap}[3]{\LDMapSym{\langle #1, #2, #3 \rangle}}
\newcommand{\GCMapSym}{\ensuremath{\mathcal{G}}}
\newcommand{\GlobalLS}{{\mathit  All_{lh}}}
\newcommand{\Ald}{F} %% abstract lock dependency stmbol consistent with SPD paper
\newcommand{\AldE}{E} %% specific entry
\newcommand{\DDs}{{\mathcal{D}}}
\newcommand{\AAs}{{\mathcal{A}}}
\newcommand{\BBs}{{\mathcal{B}}}
\newcommand{\pwrsymbol}{\mbox{\tiny PWR}}%{PWR}
\newcommand{\PWRLt}[1][]{\mathrel{<_{\pwrsymbol}^{#1}}}
\newcommand{\PWRConc}[1][]{\mathrel{{||}_{\pwrsymbol}^{#1}}}
\newcommand{\trwSymbol}{\mbox{\tiny TRW}}
\newcommand{\TRWLt}[1][]{\mathrel{<_{\trwSymbol}^{#1}}}
\newcommand{\TRWConc}[1][]{\mathrel{{||}_{\trwSymbol}^{#1}}}
\newcommand{\VCConc}{\mathrel{||}}
\DeclareMathOperator\@CS{CS}
\newcommand\CS[1][]{\@CS_{#1}}
\DeclareMathOperator\@AH{AH}
\newcommand\AH[1][]{\@AH_{#1}}
\DeclareMathOperator\@LH{LH}
\newcommand\LH[1][]{\@LH_{#1}}
\newcommand{\LocksSym}[1]{\ensuremath{\mathcal{L}_{#1}}}           %% symbol
\newcommand{\AcqHeld}{\mathcal{A}_{H}}
\newcommand{\Hist}[1]{\ensuremath{\mathcal{H}(#1)}} %% lock history used by PWR
\newcommand{\Acq}[1]{Acq(#1)}
\newcommand{\DP}[1]{\{ #1 \}}  %% deadlock pattern
\def\Tunc{Tun\c{c}\xspace}
\newcommand{\SPDOfflineUD}{\mbox{SPDOffline$^{*}$}\xspace}    %% our variant
\newcommand{\SPDOffline}{\mbox{SPDOffline}\xspace}            %% theirs
\newdimen\legendxshift
\newdimen\legendyshift
\newcommand{\bclldist}{1mm}
\newcommand{\bclegend}[3][10mm]{%
	% initialize
	\legendxshift=0pt\relax
	\legendyshift=0pt\relax
	\xdef\legendnodes{}%
	% get width of longest text and number of lines
	\foreach \lcolor/\ltext [count=\ll from 1] in {#3}%
	{\global\legendlines\ll\pgftext{\setbox0\hbox{\bcfontstyle\ltext}\ifdim\wd0>\legendxshift\global\legendxshift\wd0\fi}}%
	% calculate xshift for legend; \bcwidth: from bchart package; \bclldist: from node frame, inner sep=\bclldist (see below)
	% \@tempdima: half width of bar; 0.72em: inner sep from text nodes with some manual adjustment
	\@tempdima#1\@tempdima0.5\@tempdima
	\pgftext{\bcfontstyle\global\legendxshift\dimexpr\bcwidth-\legendxshift-\bclldist-\@tempdima-0.72em}
	% calculate yshift; 5mm: heigt of bar
	\legendyshift\dimexpr5mm+#2\relax
	\legendyshift\legendlines\legendyshift
	% \bcpos-2.5mm: from bchart package; \bclldist: from node frame, inner sep=\bclldist (see below)
	\global\legendyshift\dimexpr\bcpos-2.5mm+\bclldist+\legendyshift
	% draw the legend
	\begin{scope}[shift={(\legendxshift,\legendyshift)}]
		\coordinate (lp) at (0,0);
		\foreach \lcolor/\ltext [count=\ll from 1] in {#3}%
		{
			\node[anchor=north, minimum width=#1, minimum height=5mm,fill=\lcolor] (lb\ll) at (lp) {};
			\node[anchor=west] (l\ll) at (lb\ll.east) {\bcfontstyle\ltext};
			\coordinate (lp) at ($(lp)-(0,5mm+#2)$);
			\xdef\legendnodes{\legendnodes (lb\ll)(l\ll)}
		}
		% draw the frame
		\node[draw, inner sep=\bclldist,fit=\legendnodes] (frame) {};
	\end{scope}
}
\algnewcommand{\IfThen}[2]{\State \algorithmicif\ #1\ \algorithmicthen\ #2}
\algnewcommand{\ForDo}[2]{\State \algorithmicfor\ #1\ \algorithmicdo\ #2}
\NewDocumentEnvironment{proofsketch}{}{\begin{proof}[Proof (sketch)]}{\end{proof}}
\newcommand\cond[1]{\textsf{[#1]}}
\DeclareMathOperator\evts{evts}
\DeclareMathOperator\thd{thd}
\DeclareMathOperator\thds{thds}
\DeclareMathOperator\proj{proj}
\DeclareMathOperator\@mod{mod}
\renewcommand\mod{\mathbin{\%}}
\let\oldfigure\figure
\let\endoldfigure\endfigure
\def\ih#1{IH\textsubscript{#1}\xspace}
\crefname{line}{line}{lines}
\begin{document}

\maketitle

\begin{abstract}
  Deadlocks are a major source of bugs in concurrent programs.
  They are hard to predict, because they may only occur under specific
  scheduling conditions.
  Dynamic analysis attempts to identify potential deadlocks by
  examining a single execution trace of the program.
  A standard approach involves monitoring sequences of lock
  acquisitions in each thread, with the goal of identifying deadlock
  patterns.
  A deadlock pattern is characterized by a cyclic chain of lock
  acquisitions, where each lock is held by one thread while being
  requested by the next.
  However, it is well known that not all deadlock patterns identified
  in this way correspond to true deadlocks, as they may be impossible to manifest under any schedule.

  We tackle this deficiency by proposing a new method \emph{based on partial orders} to eliminate false positives:
  lock acquisitions must be unordered under a given partial order, and not preceded by other deadlock patterns.
  We prove \emph{soundness} (no falsely predicted deadlocks) for the novel TRW partial order,
  and \emph{completeness} (no deadlocks missed) for a slightly weakened variant of TRW.
  Both partial orders can be computed efficiently and report the same
  deadlocks for an extensive benchmark suite.
\end{abstract}

%--------------------------------------------------------
%--------------------------------------------------------
\section{Introduction}
\label{sec:intro}

To fully leverage today’s multicore architectures, programs must be
designed for concurrency.
Yet, concurrent programming introduces
significant challenges, as developers must use synchronization
primitives, such as locks, to ensure correct behavior.
Incorrect
use of locks can lead to critical issues, including resource
deadlocks.
Resource deadlock occurs when two or more threads are
stuck waiting for each other’s locks, resulting in a standstill that
halts program execution.
These deadlocks are particularly difficult to
diagnose, as they may only appear intermittently, often after hundreds
of successful runs.

Program analysis can help  programmers detect potential resource
deadlocks.
In particular, dynamic analysis aims at predicting a program's behavior
under different schedules by analyzing a trace of events (including
acquires and releases of locks)
generated during a
%% single, successful
program run.
% Interesting events are acquiring/releasing a lock and reading/writing
% a global variable.
A common approach to predicting deadlocks is to search a trace for \emph{deadlock patterns},
that involve series of lock acquire events with a cyclic dependency.
Deadlock patterns are
not sufficient to characterize resource deadlocks, meaning that they may lead
to false positives: predicted deadlocks that cannot manifest under any schedule.

% The literature distinguishes between resource deadlocks and communication deadlocks.
% A communication deadlock happens when threads are stuck waiting on messages from other threads.
% A resource deadlock happens when when we find a set of $n>1$ threads
% where each thread tries to acquire a lock that is held by some other thread in the set.
% We consider here concurrent programs with shared variables and locks but without any
% message-passing primitives.
% Hence,  we only examine resource deadlocks where
% our focus is on dynamic deadlock prediction methods.

% This
% discussion is limited to concurrent programs that use shared variables
% and locks, without incorporating message-passing primitives. Our
% primary focus is on methods for dynamically predicting resource
% deadlocks.

\paragraph{Prior work.}

To eliminate false positives,
prior work~\cite{conf/oopsla/KalhaugeP18,10.1007/978-3-319-23404-5_13,10.1145/1542476.1542489,Samak:2014:TDD:2692916.2555262,10.1007/978-3-319-23404-5_13}
attempts to find a schedule which exposes the deadlock, either symbolically via
SMT solving or by program re-execution.
Such methods can be very costly, as we might need to exhaustively explore
all alternative schedulings of the execution.
To improve efficiency, recent works~\cite{conf/fse/CaiYWQP21,conf/pldi/TuncMPV23}
employ alternative solving methods to remove false positives.
However, these methods may still be costly (the time complexity of \cite{conf/fse/CaiYWQP21} is quadratic in trace length) and leave deadlocks unpredicted (\cite{conf/pldi/TuncMPV23} reports only so-called ``sync-preserving'' deadlocks).

\paragraph{Our novel approach.}

We introduce a new approach that eliminates false-positive deadlock patterns, inspired by partial-order methods known from data-race prediction.
Particularly, our work is novel in that it requires only an analysis of the ordering of trace events, in contrast to prior works that incorporate partial orders (such as~\cite{conf/oopsla/KalhaugeP18,conf/fse/CaiYWQP21,conf/pldi/TuncMPV23}) but need further steps akin to trace exploration.

To be precise, our approach refines the notion of deadlock pattern from the
literature by introducing the following two conditions:

\begin{description}

  \item[Partially-ordered acquires]
    A deadlock pattern is a false positive if some of its acquires are ordered.
    Hence, we only consider deadlock patterns where all acquires are concurrent (i.e., pairwise unordered).

  \item[Partially-ordered deadlock patterns]
    A deadlock pattern is a false positive if it is blocked by an earlier deadlock.
    We identify such situations by defining a partial order on deadlock
    patterns, and only consider the ``earliest'' deadlock patterns.
    %
    %% MS: note, DP-Block order can only be computed once we identified all deadlock pattern
    %% The partial order information is computed while processing the trace and
    %% Via the partial order information among acquires and deadlock patterns,

\end{description}

\noindent
This way, given an appropriate partial order, our method instantly removes false positives without requiring extra steps or limiting the search to a subclass of deadlocks.

\paragraph{A new partial order for sound and efficient deadlock prediction.}

Thus, the challenge lies in finding an appropriate partial order among acquires
that fits the purpose of deadlock prediction (the partial order among deadlock patterns is fairly straightforward).
Existing partial orders employed in the context
of data race prediction~\cite{lamport1978time,Smaragdakis:2012:SPR:2103621.2103702,conf/pldi/KiniMV17,10.1145/3360605}
are not suitable, because when applied to deadlock detection we may end up with false positives \emph{and} false negatives.
%% MS: For HB/SHB we would have no false positives because no deadlock will ever be reported.
%% Discussed in the overview section.

In this paper, we introduce the novel TRW partial order.
We are able to show that under TRW all deadlock patterns correspond to true deadlocks.
However, we may encounter false negatives.
Earlier complexity results~\cite{conf/pldi/TuncMPV23} suggest that it is impossible to find
an \emph{efficient} deadlock-prediction method that is \emph{sound} (no false positives) and \emph{complete} (no false negatives).
Hence, we also consider PWR~\cite{conf/mplr/SulzmannS20},
and show that under PWR deadlock patterns are complete, although we may face some false positives.
Our experiments show that PWR and TRW are efficient and report the same set of deadlock patterns
for an extensive benchmark suite.

%% Conclusion later:
%% PWR and TRW, ``good'' candidates,
%% opens up the research for further variants ...

\paragraph{Contributions and outline.}

In summary, our contributions are:
\begin{itemize}

  \item
    We define precisely our refined deadlock patterns
    using partial orders
    among events and among deadlock patterns.
    For the TRW partial order we establish soundness (\cref{{s:sound}}), and
    for the PWR partial order we establish completeness
    (%
    %subject to a mild technical condition;
    \cref{s:complete}).

  \item We present an implementation of our approach as an offline version of
    the UNDEAD deadlock predictor~\cite{conf/ase/ZhouSLCL17}, with versions based
    on PWR and TRW
    (\cref{sec:implementation}).

  \item We study the impact on performance and precision
    of deadlock prediction under PWR and TRW
    (\cref{sec:experiments}).
    We also compare in detail against the sync-preserving deadlock
    predictor \SPDOffline~\cite{conf/pldi/TuncMPV23}.

\end{itemize}
\Cref{sec:overview} gives an overview of our work, and \cref{sec:prelim} introduces basic definitions and notations.
\Cref{sec:related,sec:conclusions} discuss related work and draw some conclusions, respectively.
The supplement can be safely ignored; it contains detailed proofs, additional examples and experimentation results, and preliminary access to our implementation (to be submitted as an artifact).

%%% Local Variables:
%%% mode: latex
%%% TeX-master: "main"
%%% End:

%--------------------------------------------------------
%--------------------------------------------------------
\section{Overview}
\label{sec:overview}

In this section, we recall a standard method for dynamic deadlock
detection~\cite{10.5555/645880.672085,DBLP:conf/spin/Harrow00}, point out its shortcomings, and explain our approach to
addressing these shortcomings.

Our discussion relies on traces like $T_1$ shown in \cref{fig:ex0}.
The left part of the diagram represents a program run by a trace of events.
We briefly explain the notation; \cref{sec:prelim} includes formal definitions and details.
The diagram visualizes the interleaved execution of the program's events using
a tabular notation with a separate column for each thread and one
event per row.
The order from top to bottom reflects the observed temporal order of events.

Each event takes place in a specific thread and
represents an operation.
% Operation $\forkE{t}$ creates a new thread $t$ and
% operation $\joinE{t}$ synchronizes with the termination of thread $t$.
We write $\LKA, \LKB, \ldots$ to denote locks and  $\VA, \VB, \ldots$ to denote shared variables.
Operations $\lockE{\LKA}$ and~$\unlockE{\LKA}$ acquire and release lock~$\LKA$, respectively.
Operations $\readE{\VA}$ and $\writeE{\VA}$ are read and write operations, respectively, on shared variable~$\VA$.
The same operation may appear multiple times in a trace, so
we use indices $e_i$ to uniquely identify events in the trace.
Traces are formally expressed as lists, e.g.,
$[e_1,e_2, e_3, e_4, e_5, e_6, e_7,e_8]$ for $T_1$.

\subsection{Lock Dependencies}

\begin{figure}[t]

  \begin{minipage}[b]{.44\textwidth}
    \bda{@{}lcl}
      % ex0
      \begin{array}{|l|l|l||l|}
        \hline
        T_1 & \thread{1} & \thread{2} & \mbox{Lock deps} \\
        \hline
        \eventE{1} & \lockE{\LKA} &&  \\
        \eventE{2} & \HIGHLIGHT{\lockE{\LKB}} && \LD{\thread{1}}{\LKB}{\{\LKA\}} \\
        \eventE{3} & \unlockE{\LKB} && \\
        \eventE{4} & \unlockE{\LKA} && \\
        \eventE{5} && \lockE{\LKB} & \\
        \eventE{6} && \HIGHLIGHT{\lockE{\LKA}} & \LD{\thread{2}}{\LKA}{\{\LKB\}} \\
        \eventE{7} && \unlockE{\LKA} & \\
        \eventE{8} && \unlockE{\LKB} & \\
        \hline
      \end{array}

    \eda
    \subcaption{Trace with lock dependencies.}
    \label{fig:ex0}
  \end{minipage}%
  \hfill%
  \begin{minipage}[b]{.55\textwidth}
    \bda{c}

%%latexTrace $ addLoc ex_3
\ba{|l|l|l|l|}
\hline T_2  & \thread{1} & \thread{2} & \mbox{Lock deps} \\ \hline
\eventE{1}  & \lockE{\LKA} && \\
\eventE{2}  & \lockE{\LKB}&& \LD{\thread{1}}{\LKB}{\{\LKA\}}\\
\eventE{3}  & \unlockE{\LKB}&&\\
\eventE{4}  & \unlockE{\LKA}&&\\
\eventE{5}  & \writeE{\VA}&&\\
\eventE{6}  & &\readE{\VA} &\\
\eventE{7}  & &\lockE{\LKB} & \\
\eventE{8}  & &\lockE{\LKA} & \LD{\thread{2}}{\LKA}{\{\LKB\}}\\
\eventE{9}  & &\unlockE{\LKA} &\\
\eventE{10}  & &\unlockE{\LKB} &\\

 \hline \ea{}

    \eda
    \subcaption{False positive due to last-write dependency.
    }\label{fig:ex_3}
  \end{minipage}
  \caption{Traces with potential deadlocks.}
  \label{f:firstExamples}
\end{figure}

In trace $T_1$ in \cref{fig:ex0}, all events in thread $\thread1$ take place before the
events in thread $\thread2$. A different schedule for the same program
might give rise to a \emph{reordered trace prefix} $[e_1,e_5]$, which
indicates that $T_1$ has the potential to deadlock: the highlighted acquire events $e_2$ and $e_6$
are \emph{enabled} to be scheduled next, but either
would break the lock semantics by acquiring a lock already acquired but not yet released by the other tread.

The  standard approach
predicts such situations by constructing a \emph{lock dependency} of
the form $\LD{t}{\LK{}}{L_t}$ for every acquire event~\cite{10.1007/11678779_15,10.1145/1542476.1542489}.
Here, $t$ is the thread that acquires lock $\LK{}$ and $L_t$ is the
set of locks (aka \emph{lockset}) held (acquired but not yet released) by this thread at the point of acquiring lock~$\LK{}$.
The right part of \cref{fig:ex0} shows the lock dependencies for $T_1$.

The two lock dependencies
$
\LD{\thread{1}}{\LKB}{\{\LKA\}}$ and $ \LD{\thread{2}}{\LKA}{\{\LKB\}}
$ obtained from $T_1$ indicate a potential deadlock, because they
exhibit a \emph{cyclic  chain} of lock dependencies according to the following two conditions:
\begin{description}
\item[\cond{DP-Cycle}] The acquired lock $\LKB$ of the first dependency is in the lockset $\{ \LKB \}$
of the second dependency, and the acquired lock $\LKA$ of the second dependency
is in the lockset $\{ \LKA \}$ of the first dependency.
\item[\cond{DP-Guard}] The underlying locksets $\{\LKA \}$ and $\{ \LKB \}$ are disjoint.
\end{description}
Condition~\cond{DP-Guard} ensures that the deadlocked situation shown by
the reordered prefix~$[e_1,e_5]$ can be reached without violating lock semantics by acquiring a lock already held by another tread.
% Any thread-local extension of this trace must either comprise of $e_3$ or $e_7$.
% We say that $e_3$ and $e_7$ are \emph{enabled} in $[e_1,e_2,e_6]$.
Condition~\cond{DP-Cycle} characterizes the deadlocked situation, caused by a cycle of lock dependencies.
% Extention of race $[e_1,e_2,e_6]$ with $e_3$ violates the lock semantics.
% The same observation applies to $e_7$.

Instead of writing out the entire cyclic lock-dependency chain,
we use \emph{deadlock patterns}, i.e.,
the sequence of acquire events that constitute the cycle.
For example, the cyclic lock-dependency chain
$\LD{\thread{1}}{\LKB}{\{\LKA\}}$ and $ \LD{\thread{2}}{\LKA}{\{\LKB\}}$
corresponds to the deadlock pattern~$\DP{e_2,e_6}$.

\subsection{False Positives}

The deadlock-pattern approach suffers from false positives.
% Consider trace $T_2$ in \cref{fig:ex1_b} (on the left).
% $T_2$ is a rearrangement of trace $T_1$ where the fork of thread $t_2$
% takes place as the last event in thread $t_1$.
% All other events remain in the same relative position.
% For $T_2$ we obtain the same lock dependencies as for $T_1$, which
% indicates a potential deadlock.
% However, there is no deadlock in $T_2$  because thread $t_2$ can only start
% once thread $t_1$ is finished.
% We call this indication a \emph{false positive}
%
% Besides forking a thread, there can be other inter-thread dependencies
% that lead to false positives:
Consider trace $T_2$ in \cref{fig:ex_3}.
% (on the right)
This trace gives rise
to the same lock dependencies as $T_1$ in \cref{fig:ex0}. % and $T_2$.
Alas, the resulting deadlock pattern is %again
a \emph{false positive},
because it does not correspond to a true deadlock: there is no (correct) reordering of $T_2$ that exhibits the deadlock.
For example, the reordered prefix $[e_1,e_6,e_7]$ is deadlocked on the enabled events $e_2$ and $e_8$ enabled, but it is incorrect:
the read event~$e_6$ no longer observes the same write event~$e_5$ it does in $T_2$ (i.e., the \emph{last write} of $e_6$ is different).
Hence, $e_6$ may read a different value than before, possibly affecting the program flow such that the events $e_7$--$e_{10}$ may never happen.

A similar observation applies to lockset-based data-race prediction methods
where Condition~\cond{DP-Guard} is used to check if two conflicting
memory operations are in a data race.
Two operations are \emph{conflicting} if they refer to the same memory address and at least one of them
is a write.

We conclude that Conditions~\cond{DP-Cycle} and~\cond{DP-Guard} are not sufficient to guarantee a deadlock: additional conditions are necessary to rule out false positives.
A popular method in the area of data-race prediction is to derive a partial order on
events from the program trace.
Ideally, this partial order captures the inter-event dependencies imposed be the trace and its semantics, making it possible to reason about correctly reordered traces without exploring all possible interleavings.
This way, e.g., a partial order may order a read after its last write, or order the events within the same thread.
Unordered events are then considered concurrent: they may appear in different orders among correct reorderings.
Hence, a data race is signaled if two conflicting memory operations are concurrent.
A sufficiently strong partial order then eliminates falsely signaled data races.

\begin{figure}[t]

  \begin{minipage}[b]{.3\textwidth}
    \bda{@{}lcl}
%%latexTrace $ addLoc ex_3b
\ba{|l|l|l|l|}
\hline T_3  & \thread{1} & \thread{2} & \thread{3}\\ \hline
\eventE{1}  & \lockE{\LKA}&&\\
\eventE{2}  & \lockE{\LKB}&&\\
\eventE{3}  & \unlockE{\LKB}&&\\
\eventE{4}  & \unlockE{\LKA}&&\\
\eventE{5}  & &\lockE{\LKB}&\\
\eventE{6}  & &\writeE{\VA}&\\
\eventE{7}  & &\unlockE{\LKB}&\\
\eventE{8}  & &&\lockE{\LKB}\\
\eventE{9}  & &&\writeE{\VA}\\
\eventE{10}  & &&\lockE{\LKA}\\
\eventE{11}  & &&\unlockE{\LKA}\\
\eventE{12}  & &&\unlockE{\LKB}\\

 \hline \ea{}

 \eda
     \subcaption{WCP false negative.} %% SDP
    \label{fig:ex3b}
  \end{minipage}
  \hfill%
  \begin{minipage}[b]{.55\textwidth}
    \bda{@{}lcl}

    % dl_11
    \begin{array}{|l|l|l||l|l|}
      \hline
      T_4 & \thread{1} & \thread{2} & \multicolumn{2}{|l|}{\mbox{Lock deps}} \\
      \hline
      \eventE{1} & \HIGHLIGHT{\lockE{\LKA}} &&& \\
      \eventE{2} & \lockE{\LKB} && D_1 & \LD{\thread{1}}{\LKB}{\{\HIGHLIGHT{\LKA}\}} \\
      \eventE{3} & \unlockE{\LKB} &&& \\
      \eventE{4} & \lockE{\LKC} &&& \LD{\thread{1}}{\LKC}{\{\LKA\}} \\
      \eventE{5} & \lockE{\LKD} && D_2 & \LD{\thread{1}}{\LKD}{\{\HIGHLIGHT{\LKA},l_3\}} \\
      \eventE{6} & \unlockE{\LKD} &&& \\
      \eventE{7} & \unlockE{\LKC} &&& \\
      \eventE{8} & \unlockE{\LKA} &&& \\
      \eventE{9} && \HIGHLIGHTB{\lockE{\LKB}} && \\
      \eventE{10} && \lockE{\LKA} & D_3 & \LD{\thread{2}}{\LKA}{\{\HIGHLIGHTB{\LKB}\}} \\
      \eventE{11} && \unlockE{\LKA} && \\
      \eventE{12} && \lockE{\LKD} && \LD{\thread{2}}{\LKD}{\{\LKB\}} \\
      \eventE{13} && \lockE{\LKC} & D_4 & \LD{\thread{2}}{\LKC}{\{\HIGHLIGHTB{\LKB},l_4\}} \\
      \eventE{14} && \unlockE{\LKC} && \\
      \eventE{15} && \unlockE{\LKD} && \\
      \eventE{16} && \unlockE{\LKB} && \\
      \hline
    \end{array}

    \eda
    \subcaption{Trace where only the first deadlock pattern is feasible.}
    \label{fig:dp-order}
  \end{minipage}%
  \caption{Partial orders for deadlock prediction.}
  \label{f:traceBlock}
\end{figure}

\subsection{Partial-order Methods for Deadlock Prediction
            %% to Eliminate False Positives
            }

We apply the partial-order idea to the deadlock-prediction setting and refine the definition of a deadlock pattern
$\DP{e_1,\ldots,e_n}$ with the following additional condition:
\begin{description}
  \item[\cond{DP-P}] Events $e_i$ are pairwise concurrent under partial order~P.
\end{description}
Which partial order P to use?
Answering this question is a non-trivial task, shown by
review of a number of existing partial orders that have been applied
in the data-race setting.

\paragraph{Happens-Before is too strict.}

We first consider the Happens-Before relation (HB)~\cite{lamport1978time}.
Under HB, lock releases and acquisitions on the same lock
are ordered as they appear in the trace.
For trace $T_2$ in \cref{fig:ex0}, we
then find that $e_4 \HBLt e_5$.
Since HB is closed under trace order within the same thread (program order) and transitivity,
we find that $e_2 \HBLt e_{6}$: these lock acquires are not concurrent.
Thus, Condition~\cond{DP-HB} eliminates the (false-positive) deadlock pattern $\DP{e_2,e_{6}}$.
However, Condition~\cond{DP-HB} does not hold \emph{for any} deadlock pattern,
because
of the strict textual order among lock acquisitions and releases (critical sections).
We conclude that Condition~\cond{DP-HB} reports no false positives but too many false negatives.
Clearly, this means that DP is not useful in practice: we need a weaker partial order
that does not strictly order critical sections.

\paragraph{Weak-Causally-Precedes is not strict enough.}

The Weak-Causally-Precedes relation (WCP)~\cite{conf/pldi/KiniMV17}
considers two critical sections as unordered, unless they contain conflicting memory operations.
In case of trace $T_1$ in \cref{fig:ex0}, we find that $e_2$ and $e_{6}$ are not ordered.
Thus, Condition~\cond{DP-WCP} (correctly) predicts that $\DP{e_2,e_{6}}$ codifies a deadlock.
Unfortunately, we now encounter false negatives
and still some false positives, as shown next.

Trace $T_2$ in \cref{fig:ex_3} shows that we encounter false positives.
Events $e_5$ and $e_6$ refer to conflicting memory operations, but they are not part of a critical section, so there is no order among the events in threads $\thread{1}$ and $\thread{2}$.
Hence, Condition~\cond{DP-WCP} applies to deadlock pattern $\DP{e_2,e_8}$.
This is a false positive, because any reordering that exhibits the deadlock
will violate the last-write dependency among $e_5$ and $e_6$.

The case of false negative is explained by trace $T_3$ in \cref{fig:ex3b}.
The critical sections for lock $\LKB$ in threads $\thread{2}$ and $\thread{3}$
contain some conflicting write operations on $\VA$, and therefore we find that $e_7 \WCPLt e_9$.
WCP composes (to the left and right) with HB.
Hence, $e_2 \WCPLt e_{10}$ by composition of $e_7 \WCPLt e_9$
with $e_2 \HBLt e_7$ and $e_9 \HBLt e_{10}$.
But then Condition~\cond{DP-WCP} (wrongly) rules out $\DP{e_2,e_{10}}$: a false negative.

We conclude that neither HB nor WCP are suitable for the purpose of deadlock prediction.
Similar observations apply to other partial orders such as SDP~\cite{10.1145/3360605}
and DC~\cite{Roemer:2018:HUS:3296979.3192385-obsolete}.

\paragraph{Total-Read-Write order and deadlock-pattern order for soundness.}

Our idea is to adapt WCP to a new Total-Read-Write partial order (TRW), in the following non-trivial way.
Under TRW, if two critical sections contain TRW-ordered events,
the release of the critical section that appears earlier in the trace is ordered before
the TRW-ordered event in the later critical section.
Though similar, there are two main differences between TRW and WCP.
First, TRW does not compose with HB.
This eliminates false negatives, as seen in trace $T_3$ in \cref{fig:ex3b}.
Second, under TRW, any pair of conflicting memory operations is ordered based on their order in the trace.
This eliminates false positives, as seen in trace $T_4$ in \cref{fig:ex3}.

We can show that the additional Condition~\cond{DP-TRW} guarantees
%% MS, see intro, guess it's fine to have some repitition
soundness (no false positives)
if there is no \emph{earlier} deadlock reported.
To explain this issue, consider trace~$T_4$ in \cref{fig:dp-order}.
The deadlock pattern $\DP{e_5,e_{13}}$ does not codify a true deadlock, because there is
an earlier deadlock pattern $\DP{e_2, e_{10}}$.
Hence, there is no correct reordering of trace $T_3$
where events $e_5$ and $e_{13}$ are enabled.

To understand the issue in more detail,
we examine the corresponding cyclic lock-dependency chains in \cref{fig:dp-order}.
The deadlock pattern $\DP{e_2, e_{10}}$ is supported by dependencies~$D_1$ and $D_3$.
The deadlock pattern~$\DP{e_5,e_{13}}$ is supported by $D_2$ and $D_4$.
The lockset of dependency $D_1$ is a subset of the lockset of $D_3$
and the lockset of dependency~$D_2$ is a subset of the lockset of $D_4$.
In both cases, the locks held are acquired by the \emph{same event} as indicated
by the (dark and light gray) highlighting.

Hence, the cyclic lock-dependency chain between $D_2$ and $D_4$ is not reachable,
as it is blocked by the earlier cyclic lock-dependency chain between
$D_1$ and $D_3$.
To rule out such cases, we impose the following condition
that imposes an order among deadlock patterns:
%% In \cref{sec:deadlock}, we condense the intuition gained from this
%% example into the formal definition of a partial order on deadlock
%% patterns, such that $\DP{e_2,e_{10}} < \DP{e_5,e_{13}}$.
%% This ordering can be computed efficiently and gives rise to the fourth condition.
%
\begin{description}
  \item[\cond{DP-Block}] A deadlock pattern is not ordered after any other deadlock patterns.
\end{description}

In \cref{s:sound}, we show that conditions \cond{DP-Guard/-Cycle/-TRW/-Block}
are sufficient to establish soundness (no false positives).

\begin{figure}[t]

  \begin{minipage}[b]{.3\textwidth}
    \bda{@{}lcl}
%%latexTrace $ addLoc ex_3c
\ba{|l|l|l|}
\hline  T_5 & \thread{1} & \thread{2}\\ \hline
\eventE{1}  & \lockE{\LKA}&\\
\eventE{2}  & \lockE{\LKB}&\\
\eventE{3}  & \writeE{\VA}&\\
\eventE{4}  & \unlockE{\LKB}&\\
\eventE{5}  & \unlockE{\LKA}&\\
\eventE{6}  & &\lockE{\LKB}\\
\eventE{7}  & &\writeE{\VA}\\
\eventE{8}  & &\lockE{\LKA}\\
\eventE{9}  & &\unlockE{\LKA}\\
\eventE{10}  & &\unlockE{\LKB}\\

 \hline \ea{}

 \eda
 \subcaption{%%WCP and
             TRW false negative.}
    \label{fig:ex3}
  \end{minipage}%
  \hfill%
  \begin{minipage}[b]{.3\textwidth}
    \bda{@{}lcl}
%%latexTrace $ addLoc ex_4
\ba{|l|l|l|}
\hline  T_6 & \thread{1} & \thread{2}\\ \hline
\eventE{1}  & \lockE{\LKA}&\\
\eventE{2}  & \lockE{\LKB}&\\
\eventE{3}  & \unlockE{\LKB}&\\
\eventE{4}  & \unlockE{\LKA}&\\
\eventE{5}  & &\lockE{\LKA}\\
\eventE{6}  & &\unlockE{\LKA}\\
\eventE{7}  & &\lockE{\LKB}\\
\eventE{8}  & &\lockE{\LKA}\\
\eventE{9}  & &\unlockE{\LKA}\\
\eventE{10}  & &\unlockE{\LKB}\\

 \hline \ea{}
 \eda
 \subcaption{\SPDOffline false negative.}
    \label{fig:ex_4}
  \end{minipage}%
  \hfill%
  \begin{minipage}[b]{.3\textwidth}
    \bda{@{}lcl}
%%latexTrace $ addLoc ex_5
\ba{|l|l|l|}
\hline  T_7 & \thread{1} & \thread{2}\\ \hline
\eventE{1}  & &\writeE{\VA}\\
\eventE{2}  & \lockE{\LKB}&\\
\eventE{3}  & \lockE{\LKC}&\\
\eventE{4}  & \readE{\VA}&\\
\eventE{5}  & \lockE{\LKA}&\\
\eventE{6}  & \unlockE{\LKA}&\\
\eventE{7}  & \unlockE{\LKC}&\\
\eventE{8}  & \unlockE{\LKB}&\\
\eventE{9}  & &\lockE{\LKC}\\
\eventE{10}  & &\writeE{\VA}\\
\eventE{11}  & &\unlockE{\LKC}\\
\eventE{12}  & &\lockE{\LKA}\\
\eventE{13}  & &\lockE{\LKB}\\
\eventE{14}  & &\unlockE{\LKB}\\
\eventE{15}  & &\unlockE{\LKA}\\

 \hline \ea{}
 \eda
 \subcaption{PWR false positive.}
    \label{fig:ex_5}
  \end{minipage}%
  \caption{TRW versus \SPDOffline versus PWR.}
  \label{f:TRWvsSPDvsPWR}
\end{figure}

\paragraph{Comparison against the state of the art.}

TRW and the ordering among deadlock patterns can be computed efficiently, as our experiments confirm.

We also compare experimentally against
\SPDOffline~\cite{conf/pldi/TuncMPV23}, which represents the state of the art when it comes to efficient and sound
deadlock prediction methods.
For an extensive benchmark suite, we show that
our method covers the same deadlocks as \SPDOffline.

The similar performance and precision (absence of false positives) between \SPDOffline and our method does \emph{not} imply that they are equivalent.
We propose an entirely new direction
to achieving efficient and sound deadlock prediction.
Generally speaking, \SPDOffline and our approach are incomparable when it comes
to precision and underlying methods, as we explain
via the following examples.

Consider $T_5$ in \cref{fig:ex3}.
Deadlock pattern $\DP{e_2,e_8}$ is a true positive and reported by \SPDOffline.
However, under \cond{DP-TRW} deadlock pattern $\DP{e_2,e_8}$ is ruled out
for the following reason.
We find that $e_3 \TRWLt e_7$, because these events refer to conflicting memory operations.
Both events are protected by a common lock and therefore ${e_5 \TRWLt e_7}$.
By program order and transitivity, we conclude that $e_2 \TRWLt e_8$.

On the other hand,
\SPDOffline fails to report the deadlock pattern $\DP{e_2,e_7}$ in \cref{fig:ex_4},
whereas our method reports this true positive:
$e_2$ and $e_7$ are not ordered under TRW.
To understand these differences, we take a closer look at \SPDOffline.

\SPDOffline only considers \emph{sync-preserving} deadlocks.
A deadlock is sync preserving, if the reordered trace
that exhibits the deadlock does not reorder acquires on the same lock.
In case of \cref{fig:ex_4},
any reordering of $T_6$ that exhibits the deadlock represented by~$\DP{e_2,e_7}$ will have to reorder
the two acquires on $\LKB$ (events $e_1$ and $e_5$).
Hence, \SPDOffline rejects the deadlock pattern.

\SPDOffline discovers this non-sync-preserving reordering
via a closure construction that starts with
a deadlock pattern satisfying Conditions~\cond{DP-Cycle/-Guard}.
Initially, the closure starts with~$\DP{e_2,e_7}$.
\SPDOffline adds event $e_1$, because $e_1$ \emph{must happen before}~$e_2$ codified in a Must-Happen-Before partial order (MHB).
Based on MHB,
we also add events $e_5$ and $e_6$ to the closure.
Once \SPDOffline encounters $e_1$ and $e_5$ (two acquires on the same lock),
event $e_4$ (the matching release of~$e_1$) is added, ensuring
that the closure is sync preserving.
Via $e_4$, we reach our starting point~$e_2$.
\SPDOffline reports a cycle in the closure construction
and therefore rejects deadlock pattern $\DP{e_2,e_7}$.

Hence, there are connections between \SPDOffline and our method:
both make use of information from partial orders.
The major difference is that we
use TRW as part of
Condition~\cond{DP-TRW} to instantly reject deadlock
patterns.
In contrast, \SPDOffline
employs MHB as part of the closure construction.
This then leads to incomparable precision results,
as shown by the traces in \cref{fig:ex3,fig:ex_4}.
A similar observation applies in
comparison against SeqCheck~\cite{conf/fse/CaiYWQP21}.

\paragraph{Experimenting with other partial orders.}

TRW is carefully crafted to ensure that all false positives
are eliminated but some false negatives remain.
What if we weaken TRW, by only
ordering conflicting memory operations that are part
of a last-write dependency?
The resulting partial order is known as PWR~\cite{conf/mplr/SulzmannS20}.
PWR catches the false positive in \cref{fig:ex_3},
and correctly identifies the deadlocks
in \cref{fig:ex3,fig:ex_4}.
In \cref{s:complete}, we show that
Conditions~\cond{DP-Guard/-Cycle/-PWR/-Block}
are sufficient to establish completeness (no false negatives).

In general, PWR has false positives,
as shown by trace $T_7$ in \cref{fig:ex_5}.
Deadlock pattern $\DP{e_5,e_{13}}$ is valid under Condition~\cond{DP-PWR}.
But any reordering that exhibits the deadlock
would need to put events $[e_9,e_{10},e_{11}]$ before
event $e_3$.
This will violate the last-write dependency
between $e_1$ and~$e_4$.
For comparison, under TRW we find that $e_5 \TRWLt e_{13}$,
because all conflicting events are ordered as they appear
in the trace.

We do not encounter any such false positives in our extensive benchmark suite.
In fact, TRW and PWR report the exact same deadlocks.
This shows that TWR and PWR are excellent candidates
when it comes to sound and complete deadlock prediction.

%% \begin{verbatim}
%%
%%  MHB = LW
%%
%%  MHB subsetof TRW
%%
%% What if we use TRW for closure construction?
%% Too strict.
%% Encounter more false negatives.
%% Why would we want this anyway?
%% MHB is a fairly "weak" must happens-before order.
%%
%% It would be safe to use PWR instead!
%% Via PWR we could be the closure "faster".
%% There are some contrived example where this is benificiary.
%% Of course, there's also the trade-off that computing PWR
%% takes more time compared to MHB.
%%
%% What if we use MHB in DP-P?
%% Not strict enough.
%% Encounter more false positives.
%% However, seems to be rarely the case in practice.
%% PWR performans in particular very well.
%%
%%
%% \end{verbatim}

%%% Local Variables:
%%% mode: latex
%%% TeX-master: "main"
%%% End:

%--------------------------------------------------------
%--------------------------------------------------------
\section{Preliminaries}
\label{sec:prelim}

We introduce executions of concurrent programs with shared variables and locks, where operations are recorded as \emph{traces} of concurrency \emph{events} (\cref{s:prelim:trace}).
Then, we define the semantics of traces in terms of \emph{well formedness}, and how traces may be \emph{reordered} without affecting the semantics of the source trace (\cref{s:prelim:wfCrp}).
Finally, we define \emph{deadlock patterns} and what it means for a trace to contain \emph{resource deadlocks} (\cref{s:prelim:deadlock})

\subsection{Events and traces}
\label{s:prelim:trace}

We use $l_1,l_2,\ldots$ and $x,y,z$ for lock and shared variables, respectively.

\begin{definition}[Events and Traces]
  \label{def:run-time-traces-events}
  \bda{@{}l@{}c@{}ll@{\hskip3em}l@{}c@{}llr@{}}
    \alpha, \beta,\delta & {} ::= {} & 1 \mid 2 \mid \ldots & \mbox{(unique event ids)}
    & e & {} ::= {} & (\alpha, t,op) & \mbox{(events)}
    \\ t,s,u & {} ::= {} & \THD1 \mid \THD2 \mid \ldots & \mbox{(thread ids)}
    & T & {} ::= {} & [] \mid e : T   & \mbox{(traces)}
    \\ op & {} ::= {} & \readE{x} \mid \writeE{x}
    \mid \reqE{l} \mid \acqE{l} \mid \relE{l}
    \span\span\span\span
    %  \mid \forkE{t}
    %  \mid \joinE{t}
    & \mbox{(operations)}
  \eda
\end{definition}

\noindent
A trace $T$ is a list of events reflecting a single execution of a
concurrent program under the sequential consistency memory
model~\cite{Adve:1996:SMC:619013.620590}.
We write $[o_1,\dots,o_n]$
for a list of objects  % as a shorthand of $o_1:\dots:o_n:[]$
and use the operator ``$\conc$'' for list concatenation.

An event $e$ is represented by a triple $(\alpha, t,op)$
where
$\alpha$ is a unique event identifier,
$op$ is an operation, and
$t$ is the thread id in which the operation took place.
The main thread has thread id~$\THD1$.
The unique event identifier unambiguously identifies events under trace reordering (cf.\ \cref{s:prelim:wfCrp}).

The operations $\readE{x}$ and $\writeE{x}$ denote reading of and writing to a shared variable~$x$, respectively.
Operations $\acqE{l}$ and $\relE{l}$ denote acquiring and releasing a lock $l$, respectively.
We also include lock requests $\reqE{l}$ to denote the (possibly unfulfillable) attempt at acquiring a lock $l$; this allows for smoother definitions and is crucial for completeness (cf.\ \cref{s:complete}).
% We write $\forkE{t}$ for the creation of a new thread with thread id~$t$.
% We write $\joinE{t}$ for a join with a thread with thread id~$t$.
Operations to create threads (fork) and synchronize with termination of threads (join) can be modeled using shared variables and locks.

We often omit identifier and/or thread when denoting events, and omit parentheses if only the operation remains, writing $e = (t,op)$ or $e = op$ instead of $e = (\alpha,t,op)$.
Notation $\thd(e)$ extracts the thread id from an event.

We write $e \in T$ to indicate that $T = [e_1, \dots, e_n]$ and $e = e_k$ for $1 \le k \le n$, defining also $\posP{T}{e} = k$.
The set of events in a trace is then $\evts(T) = \{ e \mid e \in T \}$, and the set of thread ids in a trace is $\thds(T) = \{ \thd(e) \mid e \in T \}$.
For trace $T$ and events $e, f \in \evts(T)$, we define trace order: $e \TrLt[T] f$ if $\posP{T}{e} < \posP{T}{f}$.
We then define program order $\POLt[T]$ as the restriction of $\TrLt[T]$ to events with the same thread id.

We often express traces in tabular notation.
Here, traces have one column per thread.
Each event is placed in row of its trace position, and in the column of its thread.
\Cref{f:firstExamples,f:traceBlock,f:TRWvsSPDvsPWR} contain numerous examples.

\subsection{Well Formedness and Correct Reorderings}
\label{s:prelim:wfCrp}

Traces must be well formed, following the standard sequential consistency conditions for concurrent objects \cite{Adve:1996:SMC:619013.620590,Said:2011:GDR:1986308.1986334,Huang:2014:MSP:2666356.2594315}.

\begin{definition}[Well Formedness]
\label{d:WF}
  A trace $T$ is \emph{well formed} if all the following conditions are satisfied:
  \begin{description}
    \item[\cond{WF-Acq}]
      For every $\evtAcc = (t,\acqE{l}), \evtAccA = (s,\acqE{l}) \in T$ where $\evtAcc \TrLt[T] \evtAccA$, there exists $\evtRel = (t,\relE{l}) \in T$ such that $\evtAcc \TrLt[T] \evtAccA$.
      We say that $a$ and $r$ \emph{match}.

   \item[\cond{WF-Rel}]
      For every $\evtRel = (t,\relE{l}) \in T$, there exists $\evtAcc = (t,\acqE{l}) \in T$ such that $a \TrLt[T] r$ and there is no $\evtRelA = (s,\relE{l}) \in T$ with $\evtAcc \TrLt[T] \evtRelA \TrLt[T] \evtRel$.
      We say that $a$ and $r$ \emph{match}.

    \item[\cond{WF-Req}]\label{cond:lock-3}
      For every $\ACQ{} = (t,\acqE{l}) \in T$, there exists $\REQ{} = (t,\reqE{l}) \in T$ such that $\REQ{} \TrLt[T] \ACQ{}$, and for every $\REQ{} = (t,\reqE{l}) \in T$ and $e = (t,op) \in T$ such that $\REQ{} \TrLt[T] e$, if there is no $f = (t,op')$ such that $\REQ{} \TrLt[T] f \TrLt[T] e$ then $e = (t,\acqE{l})$.
      We say that $\REQ{}$ \emph{requests} $e$.

  %  \item[Fork-1:]
  %    For each $t \ne t_1$ there exists at most one $(\dontCare,\forkE{t}) \in T$ and $(\dontCare,\forkE{t_1}) \notin T$.

  %  \item[Fork-2:]
  %    For each $\evtAA = (t,op) \in T$ where $t \ne t_1$ there exists $\evtBB = (s,\forkE{t}) \in T$ where $\evtBB \TrLt \evtAA$.

  %  \item[Join:]
  %    For each $\evtAA = (s,\joinE{t}) \in T$ we have $s \ne t$ and for all $\evtBB = (t,op) \in T$ we have $\evtBB \TrLt \evtAA$.

    % \item[WN]
    %   For every $a = (t,acq(l)),r = (t,rel(l)),a' = (t,acq(l')),r' = (t,rel(l')) \in T$ such that $a$ matches $r$ and $a'$ matches $r'$, $a \traceLt{T} a' \traceLt{T} r$ if and only if $a \traceLt{T} r' \traceLt{T} r$.
  \end{description}
\end{definition}

\noindent
Condition~\cond{WF-Acq} states that a previously acquired lock can only be acquired after it has been released.
Similarly, Condition~\cond{WF-Rel} states that a lock can only be released after is has been acquired but not yet released.
Note that these conditions require matching acquires and releases to occur in the same thread.
Condition~\cond{WF-Req} states that all lock acquires must have been requested in the same thread immediately before.
We often omit requests from example traces, but we assume that they precede each acquire implicitly.
At the end of the trace, acquired locks do not have to be released and requests do not have to be fulfilled.
% Condition \textbf{WN} asserts well-nestedness: a critical section can only be released once all contained critical sections have been released.
All traces in \cref{f:firstExamples,f:traceBlock,f:TRWvsSPDvsPWR} are well formed, with implicit requests; we discuss an example with explicit requests in the next subsection.

% Condition \textbf{Fork-1} states that a thread can be created at most once.
% Condition \textbf{Fork-2} states that each thread except the main thread is preceded by a fork event.
% Both conditions imply that for each event ${(\beta, t,\forkE{s})}$ we have $t \ne s$.

% Condition \textbf{Join} states that all events from a joined thread appear before the join event.
% There can be several join events $\joinE{t}$ for the same thread $t$.
% A join operation $\joinE{t}$ does not necessarily have to appear in the thread that forked thread $t$.

A trace represents only one possible interleaving of events: there can be as many interleavings as there are permutations of the original trace.
However, not all such reorderings are feasible in the sense of being reproducible by executing the program with a different schedule.
In addition to well formedness, a reordering must guarantee that (a)~program order and (b)~last writes are maintained.
A reordering maintains program order if, in any thread, the order of events is unchanged.
For guarantee~(b), note that every read observes some write on the same variable: the last preceding such write.
Last writes are then maintained if the write observed by any read is unchanged.
Guarantee~(b) is particularly important to ensure that the control flow of the traced program is unaffected by the reordering, e.g., when the read is used in a conditional statement.

Reorderings do not have to run to completion, so we consider reordered \emph{prefixes} of traces.

\begin{definition}[Correctly Reordered Prefix]
  \label{def:crp}
  The \emph{projection} of $T$ onto thread $t$, denoted $\proj(t,T)$, restricts $T$ to events $e$ with $\thd(e) = t$.
  That is,
  $\evtAA \in proj(t,T)$ if and only if $e \in T$ and $\thd(e) = t$, and
  $\evtAA \TrLt[proj(t,T)] \evtBB$ implies $\evtAA \TrLt[T] \evtBB$.

  Take $\evtAA = \readE{x},\evtBB = \writeE{x} \in T$.
  We say that $\evtBB$ is the \emph{last write} of $\evtAA$ w.r.t.\ $T$ if $\evtBB$ appears before $\evtAA$ with no other write on $x$ in between.
  That is,
  $\evtBB \TrLt[T] \evtAA$, and
  there is no $\evtCC = \writeE{x} \in T$ such that $\evtBB \TrLt[T] \evtCC \TrLt[T] \evtAA$.

  Trace $T'$ is a \emph{correctly reordered prefix} of $T$ if the following
  conditions are satisfied:
  \begin{description}
    \item[\cond{CRP-WF}] $T'$ is well formed and $\evts(T') \subseteq \evts(T)$.
    \item[\cond{CRP-PO}] For every $t \in \thds(T')$, $\proj(t,T')$ prefixes $\proj(t,T)$.
    \item[\cond{CRP-LW}] For every $\readE{x} \in T'$ with last write $f$ w.r.t.\ $T$, it must be $f$ is also the last write for $e$ w.r.t.\ $T'$.\footnote{Unique event identifiers are crucial for this condition.}
  \end{description}
  We write $\crp(T)$ to denote the set of correctly reordered prefixes of $T$.
\end{definition}

\subsection{Resource Deadlocks}
\label{s:prelim:deadlock}

We consider a standard definition of (resource) deadlock, where multiple threads are simultaneously requesting to acquire a lock, but every such lock is held by another thread: none of the requests can be fulfilled, and so the trace is stuck.
Our goal is to predict that a trace can be rescheduled to deadlock, and so we identify traces with \emph{predictable} deadlocks in terms of correctly reordered prefixes with deadlocks.
We first identify these deadlocks using \emph{deadlock patterns}.

Deadlock patterns codify cycles of requests that cannot be fulfilled because the requested lock is held by the next request.
Hence, we first define what it means for an event to hold a lock, in terms of \emph{critical sections} of events bordered by lock acquires and releases.

\begin{definition}[Critical Sections]
  \label{d:CS}
  Given $a = (t,\acqE{l}) \in T$, we define the \emph{critical section acquired by $a$} as $\CS[T](a) \subseteq \evts(T)$, where $e \in \CS[T](a)$ if all the following conditions hold:
  \begin{description}
    \item[\cond{CS-PO}]
      $e = a$ or $a \POLt[T] e$.
    \item[\cond{CS-Held}]
      There is no $r = \relE{l} \in T$ that matches $a$ with $a \POLt[T] r \POLt[T] e$.
  \end{description}

  The set of \emph{acquires held} by $e \in T$ is defined as $\AH[T](e) = \{ a \in T \mid e \in \CS[T](a) \}$.
  Accordingly, the set of \emph{locks held} is defined as $\LH[T](e) = \{ l \mid \exists \acqE{l} \in \AH[T](e) \}$.
\end{definition}

\noindent
Thus, an event is in a critical section if it is or appears after the acquire in the same thread (Condition~\cond{CS-PO}), as long as the critical section has not yet been closed (Condition~\cond{CS-Held}).
Note that Condition~\cond{CS-Held} considers a lock held if it is never released at the end of the trace.

\begin{definition}[Deadlock Patterns (DPs)]
  \label{d:dp}
  Given trace $T$, let $A = \DP{(\ACQ1,\REQ1),\ldots,(\ACQ{n},\REQ{n})} \subseteq \evts(T) \times \evts(T)$ for $n \ge 2$, where every $(\ACQ{i},\REQ{i}) \in A$ is an acquire-request tuple, and $\thd(\REQ{i}) \ne \thd(\REQ{j})$ for every $1 \le i < j \le n$.
  We define the following properties for $A$:
  \begin{description}
    \item[\cond{DP-Cycle}]
      For every $1 \le i \le n$, $q_i = \reqE{l} \in \CS[T](a_i)$ and $a_{(i \mod n) + 1} = \acqE{l}$.
    \item[\cond{DP-Guard}]
      $\LH[T](\ACQ{i}) \cap \LH[T](\ACQ{j}) = \emptyset$ for $1 \leq i < j \leq n$.
  \end{description}
  Condition~\cond{DP-Cycle} identifies $A$ as a \emph{cycle in~$T$}.
  If $A$ is a cycle in~$T$ that satisfies Condition~\cond{DP-Guard}, then $A$ is a \emph{deadlock pattern (DP)} in~$T$.
\end{definition}

\noindent
Thus, we consider sets of acquire-request pairs, where the request holds the acquire.
Condition~\cond{DP-Cycle} ensures that the set forms a cycle of requests for locks held by other requests.
DPs satisfy an additional condition to eliminate false alarms: Condition~\cond{DP-Guard} ensures that no two requests hold a common lock, which would make it impossible to schedule the deadlock.

Finally, we define predictable deadlocks in terms of DPs.

\begin{definition}[Predictable Deadlock]
  \label{d:deadlock}
  Let $A$ be a DP in well-formed trace $T$.
  We say $A$ is a \emph{predictable deadlock} if there exists $T' \in \crp(T)$ such that, for every $(\ACQ{},\REQ{}) \in A$, $\REQ{} \in T'$ and there is no $e \in T'$ with $\REQ{} \POLt[T'] f$.
\end{definition}

\begin{figure}[t]
  \begin{minipage}[b]{.32\textwidth}
    \bda{@{}c@{}}
      % ex0
      \begin{array}[b]{|l|l|l|}
        \hline
        T'_1 & \thread{1} & \thread{2} \\
        \hline
        \eventE{1} & \lockE{\LKA} & \\
        \eventE{2} & \reqE{\LKB} & \\
        \eventE{3} & \lockE{\LKB} & \\
        \eventE{4} & \unlockE{\LKB} & \\
        \eventE{5} & \unlockE{\LKA} & \\
        \eventE{6} && \lockE{\LKB} \\
        \eventE{7} && \reqE{\LKA} \\
        \eventE{8} && \lockE{\LKA} \\
        \eventE{9} && \unlockE{\LKA} \\
        \eventE{10} && \unlockE{\LKB} \\
        \hline
      \end{array}
      \\ \\
      % ex0
      \begin{array}[b]{|l|l|l|}
        \hline
        T''_1 & \thread{1} & \thread{2} \\
        \hline
        \eventE{1} & \lockE{\LKA} & \\
        \eventE{6} && \lockE{\LKB} \\
        \eventE{2} & \reqE{\LKB} & \\
        \eventE{7} && \reqE{\LKA} \\
        \hline
      \end{array}
    \eda
  \end{minipage}%
  \hfill%
  \begin{minipage}[b]{.32\textwidth}
    \bda{@{}c@{}}
      %%latexTrace $ addLoc ex_3
      \ba{|l|l|l|}
        \hline T'_2  & \thread{1} & \thread{2} \\ \hline
        \eventE{1}  & \lockE{\LKA} & \\
        \eventE{2}  & \reqE{\LKB} & \\
        \eventE{3}  & \lockE{\LKB} & \\
        \eventE{4}  & \unlockE{\LKB} & \\
        \eventE{5}  & \unlockE{\LKA} & \\
        \eventE{6}  & \writeE{\VA} & \\
        \eventE{7}  & &\readE{\VA} \\
        \eventE{8}  & &\lockE{\LKB} \\
        \eventE{9}  & &\reqE{\LKA} \\
        \eventE{10}  & &\lockE{\LKA} \\
        \eventE{11}  & &\unlockE{\LKA} \\
        \eventE{12}  & &\unlockE{\LKB} \\
        \hline
      \ea{}
      \\ \\
      %% minor modification of the above
      \ba{|l|l|l|}
        \hline T''_2  & \thread{1} & \thread{2} \\ \hline
        \eventE{1}  & \lockE{\LKA} & \\
        \eventE{2}  & \writeE{\VA} & \\
        \eventE{3}  & \reqE{\LKB} & \\
        \eventE{4}  & \lockE{\LKB} & \\
        \eventE{5}  & \unlockE{\LKB} & \\
        \eventE{6}  & \unlockE{\LKA} & \\
        \eventE{7}  & &\lockE{\LKB} \\
        \eventE{8}  & &\reqE{\LKA} \\
        \eventE{9}  & &\lockE{\LKA} \\
        \eventE{10}  & &\readE{\VA} \\
        \eventE{11}  & &\unlockE{\LKA} \\
        \eventE{12}  & &\unlockE{\LKB} \\
        \hline
      \ea{}
    \eda
  \end{minipage}%
  \hfill%
  \begin{minipage}[b]{.32\textwidth}
    \bda{@{}c@{}}
      % dl_11
      \begin{array}{|l|l|l|}
        \hline
        T'_4 & \thread{1} & \thread{2} \\
        \hline
        \eventE{1} & \lockE{\LKA} & \\
        \eventE{2} & \reqE{\LKB} & \\
        \eventE{3} & \lockE{\LKB} & \\
        \eventE{4} & \unlockE{\LKB} & \\
        \eventE{5} & \lockE{\LKC} & \\
        \eventE{6} & \reqE{\LKD} & \\
        \eventE{7} & \lockE{\LKD} & \\
        \eventE{8} & \unlockE{\LKD} & \\
        \eventE{9} & \unlockE{\LKC} & \\
        \eventE{10} & \unlockE{\LKA} & \\
        \eventE{11} && \lockE{\LKB} \\
        \eventE{12} && \reqE{\LKA} \\
        \eventE{13} && \lockE{\LKA} \\
        \eventE{14} && \unlockE{\LKA} \\
        \eventE{15} && \lockE{\LKD} \\
        \eventE{16} && \reqE{\LKC} \\
        \eventE{17} && \lockE{\LKC} \\
        \eventE{18} && \unlockE{\LKC} \\
        \eventE{19} && \unlockE{\LKD} \\
        \eventE{20} && \unlockE{\LKB} \\
        \hline
      \end{array}
    \eda
  \end{minipage}%
  \caption{Example traces, with selected requests included (cf.\ \cref{f:firstExamples,f:traceBlock}).}
  \label{f:firstWithReqs}
\end{figure}

\noindent
Note that, for simplicity, \cref{sec:overview} denoted DPs as only the set of acquires requested by the unfulfillable requests.
Compare \cref{f:firstExamples,f:firstWithReqs}, where the latter explicitly includes selected requests.
DP $\{(e_1,e_2),(e_7,e_6)\}$ in trace $T'_1$ is a predictable deadlock, witnessed by $T''_1 \in \crp(T'_1)$.
On the other hand, DP $\{(e_2,e_1),(e_9,e_8)\}$ in trace $T'_2$ is not a predictable deadlock, because all reorderings in which $e_2$ is the last event in $\thread1$ omit $e_6$, which is the last write of $e_7$ and thus violates Condition~\cond{CRP-LW}.

%%% Local Variables:
%%% mode: latex
%%% TeX-master: "main.tex"
%%% End:

\section{Soundness: Partial Orders to Eliminate False Positives}
\label{s:sound}

As discussed in \cref{sec:overview}, DPs as defined in \cref{s:prelim:deadlock} alone are imprecise: they often signal false alarms.
To be precise, we identified two situations in which DPs falsely identify deadlocks.
To rule out such situations, we introduce two new conditions on DPs that use partial orders, eliminating all false positives and allowing us to prove soundness (\cref{thm:TRWSound}).

\subsection{Partially-ordered Requests}
\label{s:sound:trw}

A DP is a predictable deadlock in $T$ when there is a $T' \in \crp(T)$ in which its requests are the last events in their respective threads.
Because two requests never directly depend on each other, this implies that the requests can occur in $T'$ in any order, i.e., they are concurrent.
Hence, the first situation occurs when some requests are not concurrent, making it impossible to find such a $T'$.
We rule out this situation by giving a partial order that is appropriate for determining the concurrency of the requests in a DP.

It turns out that our novel Total Read-Write order (TRW), as defined below, is up to the task (cf.\ \cref{s:sound:sound}).
TRW is derived from PWR~\cite{conf/mplr/SulzmannS20} by imposing that all reads and writes on the same variable are ordered as in the source trace.
Moreover, TRW is similar to WCP~\cite{conf/pldi/KiniMV17}, which soundly predicts deadlocks in absence of data races but is not sound in general (cf.\ \cref{sec:related} for details).

\begin{definition}[Total Read-Write Order (TRW)]
  \label{d:trw}
  We say events $e$ and $f$ are \emph{conflicting}, denoted~$e \conf f$, if $e$ and $f$ are reads/writes on the same variable and at least one is a write.

  Given trace $T$, we define the \emph{Total-Read-Write order (TRW)} as a relation $\TRWLt[T]$ on $\evts(T)$, where $e \TRWLt[T] f$ if either of the following conditions holds:
  \begin{description}
    \item[\cond{TRW-PO}]
      $e \POLt[T] f$.
    \item[\cond{TRW-Conf}]
      $e \conf f$ and $e \TrLt[T] f$.
    \item[\cond{TRW-Rel}]
      There are acquires $a_1,a_2 \in T$ on the same lock such that (i)~$e$ is the release matching $a_1$, (ii)~$a_1 \TrLt[T] a_2$, (iii)~$f \in \CS[T](a_2)$, and (iv)~$a_1 \TRWLt[T] f$.
    \item[\cond{TRW-Tr}]
      There exists $g \in T$ such that $e \TRWLt[T] g \TRWLt[T] f$.
  \end{description}
  We define $\TRWcrp(T) = \{ T' \in crp(T) \mid \forall f \in T' , e \in T \colon e \TRWLt[T] f \implies e \TrLt[T'] f \}$.

  A DP $A$ in well-formed trace $T$ satisfies Condition~\cond{DP-TRW} if neither $\REQ{} \TRWLt[T] \REQ{}'$ nor $\REQ{}' \TRWLt[T] \REQ{}$ (denoted $\REQ{} \TRWConc[T] \REQ{}'$) for every distinct $(\ACQ{},\REQ{}),(\ACQ{}',\REQ{}') \in A$.
\end{definition}

\noindent
Crucially, Condition~\cond{TRW-Conf} orders all reads/writes on the same variable by source-trace order;
note that the reads/writes have to be \emph{conflicting}, so one event must be a write meaning that reads are not ordered directly.
Condition~\cond{TRW-Rel} orders (the events within) critical sections on the same lock, if they contain TRW-ordered events;
note that events that precede the later TRW-ordered event~$f$ are not affected, such that $f$ correctly does not affect ordering in correctly ordered prefixes from which $f$ is omitted (we revisit this point in context of completeness in \cref{s:complete}).
Conditions~\cond{TRW-PO/-Tr} close TRW under program order and transitivity, respectively.

To illustrate, we reconsider trace~$T_2$ in \cref{f:firstExamples}, discussed in \cref{sec:overview} to contain a false DP because its requests are ordered.
We consider variant~$T'_2$ in \cref{f:firstWithReqs}, which explicitly includes these requests.
This trace contains DP $\{(e_1,e_2),(e_8,e_9)\}$, but ${e_2 \not\TRWConc[T'_2] e_9}$: we have $e_2 \TRWLt[T'_2] e_6$ (\cond{TRW-PO/-Tr}), $e_6 \TRWLt[T'_2] e_7$ (\cond{TRW-Conf}), and $e_7 \TRWLt[T'_2] e_9$ (\cond{TRW-PO/-Tr}), so $e_2 \TRWLt[T'_2] e_9$ (\cond{TRW-Tr}).

On the other hand, in trace~$T''_2$ in \cref{f:firstWithReqs}, a slight modification of $T'_2$ by moving the write and read events, the DP $\{(e_1,e_3),(e_7,e_8)\}$ does satisfy Condition~\cond{DP-TRW}: $e_2 \TRWLt[T''_2] e_{10}$ (\cond{TRW-Conf}) and hence $e_6 \TRWLt[T''_2] e_{11}$ (\cond{TRW-Rel}), but although $e_3 \TRWLt[T''_2] e_6$ and $e_8 \TRWLt[T''_2] e_{11}$ we have $e_3 \TRWConc[T''_2] e_8$ because their order does not matter in correctly reordered prefixes from which $e_{10}$ is omitted.

Note that the experiments in \cref{sec:experiments} show that soundness under Condition~\cond{DP-TRW} is not trivial, as it does not eliminate any of the predictable deadlocks that occur in the benchmarks.

\subsection{Partially-ordered Deadlock Patterns}

\begin{comment}
\begin{figure}[t]
  \bda{@{}ll@{}}
    % dl_23c
    \begin{array}[t]{|l|l|l|}
      \hline
      T_6 & \thread{1} & \thread{2} \\
      \hline
      \eventE{1} & \acqE{l_1} & \hphantom{\acqE{l_5}} \\
      \eventE{2} & \acqE{l_5} & \\
      \eventE{3} & \reqE{l_2} & \\
      \eventE{4} & \acqE{l_2} & \\
      \eventE{5} & \relE{l_2} & \\
      \eventE{6} & \relE{l_5} & \\
      \eventE{7} & \acqE{l_3} & \\
      \eventE{8} & \reqE{l_4} & \\
      \eventE{9} & \acqE{l_4} & \\
      \eventE{10} & \relE{l_4} & \\
      \eventE{11} & \relE{l_3} & \\
      \eventE{12} & \relE{l_1} & \\
      : && \\
      \hline
    \end{array}
    &
    \begin{array}[t]{|l|l|l|}
      \hline
      & \thread{1} & \thread{2} \\
      \hline
      : && \\
      \eventE{13} & \hphantom{\acqE{l_5}} & \acqE{l_2} \\
      \eventE{14} && \reqE{l_1} \\
      \eventE{15} && \acqE{l_1} \\
      \eventE{16} && \relE{l_1} \\
      \eventE{17} && \acqE{l_4} \\
      \eventE{18} && \reqE{l_3} \\
      \eventE{19} && \acqE{l_3} \\
      \eventE{20} && \relE{l_3} \\
      \eventE{21} && \relE{l_4} \\
      \eventE{22} && \relE{l_2} \\
      \hline
    \end{array}
  \eda
  \caption{Trace with two DPs, where the earlier DP blocks the later.}
  \label{fig:dp-order-reqs}
\end{figure}
\end{comment}

In the second situation, the candidate deadlock cannot be reached because of an ``earlier'' deadlock.
We make this formal by means of a partial order on cycles (sets of acquire-request pairs that satisfy Condition~\cond{DP-Cycle}), and rule out false positives by reporting only the earliest DPs.

\begin{definition}
  \label{def:ltDP}
  Given trace $T$, we define \emph{cycle order} as a relation $\DPLt[T]$ on cycles in $T$, where $A \DPLt[T] B$ if, for every $(\ACQ{},\REQ{}) \in A$, there exists $(\ACQ{}',\REQ{}') \in B$ with $q \TrLt[T] q'$ and $a \in \AH[T](\REQ{}')$.

  A DP $A$ in well-formed trace $T$ satisfies Condition~\cond{DP-Block} if there is no cycle $B$ in $T$ such that $B \DPLt[T] A$.
\end{definition}

\noindent
The ordering crucially requires that every acquire of a request in the earlier cycle is held by a request in the later cycle.
This way, we ensure that the earlier cycle is unavoidable when trying to achieve the later.

To illustrate, we reconsider trace~$T_4$ in \cref{f:traceBlock}, discussed in \cref{sec:overview} to contain a false DP because of an earlier deadlock.
We consider variant $T'_4$ in \cref{f:firstWithReqs}, which explicitly includes the relevant requests.
Let $A = \{(e_1,e_2),(e_{11},e_{12})\}$ and $B = \{(e_5,e_6),(e_{15},e_{16})\}$.
Both $A$ and $B$ are DPs satisfying Condition~\cond{DP-TRW}, but $B$ is a false positive.
Indeed, $e_1 \in \AH[T'_4](e_6)$ and $e_2 \TrLt[T'_4] e_6$, and $e_{11} \in \AH[T'_4] e_{16}$ and $e_{12} \TrLt[T'_4] e_{16}$.
Hence, $A \DPLt[T'_4] B$, so $B$ does not satisfy Condition~\cond{DP-Block}.

Note that deadlocks involving $n$ threads can only be blocked by prior deadlocks involving $m \leq n$ threads; if the prior deadlock involves $m' > n$ threads, at least one of the later requests would have to be in two of the earlier critical sections, which is impossible wy well formedness.
This is reflected by the forall-exists structure of the above ordering on DPs.

\subsection{Soundness}
\label{s:sound:sound}

\begin{figure}[t]
  \begin{minipage}[b]{.41\textwidth}
    \bda{c}
      % guard1
      \begin{array}[t]{|l|l|l|l|}
        \hline
        T_8 & \thread{1} & \thread{2} & \thread{3} \\
        \hline
        \eventE{1} & \acqE{l_3} && \\
        \eventE{2} & \writeE{x} && \\
        \eventE{3} && \readE{x} & \\
        \eventE{4} && \acqE{l_1} & \\
        \eventE{5} && \reqE{l_2} & \\
        \eventE{6} && \acqE{l_2} & \\
        \eventE{7} && \relE{l_2} & \\
        \eventE{8} && \relE{l_1} & \\
        \eventE{9} && \writeE{x} & \\
        \eventE{10} & \readE{x} && \\
        \eventE{11} & \relE{l_3} && \\
        \eventE{12} &&& \acqE{l_3} \\
        \eventE{13} &&& \acqE{l_2} \\
        \eventE{14} &&& \reqE{l_1} \\
        \eventE{15} &&& \acqE{l_1} \\
        \eventE{16} &&& \relE{l_1} \\
        \eventE{17} &&& \relE{l_2} \\
        \eventE{18} &&& \relE{l_3} \\
        \hline
      \end{array}
    \eda
    \subcaption{Trace with unbounded critical section.}
    \label{fig:guarded}
  \end{minipage}%
  \hfill%
  \begin{minipage}[b]{.56\textwidth}
    \bda{c}
      \ba{@{}ll@{}}
        % dl_18
        \begin{array}[t]{|l|l|l|}
          \hline
          T_9 & \thread{1} & \thread{2} \\
          \hline
          \eventE{1} & \acqE{l_1} & \\
          \eventE{2} & \writeE{x} & \\
          \eventE{3} & \reqE{l_2} & \\
          \eventE{4} & \acqE{l_2} & \\
          \eventE{5} & \relE{l_2} & \\
          \eventE{6} & \relE{l_1} & \\
          \eventE{7} && \acqE{l_2} \\
          \eventE{8} && \readE{x} \\
          \eventE{9} && \reqE{l_1} \\
          \eventE{10} && \acqE{l_1} \\
          \eventE{11} && \relE{l_1} \\
          \eventE{12} && \relE{l_2} \\
          \hline
        \end{array}
        &
        % reorder dl_18
        \begin{array}[t]{|l|l|l|}
          \hline
          T'_9 & \thread{1} & \thread{2} \\
          \hline
          \eventE{1} & \acqE{l_1} & \\
          \eventE{2} & \writeE{x} & \\
          \eventE{7} && \acqE{l_2} \\
          \eventE{8} && \readE{x} \\
          \eventE{9} && \reqE{l_1} \\
          \eventE{3} & \reqE{l_2} & \\
          \hline
        \end{array}
      \ea
    \eda

    \subcaption{
      Trace with predictable deadlock.
    }\label{fig:requests}
  \end{minipage}%
  \caption{Traces highlighting technical aspects of soundness and completeness (\cref{s:sound,s:complete}).}
\end{figure}

We prove soundness of DPs that satisfy Conditions~\cond{DP-TRW/-Block}, i.e., those conditions rule out all false positives.
In doing so, we make two technical assumptions:
\begin{itemize}
  \item
    We only consider traces in which locks only protect requests in the same thread.
    That is, we assume that trace $T$ is \emph{TRW bounded}:
    for every request $q \in T$ and acquire $a \in T$ with matching release $r$, if $a \TRWLt[T] q$ and $r \in T$ implies $q \TRWLt[T] r$, then $\thd(q) = \thd(a)$.
    For example, trace~$T_8$ in \cref{fig:guarded} is not TRW bounded, reporting false positive $\{(e_4,e_5),(e_{14},e_{15})\}$.
  \item
    We only consider traces that are \emph{well nested}, meaning that any thread can only release an acquired lock once all further acquired locks have been released.
\end{itemize}
These are common assumptions (cf., e.g., \cite{conf/pldi/KiniMV17}).
It may be possible to lift these assumptions by additional conditions on DPs, but this is well outside the scope of this paper.

\begin{theorem}[Soundness]
  \label{thm:TRWSound}
  If well-formed trace $T$ is TRW~bounded and well nested, and DP $A$ in $T$ satisfies Conditions~\cond{DP-TRW/-Block}, then $A$ is a predictable deadlock.
\end{theorem}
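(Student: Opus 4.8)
The plan is to exhibit a single witnessing reordering, namely the TRW-downward closure of the requests. Writing $A = \{(a_1,q_1),\dots,(a_n,q_n)\}$, I would set $E = \{ e \in \evts(T) \mid e = q_i \text{ or } e \TRWLt[T] q_i \text{ for some } i \}$ and let $T'$ be the restriction of $T$ to $E$ in the original trace order. The first thing to establish is that trace order already linearizes TRW, i.e.\ $\TRWLt[T] \subseteq \TrLt[T]$: this is a straightforward induction over \cond{TRW-PO/-Conf/-Tr}, the only non-immediate case being \cond{TRW-Rel}, where well formedness of $T$ gives $r \TrLt[T] a_2 \TrLt[T] f$. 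Consequently $T'$ respects TRW by construction, so $T' \in \TRWcrp(T)$ once it is shown to be a correct reordering. Since \cond{TRW-PO} embeds program order, $E$ is program-order downward closed, giving \cond{CRP-PO}. Moreover \cond{DP-TRW} ($q_i \TRWConc[T] q_j$) makes each $q_i$ maximal in its thread inside $E$: a later same-thread event $e$ satisfies $q_i \TRWLt[T] e$, so $e \in E$ would force $q_i \TRWLt[T] q_j$, contradicting concurrency. Symmetrically, the release matching $a_i$ satisfies $q_i \TRWLt[T] r_i$ and is therefore excluded from $E$, so in $T'$ every thread holds its lock and ends on the unfulfilled request $q_i$, which is exactly what \cref{d:deadlock} demands.

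Next I would verify \cond{CRP-LW}. Because \cond{TRW-Conf} totally orders conflicting events by trace order, the relative positions in $T'$ of any read, its last write $w$ in $T$, and any other write on the same variable all agree with $T$; downward closure places $w \in E$ whenever the read is in $E$, and no intervening write can be inserted, so last writes are preserved.

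The real work is \cond{CRP-WF}. I would first observe that the matching acquire of every release in $E$, and the request immediately preceding every acquire in $E$, also lie in $E$ (both precede in program order, hence in TRW), so \cond{WF-Rel} and \cond{WF-Req} hold — the latter because any request whose acquire is dropped becomes last in its thread and is thus vacuously well formed. This reduces the whole condition to \cond{WF-Acq}: it suffices to show that whenever two acquires $a, a'$ on a common lock with $a \TrLt[T] a'$ both lie in $E$, the release $r$ matching $a$ also lies in $E$ (it then sits between them in $T'$). When $a$ and $a'$ are in the same thread this is immediate: $a \POLt[T] a'$ gives $a \TRWLt[T] a'$, so \cond{TRW-Rel} yields $r \TRWLt[T] a'$ and hence $r \in E$.

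The cross-thread case is the main obstacle, and is where the remaining hypotheses enter. If $r \notin E$ while $a \TRWLt[T] q_i$, I would split on whether the enclosed request is TRW-sandwiched in the critical section: if $q_i \TRWLt[T] r$, then TRW boundedness forces $\thd(a) = \thd(q_i)$, pinning the held lock to a single thread (this is precisely what rules out traces like $T_8$); otherwise well nestedness constrains how $E$ may cut the critical-section tree. Using \cond{DP-Guard}, the locks held by distinct cycle threads are disjoint, so the offending same-lock overlap cannot arise from two threads of $A$ itself. Finally, any residual cross-thread overlap on a common lock is itself a blocking configuration — thread $\thd(a)$ holds the lock that $\thd(a')$ is acquiring — and I would turn it into a cycle $B$ with $B \DPLt[T] A$, contradicting \cond{DP-Block}. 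This last reduction, converting a well-formedness violation of $T'$ into an earlier cycle, is the delicate step and the crux of the argument; everything else is essentially bookkeeping once $\TRWLt[T] \subseteq \TrLt[T]$ is in place.
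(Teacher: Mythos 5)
Your overall plan---constructing an explicit witness as a downward closure of the requests and discharging the failure modes against Conditions~\cond{DP-TRW} and \cond{DP-Block}---is a genuinely different, more constructive route than the paper's, which argues by contradiction over the \emph{entire} space of candidate reorderings $\mathbb{T} \subseteq \crp(T)$ rather than committing to one. Your observation that ${\TRWLt[T]} \subseteq {\TrLt[T]}$ is correct and does make \cond{CRP-PO}, \cond{CRP-LW}, maximality of the $q_i$, and the same-thread half of \cond{WF-Acq} essentially free. However, there is a genuine gap, and it is not only the deferred ``delicate step.''

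The set $E = \{ e \mid e = q_i \text{ or } e \TRWLt[T] q_i \}$ is the wrong witness candidate: its restriction can violate \cond{WF-Acq} even when $A$ \emph{is} a predictable deadlock satisfying \cond{DP-Block}, so your proposed reduction of that violation to an earlier blocking cycle is unsound. Concretely, let a third thread $\thread3$ (not in $A$) run $\acqE{l}; \writeE{x}; \relE{l}$; let $\thread1$ read $x$ \emph{outside} any critical section on $l$ before its request $q_1$; and let $\thread2$ run an unrelated $\acqE{l};\relE{l}$ before its request $q_2$. Then $\acqE{l}$ of $\thread3$ lands in $E$ (via \cond{TRW-Conf/-PO/-Tr} through the write--read pair), $\thread2$'s $\acqE{l}$ lands in $E$ (via \cond{TRW-PO}), but $\thread3$'s $\relE{l}$ does not: \cond{TRW-Rel} never fires because nothing inside $\thread2$'s critical section on $l$ is TRW-after $\thread3$'s acquire. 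So $T'$ has two unseparated acquires on $l$, yet the deadlock is perfectly schedulable (run $\thread3$ to completion first) and there is no cycle $B \DPLt[T] A$ to contradict \cond{DP-Block}. The fix requires closing $E$ under ``add the matching release (and its PO/TRW cone) whenever two acquires on a common lock both enter the set,'' i.e., a fixed-point construction; showing that this iteration terminates without swallowing any $q_i$---or else produces a genuine blocking cycle in the sense of \cref{def:ltDP}, with every acquire of $B$ held by a request of $A$ in the \emph{same thread} (which needs TRW boundedness via arguments like \cref{l:mustTRWAcq,l:mustTRWRel}) and the arrangement case analysis under \cond{DP-Guard} and well nestedness---is precisely the content of the paper's case~(2.b) and of the theorem itself. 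Your sketch names these ingredients but executes none of them, and as currently structured the argument would be asked to produce a blocking cycle in situations where none exists.
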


\begin{proofsketch}
  We prove the theorem by showing that there is $T' \in \TRWcrp(T)$ such that every request in $A$ is the last event in its thread.
  Towards contradiction, assume there is not such witness $T'$.
  If there are witnesses in $\crp(T) \setminus \TRWcrp(T)$, we derive that some requests are TRW ordered: Condition~\cond{DP-TRW} is contradicted.
  Otherwise, there are no witnesses at all, and we derive that there must be a cycle $B$ such that $B \DPLt[T] A$: Condition~\cond{DP-Block} is contradiction.
\end{proofsketch}

\section{Completeness: A Partial Order to Eliminate False Negatives}
\label{s:complete}

In the previous section, we were able to leverage the Total Read-Write order (TRW) to non-trivially eliminate all DPs that do not correspond to predictable deadlocks.
On the other hand, TRW is slightly too strong to catch all predictable deadlocks: some deadlocks require reordering some reads and writes, which TRW does not permit.
In other words, a DP that does not satisfy Condition~\cond{DP-TRW} can still be a predictable deadlock.
Although such false negatives rarely occur in practice (cf.\ \cref{sec:experiments}), there is no theoretical guarantee that all predictable deadlocks are caught.
The question is then whether there is a variant of Condition~\cond{DP-TRW} that rules out enough false positives to be practically relevant, while guaranteeing that no predictable deadlock is missed.

\subsection{PWR: Weakening TRW}

Recall that TRW is derived from Program-Write-Release order (PWR)~\cite{conf/mplr/SulzmannS20}.
PWR is subtly weaker than TRW, meaning that it orders events less strictly: TRW orders all conflicting reads and writes, whereas PWR only orders reads and their last writes.
Interestingly, this subtle difference is enough for PWR itself to be a suitable candidate for a complete characterization of predictable deadlocks.
We define PWR and, accordingly, Condition~\cond{DP-PWR}.

\begin{definition}[Program-Write-Release Order (PWR)]
  \label{d:pwr}
  Given trace $T$, we define the Program-Write-Release order (PWR) as a relation $\PWRLt[T]$ on $\evts(T)$, where $e \PWRLt[T] f$ if either of the following conditions hold:
  % (analogous to TRW in \cref{d:trw}, with a slightly weaker second condition):
  \begin{description}
    \item[\cond{PWR-PO}]
      $e \POLt[T] f$.
    \item[\cond{PWR-LW}]
      $e$ is the last write of read $f$ w.r.t.\ $T$.
    \item[\cond{PWR-Rel}]
      There are acquires $a_1,a_2 \in T$ on the same lock such that (i)~$e$ is the release matching $a_1$, (ii)~$a_1 \TrLt[T] a_2$, (iii)~$f \in \CS[T](a_2)$, and (iv)~$a_1 \PWRLt[T] f$.
    \item[\cond{PWR-Tr}]
      There exists $g \in T$ such that $e \PWRLt[T] g \PWRLt[T] f$.
  \end{description}
  A DP $A$ in well-formed trace $T$ satisfies Condition~\cond{DP-PWR} if neither $q \PWRLt[T] q'$ nor $q' \PWRLt[T] q$ (denoted $q \PWRConc[T] q'$) for every distinct $(a,q),(a',q') \in A$.
\end{definition}

\noindent
Thus, PWR is defined exactly as TRW (\cref{d:trw}), except for Condition~\cond{PWR-LW}.

As mentioned before in \cref{s:sound:trw}, the design of Condition~\cond{TRW-/PWR-Rel} is essential for completeness.
Recall that in trace~$T''_2$ in \cref{f:firstWithReqs}, DP $\{(e_1,e_3),(e_7,e_8)\}$ is a predictable deadlock.
Indeed, Condition~\cond{DP-PWR} is satisfied: $e_3 \PWRConc[T''_2] e_8$.
On the other hand, suppose we simplified Condition~\cond{PWR-Rel} to order $e \PWRLt a_2$ instead of only $e \PWRLt f$.
Then $e_3 \PWRLt[T''_2] e_5 \PWRLt[T''_2] e_7 \PWRLt[T''_2] e_8$, leading to an unnecessary false negative.

\subsection{Lock Requests}

As hinted at in \cref{sec:prelim}, using lock requests not only leads to an elegant description of predictable deadlock and deadlock pattern: it is essential for completeness.

To illustrate, suppose we were to define deadlock patterns in terms of lock acquires, and predictable deadlocks by correctly reordered prefixes in which the acquires of a DP are the next event in their thread.
Consider trace~$T_9$ in \cref{fig:requests}.
As witnessed by trace~${T'_9 \in \crp(T_9)}$, DP $\{(e_1,e_4),(e_7,e_{10})\}$ is a predictable deadlock.
However, we have
(i)~$e_2 \PWRLt[T_9] e_8$ (\cond{PWR-LW}),
(ii)~$e_1 \PWRLt[T_9] e_{10}$ ((i), \cond{PWR-PO/-Tr}),
(iii)~$e_6 \PWRLt[T_9] e_{10}$ ((ii), $e_{10} \in \CS[T_9](e_{10})$, \cond{PWR-Rel/-Tr}),
and (iv)~$e_4 \PWRLt[T_9] e_{10}$ ((iii), \cond{PWR-PO/-Tr}):
Condition~\cond{DP-PWR} does not hold, so the DP is falsely rejected.
On the other hand, using requests, we have $e_3 \PWRConc[T_9] e_9$, correctly satisfying Condition~\cond{DP-PWR}.

\subsection{Completeness}

Note that our completeness result does not require a trace to be well nested or bounded, as our soundness result does.

\begin{theorem}[Completeness]
  \label{t:PWRcomplete}
  If DP $A$ in well-formed trace $T$ is a predictable deadlock, then $A$ satisfies Conditions~\cond{DP-PWR/-Block}.
\end{theorem}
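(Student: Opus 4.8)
The plan is to exploit the witnessing prefix directly: since $A$ is a predictable deadlock, \cref{d:deadlock} hands us a $T' \in \crp(T)$ in which every request $q$ of $A$ is program-order maximal, so by \cond{WF-Req} the acquire that $q$ requests (its immediate program-order successor in $T$) is \emph{absent} from $T'$. I would then establish \cond{DP-PWR} and \cond{DP-Block} separately, in each case assuming the negation and contradicting the existence of $T'$. The engine of both arguments is a single lemma: \emph{every} correctly reordered prefix respects PWR, i.e.\ for $T' \in \crp(T)$, if $f \in T'$ and $e \PWRLt[T] f$ then $e \in T'$ and $e \TrLt[T'] f$. I would prove this by induction on the derivation of $e \PWRLt[T] f$ (\cref{d:pwr}): \cond{PWR-PO} and \cond{PWR-LW} follow from \cond{CRP-PO} and \cond{CRP-LW}, \cond{PWR-Tr} is routine, and the subtle case is \cond{PWR-Rel}. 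There, with same-lock acquires $a_1,a_2$, release $e$ of $a_1$, $a_1 \PWRLt[T] f$ and $f \in \CS[T](a_2)$, induction gives $a_1 \TrLt[T'] f$ while \cond{CRP-PO} keeps $a_2 \TrLt[T'] f$ with $f$ still inside $a_2$'s critical section in $T'$; well formedness of $T'$ (\cond{WF-Acq}) then forces $a_1 \TrLt[T'] a_2$ (the reverse order would have to release $a_2$ after $f$ yet before $a_1$, which is impossible), placing $e$ before $a_2$ and hence before $f$.

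For \cond{DP-PWR} I would prove the requests of $A$ are pairwise PWR-concurrent. Suppose instead $q \PWRLt[T] q'$ for distinct pairs $(a,q),(a',q') \in A$. As $q$ is a request---neither a write nor a release---the only base case of \cref{d:pwr} that can start at $q$ is \cond{PWR-PO}, and since $\thd(q) \ne \thd(q')$ by \cref{d:dp} this first step stays in $\thd(q)$ and cannot already be $q'$; it therefore lands at or beyond the acquire $\hat a$ that $q$ requests, giving $\hat a \PWRLt[T] q'$. The lemma applied to $q' \in T'$ now forces $\hat a \in T'$, contradicting that $q$ is program-order maximal in $T'$. Hence all requests are PWR-concurrent.

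For \cond{DP-Block} suppose, towards a contradiction, a cycle $B = \DP{(a_1,q_1),\dots,(a_m,q_m)}$ (indices modulo $m$) satisfies $B \DPLt[T] A$. \Cref{def:ltDP} gives, for each $k$, a pair $(a'_k,q'_k) \in A$ with $q_k \TrLt[T] q'_k$ and $a_k \in \AH[T](q'_k)$; combined with $q_k \in \CS[T](a_k)$ this puts $q_k$, $a_k$ and $q'_k$ in one thread with $q_k \POLt[T] q'_k$. In $T'$, because $q'_k$ is present and maximal: $a_k$ is present but unreleased (held), and the acquire $\hat a_k$ that $q_k$ requests---which takes the same lock as the next cycle-acquire $a_{k+1}$---precedes $q'_k$ and is thus also present. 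Since $a_{k+1}$ is held, \cond{WF-Acq} in $T'$ rules out $a_{k+1} \TrLt[T'] \hat a_k$ and places $\hat a_k$ together with its matching release before $a_{k+1}$. Chaining $a_k \POLt[T] \hat a_k$ with $\hat a_k \TrLt[T'] a_{k+1}$ yields $a_k \TrLt[T'] a_{k+1}$ for every $k$; going once around the cycle gives $a_1 \TrLt[T'] a_1$, the desired contradiction.

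The main obstacle is concentrated in two places. In the lemma it is the \cond{PWR-Rel} case, where the whole point is to argue, purely from well formedness of the reordered prefix, that two acquires on a common lock cannot have swapped order once the later event $f$ of their critical sections is retained; this is exactly where the deliberately weak formulation of \cond{PWR-Rel} (ordering $e$ before $f$ rather than before $a_2$) is needed. In \cond{DP-Block} the crux is realizing that the \emph{requested} acquires $\hat a_k$ are forced to appear in $T'$ and, because the held acquires $a_{k+1}$ are never released there, must be released before them---turning the blocking cycle $B$ into a cyclically increasing chain of acquires in $T'$, an impossibility. The remaining work is bookkeeping about threads, program order, and prefixes.
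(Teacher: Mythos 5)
Your proof is correct and follows essentially the same route as the paper's: the same key lemma (every $T'\in\crp(T)$ containing $f$ preserves $e \PWRLt[T] f$, proved by induction with \cond{PWR-Rel} handled via \cond{CRP-PO} and \cond{WF-Acq}), and the same cyclic-trace-order contradiction among the requested acquires and held acquires for \cond{DP-Block} (you treat general cycle length directly where the paper does size two and appeals to induction). The only local difference is in \cond{DP-PWR}: the paper constructs a second witness by permuting the maximal requests and re-applies the lemma, whereas you note that any PWR derivation leaving a request must start with a program-order step through its requested acquire, which the lemma then forces into $T'$ against maximality --- an equally valid and slightly more self-contained step.
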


\begin{proofsketch}
  By assuming that $A$ is a predictable deadlock, there is $T' \in \crp(T)$ such that every $\REQ{i}$ in $A$ is the last event in its thread.
  We prove both conditions by contradiction.
  Condition~\cond{DP-PWR} is straightforward: requests at the end of threads can be trivially reordered, contradicting any PWR ordering among them.
  Condition~\cond{DP-Block} follows by showing that a preceding cycle would lead to a cyclic trace order in~$T'$.
\end{proofsketch}

Our experiments (\cref{sec:experiments}) show that the result above is not trivial, as PWR does not report any false positives, even though it is theoretically unsound.
As such, our methodology allows a choice between near-sound completeness (PWR) and near-complete soundness~(TRW).
Ideally, we will develop a partial order that is both sound and complete, lying somewhere between TRW and PWR, but this seems to be a quest for a ``holy grail''.

%%% Local Variables:
%%% mode: latex
%%% TeX-master: "main-icse25.tex"
%%% End:

\newcommand{\setalglineno}[1]{%
  \setcounter{ALG@line}{\numexpr#1-1}}

\newcommand{\all}{{\mathit all}}

\newcommand{\AllLocksSym}{\LocksSym{\all}}
\newcommand{\acqVC}[1]{{\mathit Acq(#1)}}

\newcommand{\RW}[1]{\mathit{RC}(#1)}

\newcommand{\pwrVC}[1]{\mathit{PWR}(#1)}

%% Candidates

\newcommand{\pwrCand}{PWR}
\newcommand{\pwrCandGuarded}{PWR$^{cs}$}
\newcommand{\pwrCandGuardedRef}{PWR$^{rcs}$}
\newcommand{\undeadCand}{UD}
\newcommand{\undeadCandGuardedExtra}{UD$^{cs}$}
\newcommand{\undeadCandGuardedExtraRef}{UD$^{rcs}$}

\newcommand{\OK}{\mathit{ok}}
\newcommand{\FAIL}{\mathit{fail}}

%--------------------------------------------------------
%--------------------------------------------------------
\section{Implementation}
\label{sec:implementation}

We show how we implemented the deadlock-prediction methods
from \cref{s:sound,s:complete} as an algorithm.
Our algorithm takes as input a trace and yields a set of deadlock patterns, in two phases:
(1)~computation of the partial ordering of events and lock dependencies (\cref{sec:vcAndLd}),
followed by
(2)~detection of cyclic lock-dependency chains representing deadlock patterns (\cref{sec:impl:verify}).
The partial ordering is implemented using vector clocks.
Our implementation of Phase~(2)
closely follows the original UNDEAD depth-first algorithm
to compute cyclic lock-dependency chains,
with additional checks for Conditions~\cond{DP-P/-Block}
(where P refers to either TRW or PWR).

In our algorithms, we often write ``$\dontCare$'' in pattern matching to denote arbitrary values.

\subsection{Computation of TRW %%Vector Clocks
            and Abstract Lock Dependencies}
\label{sec:vcAndLd}

\begin{figure}[t]
  %%latexTrace $ addLoc evict_3
  \bda{|l|l|l||l|l|}
    \hline T_{10} & \thread{1} & \thread{2} & \mbox{Standard lock dependencies} & \mbox{Abstract lock dependencies}\\ \hline
    \eventE{1}  & \lockE{\LKA}&&&\\
    \eventE{2}  & \lockE{\LKB}&& \LD{\thread{1}}{\LKB}{\{\LKA\}} & \LDMap{\thread{1}}{\LKB}{\{\LKA\}} = [(2,V_2,\{e_1\})] \\
    \eventE{3}  & \writeE{\VA}&&&\\
    \eventE{4}  & \unlockE{\LKB}&&&\\
    \eventE{5}  & \unlockE{\LKA}&&&\\
    \eventE{6}  & &\lockE{\LKB}&&\\
    \eventE{7}  & &\readE{\VA}&&\\
    \eventE{8}  & &\lockE{\LKA}& \LD{\thread{2}}{\LKA}{\{\LKB\}} & \LDMap{\thread{2}}{\LKA}{\{\LKB\}} = [(8,V_8,\{e_7\})] \\
    \eventE{9}  & &\unlockE{\LKA} &&\\
    \eventE{10}  & &\unlockE{\LKB}&&\\
    \eventE{11}  & \lockE{\LKA}&&&\\
    \eventE{12}  & \lockE{\LKB}&& \LD{\thread{1}}{\LKB}{\{\LKA\}} & \LDMap{\thread{1}}{\LKB}{\{\LKA\}} = [\smash{\begin{array}[t]{@{}l@{}} % \smash ignores the height of the array.
      (2,V_2,\{e_1\}),
      \\
      (12,V_{12},\{e_{11}\})]
    \end{array}} \\
    \eventE{13}  & \unlockE{\LKB}&&&\\
    \eventE{14}  & \unlockE{\LKA}&&&\\
    \hline
  \eda{}

  \caption{Standard versus abstract lock dependencies.}
  \label{fig:ld-abstract}
\end{figure}

\begin{algorithm*}[t!]
  \caption{Computation of TRW vector clocks and abstract lock dependencies.}
  \label{alg:lock-deps}

  {\small
    \begin{algorithmic}[1]
      \Function{computeTRWLockDeps}{$T$}
      \label{ln:cLDs}
        \State $\forall t \colon \threadVC{t} = [\bar{0}]; \incC{\threadVC{t}}{t}$
        \Comment{Vector clock $\threadVC{t}$ of thread $t$}
        \label{ln:thvc}
        \State $\forall x \colon \lastWriteVC{x} = [\bar{0}]; \lastReadVC{x} = [\bar{0}]$
        \Comment{Vector clocks $\lastWriteVC{x},\lastReadVC{x}$ of most recent $\writeE{x},\readE{x}$}
        \label{ln:lwlr}
        \State $\forall l \colon \acqVC{l} = [\bar{0}]$
        \Comment{Vector clock $\acqVC{l}$ of most recent $\acqE{l}$}
        \label{ln:acqv}
        \State $\forall l \colon \Hist{l} = []$
        \Comment{History $\Hist{l}$ of acquire-release pairs $(\Vacq,\Vrel)$ for lock $l$}
        \label{ln:hist}
        \State $\forall t \colon \AcqHeld(t) = []$
        \Comment{Sequence $\AcqHeld(t)$ of acquires held by thread $t$}
        \label{ln:acqhd}
        \State $\LDMapSym = \emptyset$
        \Comment{Map  with keys $(t,l,ls)$,
                 list values with elements $(i,V,\{a_1,\ldots,a_n\})$}
        \label{ln:ld}
        \ForDo {$e$ in $T$} {\Call{process}{$e$}}
        \State \Return \LDt
      \EndFunction
      \algstore{cld}
    \end{algorithmic}

    \begin{minipage}[t]{.52\textwidth}
      \begin{algorithmic}[1]
        \algrestore{cld}
        \Procedure{process}{$(\alpha,t,acq(l))$}
          \label{ln:procAcq}
          \If {$\AcqHeld(t) \not = []$}
          \label{ln:acq-non-empty}
          \State $ls = \{ l' \mid \acqE{l'} \in \AcqHeld(t) \}$
          \label{ln:ld-ls}
          \State $\LDMap{t}{l}{ls}.pushBack(\alpha,\threadVC{t},\AcqHeld(t))$
          %% \State $\LDt = \LDt \cup \{ \LD{t}{l,\threadVC{t}}{\AcqHeld(t)} \}$
            \label{ln:ld-add}
          \EndIf
          \State $\AcqHeld(t) = \AcqHeld(t) \cup \{ (\alpha,t,acq(l)) \}$
          \label{ln:acq-push}
          %% The above represents the ``pre'' vector clock = request
          \State $\threadVC{t} = \Call{syncCS}{\threadVC{t},\AcqHeld(t)}$
          \label{ln:acq-sync}
          \State $\acqVC{l} = \threadVC{t}$
          \label{ln:acq-store}
          \State $\incC{\threadVC{t}}{t}$
          \label{ln:acq-vc-inc}
        \EndProcedure
        \algstore{eacq}
      \end{algorithmic}

      \begin{algorithmic}[1]
        \algrestore{eacq}
        \Procedure{process}{$(\dontCare,t,rel(l))$}
          \label{ln:procRel}
          \State $\AcqHeld(t) = \{\dontCare,\dontCare,acq(l')) \in \AcqHeld(t) \mid l' \not= l \}$
          \label{ln:acq-pop}
          \State $\Hist{l} = \Hist{l} \cup \{(\Acq{l}, \threadVC{t})\}$
          \label{ln:hist-add}
          \State $\incC{\threadVC{t}}{t}$
          \label{ln:rel-vc-inc}
        \EndProcedure
        \algstore{erel}
      \end{algorithmic}

      %% MS: omit for brevity
      %% \begin{algorithmic}[1]
      %%   \Procedure{\mbox{$e$}@fork}{$t,s$}
      %% \State $\threadVC{s} = \threadVC{t}$
      %% \State $\incC{\threadVC{t}}{t}$
      %% \EndProcedure
      %% \end{algorithmic}

      %% MS: omit for brevity
      %% \begin{algorithmic}[1]
      %%   \Procedure{\mbox{$e$}@join}{$t,s$}
      %%   \State $\threadVC{t} = \threadVC{s} \sqcup \threadVC{t}$
      %%   \If {RO}
      %%   \State $\threadVC{t} = \Call{syncCS}{\threadVC{t},\StdLocksSym(t)}$
      %%   \EndIf
      %% \State $\incC{\threadVC{t}}{t}$
      %% \EndProcedure
      %% \end{algorithmic}
    \end{minipage}%
    \hfill%
    \begin{minipage}[t]{.46\textwidth}
      \begin{algorithmic}[1]
        \algrestore{erel}
        \Procedure{process}{$(\dontCare,t,wr(x))$}
          %% TWR-only
          \label{ln:procWr}
          \State $\threadVC{t} = \threadVC{t} \sqcup \lastWriteVC{x}$
          \label{ln:ww-sync}
          \State $\threadVC{t} = \threadVC{t} \sqcup \lastReadVC{x}$
          \label{ln:rw-sync}
          \State $\threadVC{t} = \Call{syncCS}{\threadVC{t},\AcqHeld(t)}$
          \label{ln:csw-sync}
          %% TWR + PWR
          \State $\lastWriteVC{x} = \threadVC{t}$
          \label{ln:lw-store}
          \State $\incC{\threadVC{t}}{t}$
          \label{ln:wr-vc-inc}
        \EndProcedure
        \algstore{ewr}
      \end{algorithmic}

      \begin{algorithmic}[1]
        \algrestore{ewr}
        \Procedure{process}{$(\dontCare,t,rd(x))$}
          \label{ln:procRd}
          \State $\threadVC{t} = \threadVC{t} \sqcup \lastWriteVC{x}$
          \label{ln:wr-sync}
          \State $\threadVC{t} = \Call{syncCS}{\threadVC{t},\AcqHeld(t)}$
          \label{ln:csr-sync}
            \State $\lastReadVC{x} = \threadVC{t}$ %% TWR-only
          \label{ln:lr-store}
          \State $\incC{\threadVC{t}}{t}$
          \label{ln:rd-vc-inc}
        \EndProcedure
        \algstore{erd}
      \end{algorithmic}

      \begin{algorithmic}[1]
        \algrestore{erd}
        \Function{syncCS}{$V, A$}
          \label{ln:syncCS}
          \For {$\acqE{l} \in A , (\Vacq, \Vrel) \in \Hist{l}$}
          \label{ln:sync-hist-entry}
            \IfThen {$\Vacq < V$} {$V = V \sqcup \Vrel$}
            \label{ln:ro-sync}
          \EndFor
        \State \Return V
        \EndFunction
        % \algstore{sync}
      \end{algorithmic}
    \end{minipage}
  }
\end{algorithm*}

%--------------------------------------------------------

Phase~(1) computes vector clocks and lock dependencies (\cref{alg:lock-deps}).
Although our implementation supports various partial orders, our presentation details TRW (\cref{d:trw}).
Function \textsc{computeTRWLockDeps} (\cref{ln:cLDs}) takes a trace $T$, returning a set of \emph{abstract} lock dependencies
represented by the map $\LDMapSym$.

To understand the difference between standard and abstract lock dependencies,
consider trace~$T_{10}$ in \cref{fig:ld-abstract}.
For brevity, we omit explicit request events,
assuming that acquires implicitly directly preceded by a request.
Events $e_2$ and $e_{12}$ both acquire lock $\LKB$ while holding lock $\LKA$.
The resulting (standard) lock dependency
$\LD{\thread{1}}{\LKA}{\{\LKB\}}$ is sufficient to check for Conditions~\cond{DP-Cycle/-Guard}.
For the additional Conditions~\cond{DP-TRW/-Block},
we require some extra information specific to~$e_2$ and~$e_{12}$.
For Condition~\cond{DP-TRW} we need their vector clocks,
and
for Condition~\cond{DP-Block} we need their trace positions
and, for each lock held, the corresponding acquire event.
We collect this extra information as triples:
for~$e_2$ we have $(2,V_2,\{e_1\})$, and for~$e_{12}$ we have $(12,V_{12},\{e_{11}\})$.
Each triple represents a \emph{concrete} lock dependency.
Concrete lock dependencies that share the same thread, lock and locks held
are stored in an \emph{abstract} lock dependency.
An abstract lock dependency is represented as a (list) value in a map~$\LDMapSym$
with \emph{standard} lock dependencies as keys.
Initially, all lists are empty.
When processing $e_2$ we add $(2,V_{2},\{e_{1}\})$
to the empty list $\LDMap{\thread{1}}{\LKB}{\{\LKA\}}$.
After processing~$e_{12}$, we insert the associated triple, resulting in
$[(2,V_2,\{e_1\}), (12,V_{12},\{e_{11}\})]$.
New triples are added to the back of the list, to maintain the processing
order among elements.

Before detailing \cref{alg:lock-deps},
we repeat some standard definitions for vector clocks.

\begin{definition}[Vector Clocks]
  A \emph{vector clock} $V$ is a list of time stamps of the form
  % \[
  $
    % V ::=
    [i_1,\ldots,i_n]
  $.
  % \]
  We assume vector clocks are of a fixed size $n$.
  \emph{Time stamps} are natural numbers, and a time stamp at position $j$ corresponds to the thread with identifier $\thread{j}$.

  We define
  % \[
  $
    [i_1,\ldots, i_n] \sqcup [j_1,\ldots,j_n] = [\maxN{i_1}{j_1},\ldots,\maxN{i_n}{j_n}]
  $
  % \]
  to \emph{synchronize} two vector clocks, taking pointwise maximal time stamps.
  We write $\incC{V}{j}$ to denote incrementing the vector clock $V$ at position $j$ by one,
  and $\accVC{V}{j}$ to retrieve the time stamp at position $j$.

  We write $V < V'$ if $\forall k \colon \accVC{V}{k} \leq \accVC{V'}{k} \wedge \exists k \colon \accVC{V}{k} < \accVC{V'}{k}$,
  and $V \VCConc V'$ if $V \not< V'$ and $V' \not< V$.
\end{definition}

\Cref{alg:lock-deps} processes events in trace order.
For every event $e$, we call the procedure \textsc{process}, defined differently for each event operation (\cref{ln:procAcq,ln:procRel,ln:procWr,ln:procRd}).
For brevity, we omit the standard treatment of fork/join events.

For the computation of TRW vector clocks, we maintain several state variables that
are either indexed by some thread, lock, or shared variable.
For every thread $t$, we have a vector clock~$\threadVC{t}$.
Vector clocks $\lastWriteVC{x}$ and $\lastReadVC{x}$ represent the last read and write
event on variable~$x$.
The vector clock $\acqVC{l}$ records the last acquire event on lock~$l$.
Initially, all time stamps are set to zero (\cref{ln:thvc,ln:lwlr,ln:acqv}; $\bar{0}$ denotes a sequence of zeros), and $\threadVC{t}$ is set to 1 at position $t$ (\cref{ln:thvc}).

Unlike well-known vector-clock algorithms such as
FastTrack~\cite{flanagan2010fasttrack} and
SHB~\cite{Mathur:2018:HFR:3288538:3276515},
we do not order critical sections by source trace order.
Rather, TRW only orders critical sections that contain conflicting writes and reads.
To check if critical sections are conflicting, we maintain
a critical-section history $\Hist{l}$ to track the vector clocks $\Vacq$ of acquires of lock~$l$ and $\Vrel$ of their matching releases.
In our actual implementation we use thread-local histories for efficiency reasons.
We omit details, as our use of thread-local histories
is along the lines of existing vector-clock algorithms
for WCP~\cite{DBLP:journals/corr/KiniM017} and PWR~\cite{conf/mplr/SulzmannS20}.

Additionally,
we need to know the acquire events connected to locks held.
Hence, each thread~$t$ maintains a sequence $\AcqHeld(t)$ of acquires held (\cref{ln:acqhd}).
As discussed above, the map~$\LDMapSym$~(\cref{ln:ld})
holds the set of abstract lock dependencies.

The \textsc{process} procedures update the above state variables differently for each operation, but all increment the time stamp of thread $t$ in $\threadVC{t}$ as a final step (\cref{ln:acq-vc-inc,ln:rel-vc-inc,ln:wr-vc-inc,ln:rd-vc-inc}).
We use pattern matching to distinguish operations.

For acquire event $(\alpha,t,acq(l))$ (\cref{ln:procAcq}), a new entry $(\alpha,\threadVC{t},\AcqHeld(t))$
is added to the abstract lock dependency $\LDMap{t}{l}{ls}$ (\cref{ln:ld-add}),
assuming that the sequence $\AcqHeld(t)$ of acquires held is not empty (\cref{ln:acq-non-empty}).
From $\AcqHeld(t)$ we derive the lockset~$ls$ (\cref{ln:ld-ls}).
Event id~$\alpha$ refers to the trace position of the acquire event.
Each entry uses the vector clock \threadVC{t} before synchronization with other critical sections.
Effectively, this is the vector clock of the request event that precedes the acquire,
allowing us to avoid explicitly handling requests.
%% MS: we want to deal with ill-nested traces as well
$\AcqHeld(t)$ is extended with the current acquire~(\cref{ln:acq-push}).
Function call $\Call{syncCS}{\threadVC{t},\AcqHeld(t)}$ (\cref{ln:acq-sync}; discussed shortly)
enforces Condition~\cond{TRW-Rel}.
% \ms{likely there's an earlier example we can refer to?}
Finally, we record the vector clock of the acquire event (\cref{ln:acq-store}).

For release events $(\dontCare,t,rel(l))$ (\cref{ln:procRel}),
we remove the matching acquire from $\AcqHeld(t)$~(\cref{ln:acq-pop})
and add a new acquire-release pair to $\Hist{l}$ (\cref{ln:hist-add}).

For write events $(\dontCare,t,wr(x))$ (\cref{ln:procWr}), we synchronize $\threadVC{t}$ with the vector clock of the last write and read on $x$ (\cref{ln:ww-sync,ln:rw-sync}).
For read events $(\dontCare,t,rd(x))$ (\cref{ln:procRd}), we also synchronize $\threadVC{t}$ with the vector clock of the last write on $x$ (\cref{ln:wr-sync}).
In both cases, this enforces Condition~\cond{TRW-Conf}.
Conflicting memory operations with a common lock enforce Condition~\cond{TRW-Rel} using $\Call{syncCS}{\threadVC{t},\AcqHeld(t)}$ (\cref{ln:csw-sync,ln:csr-sync}).

Function \textsc{syncCS} (\cref{ln:syncCS}) synchronizes critical sections on the same lock.
The function cycles through every $\acqE{l} \in A$.
For each $(\Vacq,\Vrel)$ in the history $\Hist{l}$ (\cref{ln:sync-hist-entry}),
we check whether $\Vacq < V$~(\cref{ln:ro-sync}), where $V$ is the vector clock of some acquire/read/write.
If so, we synchronize $V$ and $\Vrel$, thus enforcing Conditions~\cond{TRW-Conf/-Rel}.

\paragraph{PWR vector clocks.}

To obtain PWR vector clocks, we simply drop \cref{ln:ww-sync,ln:rw-sync,ln:csw-sync,ln:lr-store} in \cref{alg:lock-deps};
all other parts remain the same.

\algdef{SE}[DOWHILE]{Do}{DoWhile}{\algorithmicdo}[1]{\algorithmicwhile\ #1}%

\begin{algorithm*}[t]
  \caption{Checking Conditions \cond{DP-Cycle/-Guard/TRW/-Block}.}
  \label{alg:dp-trw-block}

  {\small

    \begin{algorithmic}[1]
      \Function{undeadTRWBlock}{$T$}
      \label{ln:undeadTRWBlock}
      \State $\LDMapSym = \Call{computeTRWLockDeps}{T}$
      \label{ln:call-computeTRWLockDeps}
      \State $\DDs = \{ (\LDMap{\thread{1}}{l_1}{ls_1}, \ldots, \LDMap{\thread{n}}{l_n}{ls_n}) \mid \forall i \ne j \colon l_i \in ls_{(i \mod n) + 1} \wedge ls_i \cap ls_j = \emptyset \}$
      \label{ln:cyclic-chain}
      \State $\AAs = \emptyset$
      \For{$(\Ald_1,\ldots,\Ald_n) \in \DDs$}
      \label{ln:dp-trw-start}
      \If{$\Call{checkTRW}{\Ald_1,\ldots,\Ald_n} = (\OK,(\AldE_1,\ldots,\AldE_n))$}
      \label{ln:call-checkTRW}
      \State $\AAs = \AAs \cup \{ \{ \AldE_1,\ldots,\AldE_n \} \}$
      \EndIf
      \EndFor
      \label{ln:dp-trw-end}
      \State $\BBs = \emptyset$
      \For{$A \in \AAs$}
      \label{ln:dp-block-start}
      \If{$\neg \exists B \in \AAs. \Call{lessThan}{B,A}$}
      \label{ln:dp-block-check}
      \State $\BBs = \BBs \cup \{ A \}$
      \EndIf
      \EndFor
      \label{ln:dp-block-end}
      \State \Return $\BBs$
      \EndFunction
      \algstore{undeadTRWBlock}
    \end{algorithmic}

    \begin{minipage}[t]{.45\textwidth}
      \begin{algorithmic}[1]
        \algrestore{undeadTRWBlock}
        \Function{checkTRW}{$\Ald_1,\ldots,\Ald_n$}
        \label{ln:check-dp-trw}
        \State $k_i = 0$ for $i=1,\ldots,n$
        \label{ln:index-init}
        \While{$\bigwedge_{i=1}^n k_i < \Ald_i.size$}
        \label{ln:iterate-through-all}
        \State $(\dontCare,V_i,\dontCare) = \Ald_i[k_i]$ for $i=1,\ldots,n$
        \label{ln:index-access-vc}
        \If{$V_i \VCConc V_j$ for $i \not= j$}
        \label{ln:dp-trw-okay}
        \State \Return $(\OK,(\Ald_1[k_1],\ldots,\Ald_n[k_n]))$
        \EndIf
        \State $V = V_1 \sqcup \ldots \sqcup V_n$
        \label{ln:sync-candidates}
        \State $k_i = \Call{next}{V,\Ald_i,k_i}$ for $i=1,\ldots,n$
        \label{ln:next-candidates}
        \EndWhile
        \EndFunction
        \State \Return $(\FAIL,\dontCare)$
        \algstore{checkTRW}
      \end{algorithmic}
    \end{minipage}%
    \hfill%
    \begin{minipage}[t]{.50\textwidth}
      \vspace{-5.5em}
      \begin{algorithmic}[1]
        \algrestore{checkTRW}
        \Function{next}{$V,\Ald,i$}
        \Do
        \State $(\dontCare,V',\dontCare) = \Ald[i]$
        \If{$\neg (V' < V)$}
        \label{ln:next-entry}
        \State \Return $i$
        \EndIf
        \State $i = i + 1$
        \DoWhile{$i < \Ald.size$}
        \State \Return $i$
        \EndFunction
        \algstore{next}
      \end{algorithmic}

      \begin{algorithmic}[1]
        \algrestore{next}
        \Function{lessThan}{$A,B$}
        \For{$(i,\dontCare,as) \in A$}
        \If{$\neg \exists (j,\dontCare,bs) \in B \colon i < j \wedge \Call{cycLk}{as} \in bs$}
        \State \Return false
        \EndIf
        \EndFor
        \State \Return true
        \EndFunction
      \end{algorithmic}
    \end{minipage}
  }
\end{algorithm*}

%--------------------------------------------------------
\subsection{Computing and Verifying Deadlock Patterns}
\label{sec:impl:verify}

Function \textsc{undeadTRWBlock} (\cref{ln:undeadTRWBlock})
in \cref{alg:dp-trw-block}
is the main entry point of our algoritm.
We first call \textsc{computeTRWLockDeps} (in \cref{alg:lock-deps}) to compute
the set of abstract lock dependencies~(\cref{ln:call-computeTRWLockDeps} discussed in \cref{sec:vcAndLd}).

Next, we enumerate in $\DDs$ all cyclic chains of abstract lock dependencies (referred to in~\cite{conf/pldi/TuncMPV23} as ``abstract lock dependencies'') that
satisfy Conditions~\cond{DP-Cycle/-Guard} (\cref{ln:cyclic-chain}).
We refer to~\cite{conf/ase/ZhouSLCL17} for details on the computation of~$\DDs$.
Elements in $\DDs$ are sequences of the form $(\Ald_1,\ldots,\Ald_n)$,
where each $\Ald_i$ is an abstract lock dependency, i.e., a list of triples $\AldE_j = (j,V,\{a_1,\ldots,a_k\})$.
For every such~$(\Ald_1,\ldots,\Ald_n)$, we then look for a concrete
instance~$(\AldE_1,\ldots,\AldE_n)$, where every $\AldE_i \in \Ald_i$,
that satisfies Conditions~\cond{DP-TRW/-Block}.

Function \textsc{checkTRW} looks for an instance
that satisfies Condition~\cond{DP-TRW}~(\cref{ln:check-dp-trw}).
We systematically enumerate instances by starting
with the first element in $\Ald_i$~(\cref{ln:index-init}).
We use array access notation to obtain the $k_i$th element
combined with pattern matching to retrieve the vector clock~$V_i$
of that element (\cref{ln:index-access-vc}).
If the vector clocks are pairwise incomparable,
we have found an instance that satisfies~Condition~\cond{DP-TRW}~(\cref{ln:dp-trw-okay}).
Otherwise, we need to select a new instance.

To avoid naive enumeration of all instances, we exploit the property that for
${\Ald = [ \AldE_1,\ldots,\AldE_k]}$ we find that
$V_j < V_{j+1}$ where $\AldE_j = (\dontCare,V_j,\dontCare)$ for every $\AldE_j \in \Ald$.
This follows from the fact that the elements of $\Ald$ are stored in
the order in which they are generated
(cf.\ \cref{ln:ld-add} in \cref{alg:lock-deps}).
Hence, if we find that $V_i < V_j$,
we can immediately conclude that, for any $V'$
where $(\dontCare,V',\dontCare) = \Ald_i[k]$ for~$k<k_i$,
we have $V' < V_j$.
It follows that it suffices to check for ``later'' candidates $V'$
for which~$V' < V_j$ does \emph{not} hold.
This is achieved by synchronizing all candidates (\cref{ln:sync-candidates})
and calling function \textsc{next}~(\cref{ln:next-candidates}).

The resulting instances, that satisfy Condition~\cond{DP-TRW}, are stored in $\AAs$.
What remains is to select only those instances that satisfy Condition~\cond{DP-Block}.
Function~\textsc{lessThan} checks whether two instances are ordered
following the description in \cref{def:ltDP}.
For convenience, we omit the helper function~\textsc{cycLk}
that retrieves the acquire held that is part
of the cyclic dependency, as stated in Condition~\cond{DP-Cycle}.
The resulting instances stored in $\BBs$ represent
deadlocks patterns satisfying Conditions~\cond{DP-Cycle/-Guard/TRW/-Block}.

For example, for trace~$T_{10}$ in \cref{fig:ld-abstract},
we find
\[
\DDs = \{ ([(2,V_2,\{e_1\}), (12,V_{12},\{e_{11}\})],[(8,V_8,\{e_7\})]) \}
\]
This leads to the call
$
\textsc{checkTRW}(([(2,V_2,\{e_1\}), (12,V_{12},\{e_{11}\})],[(8,V_8,\{e_7\})]))
$.
The first instance to be checked
is the pair of elements $(2,V_2,\{e_1\})$ and $(8,V_8,\{e_7\})$, but it is rejected because $V_2 < V_8$.
The next pair of elements considered is
$(12,V_{12},\{e_{11}\})$ and $(8,V_8,\{e_7\})$.
This pair is successful: we find that $V_{12} \VCConc V_8$.
The resulting instance is $\{ (12,V_{12},\{e_{11}\}), (8,V_8,\{e_7\}) \}$,
which is the only deadlock pattern reported for this example.

%--------------------------------------------------------
\subsection{Time Complexity}
\label{sec:complexity}

\newcommand{\TRACELEN}{\mathcal{N}}
\newcommand{\THREADNO}{\mathcal{T}}
\newcommand{\CSNO}{\mathcal{C}}
\newcommand{\CYCLES}{\mathit{Cyc}}

We first consider the time complexity of \cref{alg:lock-deps}.
Let $\THREADNO$ be the number of threads and
$\CSNO$ be the number of critical sections (acquire-release pairs).
Then, each call to function~\textsc{syncCS} takes time $O(\THREADNO \cdot \CSNO)$,
because the number of acquires held is treated as a constant,
there are $O(\CSNO)$ entries to consider, and
for each entry the vector clock operations involved take time~$O(\THREADNO)$.
Hence, function~\textsc{computeTRWLockDeps} takes time~$O(\TRACELEN \cdot \THREADNO \cdot \CSNO)$ where $\TRACELEN$ is the length of the trace,
assuming that access and manipulation of state variables ($\threadVC{t},...$)
takes time $O(1)$ and ignoring initialization of state variables.
Instead of a global history~$\Hist{l}$ of size $O(\CSNO)$,
our actual implementation follows WCP~\cite{DBLP:journals/corr/KiniM017}
and PWR~\cite{conf/mplr/SulzmannS20} in using thread-local histories.
This way, for realistic examples
the number of entries in thread-local histories can be treated like a constant.
We therefore argue that the time complexity of \textsc{syncCS} can be simplified to $O(\THREADNO)$, and hence that \cref{alg:lock-deps} takes time $O(\TRACELEN \cdot \THREADNO)$.

What remains is to consider the time complexity of~\cref{alg:dp-trw-block}.
As shown by \Tunc et al.~\cite{conf/pldi/TuncMPV23},
the number of cyclic chains of abstract lock dependencies
can be exponential in terms of the number of threads and acquires.
We follow~\cite{conf/pldi/TuncMPV23} in writing $O(\CYCLES)$ to represent this number
as well as the time complexity of computing set $\DDs$ (\cref{ln:cyclic-chain}).
Next, we consider the complexity of~\textsc{checkTRW} (\cref{ln:call-checkTRW}).
We treat the length of the sequence $(\Ald_1,\ldots,\Ald_n)$ as a constant and
always strictly move forward through the list of candidates in $\Ald_i$.
In each step, vector clock operations take time $O(\THREADNO)$.
Hence, function~\textsc{checkTRW} takes time~$O(\TRACELEN \cdot \THREADNO)$.
This shows that \cref{ln:dp-trw-start,ln:call-checkTRW,ln:dp-trw-end}
take time $O(\CYCLES \cdot \TRACELEN \cdot \THREADNO)$.
Function~\textsc{lessThan} takes time $O(\CYCLES)$.
Thus, we find that \cref{alg:dp-trw-block} takes time
$O(\CYCLES + \CYCLES \cdot \TRACELEN \cdot \THREADNO + \CYCLES^2)$.

Overall, our method takes time $O(\TRACELEN \cdot \THREADNO + \CYCLES + \CYCLES \cdot \TRACELEN \cdot \THREADNO + \CYCLES^2)$.
In practice, $O(\CYCLES)$ is small and $O(\THREADNO)$ can be interpreted as a constant, such that the complexity reduces to~$O(\TRACELEN)$.
This is confirmed by our experiments (\cref{sec:experiments}).

%%% Local Variables:
%%% mode: latex
%%% TeX-master: "main"
%%% End:

%--------------------------------------------------------
%--------------------------------------------------------
\section{Experiments}
\label{sec:experiments}

\newcommand{\UNDEAD}{\mbox{UNDEAD}\xspace}

%% eviction variants
\newcommand{\UDPWRLimitH}{\mbox{UD-PWR2}\xspace}  %% limit 100

\newcommand{\UDPWRRandom}{\mbox{UD-PWR3}\xspace}  %% limit 5, evict ``random'', latest

\newcommand{\UDPWRRandomLimitH}{\mbox{UD-PWR4}\xspace}  %% limit 100, evict ``random'', latest

\newcommand{\Benchmark}{\mbox{\bf Benchmark}}
\newcommand{\TT}{\mbox{${\mathcal T}$}}
\newcommand{\EE}{\mbox{${\mathcal E}$}}
\newcommand{\MM}{\mbox{${\mathcal M}$}}
\newcommand{\LL}{\mbox{${\mathcal L}$}}
\newcommand{\Sum}{\mbox{$\sum$}}

\newcommand{\Cycles}{\mbox{Dlk}}      %% Deadlock = Dlk = cycle
\newcommand{\Dependencies}{\mbox{Deps}}
\newcommand{\DepsGuarded}{\mbox{Grds}}
\newcommand{\DepsExtras}{\mbox{E}}
\newcommand{\Time}{\mbox{Time}}
\newcommand{\PhaseOne}{\mbox{P1}}
\newcommand{\PhaseTwo}{\mbox{P2}}
\newcommand{\Races}{\mbox{Races}}
\newcommand{\Guards}{\mbox{Guards}}

\newcommand{\bm}[1]{\texttt{#1}}

%% candidates

\newcommand{\udsymbol}{UD}
\newcommand{\UD}{\mbox{\udsymbol}\xspace}
\newcommand{\UDFJ}{\mbox{\udsymbol\textsubscript{FJ}}\xspace}
\newcommand{\UDFJWRD}{\mbox{\udsymbol\textsubscript{LW}}\xspace}
\newcommand{\UDPWR}{\mbox{\udsymbol\textsubscript{PWR}}\xspace}
\newcommand{\UDTRW}{\mbox{\udsymbol\textsubscript{TRW}}\xspace}
\newcommand{\UDTRWEvict}{\mbox{\udsymbol\textsubscript{TRW$\_$R}}\xspace}
\newcommand{\UDPWRSyncP}{\mbox{\udsymbol\textsubscript{PWR}\textsubscript{+SPD}}\xspace}

\newcommand{\UDTRWGuardCheck}{\mbox{\udsymbol\textsubscript{TRW$\_${GC}}}\xspace}

\newcommand{\UDFJWRDVariant}{\mbox{\udsymbol\textsubscript{LW}$^*$}\xspace}

%% playing

\newcommand{\UDPWROpt}{\mbox{UPWROpt}\xspace}
\newcommand{\UDPWRTO}{\mbox{UNDEAD\textsubscript{TO}}\xspace}
\newcommand{\UDSyncP}{\mbox{SPDOffline}\xspace} %% uses FJWR
\newcommand{\UDPWRDropSyncP}{\mbox{SPD\textsubscript{PWR}\textsuperscript{Drop}}\xspace}
\newcommand{\UDPWRTOHACK}{\mbox{UNDEAD\textsubscript{TOH}}\xspace}

%% The overview uses the term ``MHB'' so we stick to the name
%% (but ``LW'' might be more appropriate).
%%
%% Fun fact.
%% SPD just uses PWR-PO+PWR-LW but any partial order could be used that is
%% weaker than the CMHB partial order.
%% If we use for example PWR the closure
\newcommand{\SPDVectorClocks}{LW\xspace}

We evaluated our approach in an offline setting using a large dataset
of pre-recorded program traces from prior work.
These traces include fork and join events, whose straightforward treatment we do not discuss in this paper due to space limitations.

\paragraph{Test candidates.}

For experimentation, we considered the following four test candidates.
All our candidates are implemented in C++.%
\footnote{Available at \url{https://osf.io/ku9fx/files/osfstorage?view_only=b7f53d3110894fe39ad1520ed0fed4ec} (anonymous).}
\begin{description}
  \item[\UD] is the original UNDEAD implementation~\cite{conf/ase/ZhouSLCL17} adapted to work in an
    offline setting.
  \item[\UDTRW] implements TRW according to \cref{alg:lock-deps,alg:dp-trw-block}.
  \item[\UDPWR] adapts \cref{alg:lock-deps,alg:dp-trw-block} to PWR.
  \item[\SPDOfflineUD] is our C++ reimplementation of \SPDOffline~\cite{conf/pldi/TuncMPV23}.\footnote{\SPDOffline is written in Java. For a fairer comparison, we therefore use our version \SPDOfflineUD.}
\end{description}

\SPDOfflineUD employs two phases that roughly correspond to \cref{alg:lock-deps} and \cref{alg:dp-trw-block}.
In Phase~(1), \SPDOfflineUD computes \SPDVectorClocks vector clocks, which is a simplification of PWR that only satisfies Conditions~\cond{PWR-PO/-LW} (cf.\ \cref{d:pwr}).
In Phase~(2), \SPDOfflineUD makes use of \SPDVectorClocks vector clocks to check
if there is a sync-preserving instance of
a cyclic chain $(\Ald_1,\ldots,\Ald_n)$ of abstract lock dependencies (recall that sync-preserving means that the order of critical sections on the same lock is preserved).
This check is carried out by a call to function~\textsc{CompSPClosure} from~\cite[Algorithm 1]{conf/pldi/TuncMPV23}, replacing \cref{ln:dp-trw-okay} in \cref{alg:dp-trw-block} (the check for Condition \cond{DP-TRW}).
As \SPDOfflineUD does not need to check for Condition~\cond{DP-Block}, \cref{ln:dp-block-start,ln:dp-block-check,ln:dp-block-end} are dropped.

As discussed in \cref{s:sound,s:complete}, in general
\UDTRW is sound but incomplete whereas
\UDPWR is complete but unsound.
\SPDOfflineUD is sound
but only covers sync-preserving deadlocks~\cite{conf/pldi/TuncMPV23}.

\paragraph{Benchmarks and system setup.}

Our experiments are based on a large set of benchmark traces from
prior work~\cite{conf/pldi/TuncMPV23,10.1145/3503222.3507734}.%
%% MS: skip, traced are obtained from two sources.
%% \footnote{
%%   The traces from~\cite{conf/pldi/TuncMPV23} are part of
%%   an artifact, whereas those from~\cite{10.1145/3503222.3507734}
%%   were obtained separately.
%%   \bh{The latter needs clarification.}
%% }
We excluded five benchmarks (``RayTracer'', ``jigsaw'', ``Sor'', ``Swing'', ``eclipse''),
because we noticed that their traces are not well formed.
For example, locks are acquired by distinct threads with no release in between.
We suspect that inaccurate trace recording is at fault.
The issue has been confirmed by \Tunc et al.~\cite{conf/pldi/TuncMPV23}.

We conducted our experiments on an Apple M1 max CPU with 32GB of RAM
running macOS Monterey (Version 12.1).

\paragraph{Evaluation.}

%% table_benchmarks_time_details4_time_float_relative
\begin{table*}[p]
  \caption{
    \small
    \textbf{Deadlock warnings and running times}.
    Columns~2--5 contain the number of events, of threads,
    of memory locations, and of locks, respectively.
    Columns~6--13 show the number of deadlocks reported and running time for each candidate.
    Times include the time to both compute lock dependencies and identify deadlocks (i.e., Phases~(1) and~(2)).
    For \UD, times are in seconds, and rounded to the nearest hundreth.
    For \UDPWR, \UDTRW and \SPDOfflineUD, times are factors compared to \UD.
  }
  \label{tbl:benchmarks}
  {
  \small
  \setlength{\tabcolsep}{2.9pt} % Compress horizontally.
  \renewcommand{\arraystretch}{0.87} % Compress vertically.
\begin{tabular}{|r|r|r|r|r||r|r||r|r||r|r||r|r|}
 \hline
1 & 2 & 3 & 4 & 5 & 6 & 7 & 8 & 9 & 10 & 11 & 12 & 13
 \\
 \hline
\multirow{2}{*}{\Benchmark} &
\multirow{2}{*}{\EE} &
\multirow{2}{*}{\TT} &
\multirow{2}{*}{\MM} &
\multirow{2}{*}{\LL} &
\multicolumn{2}{c||}{\UD} &
\multicolumn{2}{c||}{\UDPWR} &
\multicolumn{2}{c||}{\UDTRW} &
\multicolumn{2}{c|}{\SPDOfflineUD} \\ \cline{6-7} \cline{8-9} \cline{10-11} \cline{12-13}
 & & & &
& \Cycles &  \Time
& \Cycles & \Time
& \Cycles & \Time
& \Cycles & \Time
 \\
 \hline
Deadlock & 28 & 3 & 3 & 2 & 1 & 0.00 & 0 & 1x & 0 & 1x & 0 & 1x  \\   \hline
NotADeadlock & 42 & 3 & 3 & 4 & 1 & 0.00 & 0 & 1x & 0 & 1x & 0 & 1x  \\   \hline
Picklock & 46 & 3 & 5 & 5 & 2 & 0.00 & 1 & 1x & 1 & 1x & 1 & 1x  \\   \hline
Bensalem & 45 & 4 & 4 & 4 & 2 & 0.00 & 1 & 1x & 1 & 1x & 1 & 1x  \\   \hline
Transfer & 56 & 3 & 10 & 3 & 1 & 0.00 & 0 & 1x & 0 & 1x & 0 & 1x  \\   \hline
Test-Dimminux & 50 & 3 & 8 & 6 & 2 & 0.00 & 2 & 1x & 2 & 1x & 2 & 1x  \\   \hline
StringBuffer & 57 & 3 & 13 & 3 & 1 & 0.00 & 1 & 1x & 1 & 1x & 1 & 1x  \\   \hline
Test-Calfuzzer & 126 & 5 & 15 & 5 & 1 & 0.00 & 1 & 1x & 1 & 1x & 1 & 1x  \\   \hline
DiningPhil & 210 & 6 & 20 & 5 & 1 & 0.00 & 1 & 1x & 1 & 1x & 1 & 1x  \\   \hline
HashTable & 222 & 3 & 4 & 2 & 0 & 0.00 & 0 & 1x & 0 & 1x & 0 & 1x  \\   \hline
Account & 617 & 6 & 46 & 6 & 3 & 0.00 & 0 & 1x & 0 & 1x & 0 & 1x  \\   \hline
Log4j2 & 1\mbox{K} & 4 & 333 & 10 & 0 & 0.00 & 0 & 1x & 0 & 1x & 0 & 1x  \\   \hline
Dbcp1 & 2\mbox{K} & 3 & 767 & 4 & 2 & 0.00 & 1 & 1x & 1 & 1x & 1 & 1x  \\   \hline
Dbcp2 & 2\mbox{K} & 3 & 591 & 9 & 1 & 0.01 & 0 & 1x & 0 & 1x & 0 & 1x  \\   \hline
Derby2 & 3\mbox{K} & 3 & 1\mbox{K} & 3 & 0 & 0.00 & 0 & 1x & 0 & 1x & 0 & 1x  \\   \hline
elevator & 222\mbox{K} & 5 & 726 & 51 & 0 & 0.50 & 0 & 1x & 0 & 1x & 0 & 1x  \\   \hline
hedc & 410\mbox{K} & 7 & 109\mbox{K} & 7 & 0 & 0.97 & 0 & 1x & 0 & 1x & 0 & 1x  \\   \hline
JDBCMySQL-1 & 436\mbox{K} & 3 & 73\mbox{K} & 10 & 2 & 1.02 & 0 & 1x & 0 & 1x & 0 & 1x  \\   \hline
JDBCMySQL-2 & 436\mbox{K} & 3 & 73\mbox{K} & 10 & 0 & 1.00 & 0 & 1x & 0 & 1x & 0 & 1x  \\   \hline
JDBCMySQL-3 & 436\mbox{K} & 3 & 73\mbox{K} & 12 & 8 & 1.02 & 1 & 1x & 1 & 1x & 1 & 1x  \\   \hline
JDBCMySQL-4 & 437\mbox{K} & 3 & 73\mbox{K} & 13 & 6 & 1.00 & 1 & 1x & 1 & 1x & 1 & 1x  \\   \hline
cache4j & 758\mbox{K} & 2 & 46\mbox{K} & 19 & 0 & 1.77 & 0 & 1x & 0 & 1x & 0 & 1x  \\   \hline
ArrayList & 3\mbox{M} & 801 & 121\mbox{K} & 801 & 4 & 5.94 & 4 & 4x & 4 & 4x & 4 & 1x  \\   \hline
IdentityHashMap & 3\mbox{M} & 801 & 496\mbox{K} & 801 & 1 & 6.12 & 1 & 4x & 1 & 4x & 1 & 1x  \\   \hline
Stack & 3\mbox{M} & 801 & 118\mbox{K} & 2\mbox{K} & 3 & 7.66 & 3 & 5x & 3 & 5x & 3 & 1x  \\   \hline
LinkedList & 3\mbox{M} & 801 & 290\mbox{K} & 801 & 4 & 7.85 & 4 & 3x & 4 & 3x & 4 & 1x  \\   \hline
HashMap & 3\mbox{M} & 801 & 555\mbox{K} & 801 & 1 & 7.85 & 1 & 3x & 1 & 3x & 1 & 1x  \\   \hline
WeakHashMap & 3\mbox{M} & 801 & 540\mbox{K} & 801 & 1 & 7.97 & 1 & 3x & 1 & 3x & 1 & 1x  \\   \hline
Vector & 3\mbox{M} & 3 & 14 & 3 & 1 & 8.78 & 1 & 1x & 1 & 1x & 1 & 1x  \\   \hline
LinkedHashMap & 4\mbox{M} & 801 & 617\mbox{K} & 801 & 1 & 9.62 & 1 & 2x & 1 & 3x & 1 & 1x  \\   \hline
montecarlo & 8\mbox{M} & 3 & 850\mbox{K} & 2 & 0 & 18.63 & 0 & 1x & 0 & 1x & 0 & 1x  \\   \hline
TreeMap & 9\mbox{M} & 801 & 493\mbox{K} & 801 & 1 & 20.73 & 1 & 1x & 1 & 1x & 1 & 1x  \\   \hline
hsqldb & 20\mbox{M} & 46 & 945\mbox{K} & 402 & 0 & 49.82 & 0 & 1x & 0 & 1x & 0 & 1x  \\   \hline
sunflow & 21\mbox{M} & 15 & 2\mbox{M} & 11 & 0 & 53.26 & 0 & 1x & 0 & 1x & 0 & 1x  \\   \hline
jspider & 22\mbox{M} & 11 & 5\mbox{M} & 14 & 0 & 56.28 & 0 & 1x & 0 & 1x & 0 & 1x  \\   \hline
tradesoap & 42\mbox{M} & 236 & 3\mbox{M} & 6\mbox{K} & 2 & 114.17 & 0 & 1x & 0 & 1x & 0 & 1x  \\   \hline
tradebeans & 42\mbox{M} & 236 & 3\mbox{M} & 6\mbox{K} & 2 & 114.26 & 0 & 1x & 0 & 1x & 0 & 1x  \\   \hline
TestPerf & 80\mbox{M} & 50 & 598 & 8 & 0 & 173.71 & 0 & 1x & 0 & 1x & 0 & 1x  \\   \hline
Groovy2 & 120\mbox{M} & 13 & 13\mbox{M} & 10\mbox{K} & 0 & 308.10 & 0 & 1x & 0 & 1x & 0 & 1x  \\   \hline
tsp & 307\mbox{M} & 10 & 181\mbox{K} & 2 & 0 & 876.75 & 0 & 1x & 0 & 1x & 0 & 1x  \\   \hline
lusearch & 217\mbox{M} & 10 & 5\mbox{M} & 118 & 0 & 597.37 & 0 & 1x & 0 & 1x & 0 & 1x  \\   \hline
biojava & 221\mbox{M} & 6 & 121\mbox{K} & 78 & 0 & 595.36 & 0 & 1x & 0 & 1x & 0 & 1x  \\   \hline
graphchi & 216\mbox{M} & 20 & 25\mbox{M} & 60 & 0 & 615.90 & 0 & 1x & 0 & 1x & 0 & 1x  \\   \hline
 \hline
\Sum & 1354\mbox{M} & 7\mbox{K} & 61\mbox{M} & 30\mbox{K} & 55 & 3663.45 & 27 & 1x & 27 & 1x & 27 & 1x
 \\  \hline   \end{tabular}
}
\end{table*}

\Cref{tbl:benchmarks} contains all benchmark results.
It reports the running time and the number of deadlocks
reported for each candidate.

\paragraph{Precision.}

As expected, the number of deadlocks reported decreases
when comparing \UD against \UDPWR and \UDTRW.
\UD reports 55 deadlocks overall,
whereas \UDPWR, \UDTRW  and \SPDOfflineUD report 27 deadlocks overall.
\UDPWR and \UDTRW report the exact same deadlocks.
That is, each deadlock reported is the same instance resulting
from a cyclic chain of abstract lock dependencies.
Thus, we can conclude:
\begin{itemize}
   \item All deadlocks reported by \UDPWR\ are true positives, because \UDTRW is sound.
   \item \UDTRW has no false negatives, because \UDPWR is complete.
\end{itemize}

Benchmark ``Groovey2'' contains TRW-unbounded critical sections
and the benchmark ``hsqldb'' is ill nested (cf.\ \cref{s:sound:sound}).
It is straightforward to modify Phase~(1) to check for these conditions; for brevity, we omit details.
As there are no deadlocks reported for both these benchmarks, soundness of the results of \UDTRW is not affected.

For almost all benchmarks, Conditions~\cond{DP-PWR} and~\cond{DP-TRW}
are crucial for eliminating false positives.
Benchmark ``Picklock'' is the only exception:
Condition~\cond{DP-Block} was needed to eliminate a false positive.

\SPDOfflineUD and \UDTRW report similar deadlocks.
The reported deadlocks result from the same cyclic chains of abstract lock dependencies,
but some of the reports by \UDTRW are not sync-preserving.
However, this should not be viewed as evidence that \emph{most} deadlocks
in practice are sync-preserving, as the benchmark traces do not necessarily represent the full spectrum of concurrency patterns in modern programming (in fact, the benchmark traces were obtained
from programs that are several years old).

\paragraph{Performance.}

The overall running times of \UDPWR, \UDTRW and \SPDOfflineUD
are in the same range as \UD.
For seven benchmarks (``ArrayList''--``WeakHashMap'', ``LinkedHashMap'')
we encounter an increase by a factor of two to five
for \UDPWR and \UDTRW compared to \UD and \SPDOfflineUD.
This increase is due to the large number of 801~threads.
The computation of PWR and TRW vector clocks requires tracking conflicts
among critical sections in different threads.
The more threads the more conflict management takes place, leading to
some overhead in Phase~(1).
Despite a similarly large number of threads (256--801),
there is no increase for benchmarks ``TreeMap'', ``tradesoap'', and ``tradebeans''.
This is due to fewer conflicts among critical sections.

The running times of \UDPWR\ and \UDTRW are effectively the same,
except for benchmark~``LinkedHashMap''.
The difference of a factor of two (\UDPWR) versus a factor of three~(\UDTRW)
is due to the fact that \UDTRW has to additionally deal with write-write and read-write conflicts (Condition~\cond{PWR-LW} versus Condition~\cond{TRW-Conf}).
As ``LinkedHashMap'' has a large number of memory locations (617K),
the running time of \UDTRW is slightly higher due to more management and
synchronization of conflicting memory operations.

Overall, \SPDOfflineUD runs a bit faster than \UDPWR and \UDTRW.
The main reason is that in Phase~(1), \SPDOfflineUD computes
\SPDVectorClocks vector clocks that are not affected by large numbers of threads.
On the other hand, Phase~(2) of \SPDOfflineUD builds
sync-preserving closures, whereas \UDTRW only compares
vector clocks (see \cref{ln:dp-trw-okay} in \cref{alg:dp-trw-block}).
This difference entails a complexity difference of $O(\THREADNO)$ for \UDTRW versus $O(\TRACELEN)$ for \SPDOfflineUD, due to which we may expect \UDTRW to run faster than \SPDOfflineUD (usually, $\TRACELEN$ is much bigger than $\THREADNO$).
However, table 2, which splits running times into Phase~(1) and Phase~(2), show that this does not affect the benchmark traces: the running times of Phase~(2) are neglible compared to Phase~(1).

%% MS: fun fact,
%% For Grovvey2, detection of cycles is faster for PWR,
%% cause thanks to DP-3, we can early abandon some infeasible cycle.
%% To be discussed somewhere else.
%

%%table_benchmarks_time_details_time_vcd_no_evt_stats
\begin{table*}[t]
  \caption{
    \textbf{Further details: running times of Phases~(1) and~(2) and numbers of concrete lock dependencies.}
      Columns~2---9 contain the number of deadlocks reported, of acquires that
      lead to a lock dependency (the same for \UDTRW and \SPDOfflineUD), and running time for each candidate.
      Times are rounded in seconds reported for Phases~(1) and~(2) separately.
      %% MS: not true, there are some differences but overall the running times are pretty much the same.
      %% Differences in Phase~(1) running times between \UDTRW and \UDTRWEvict are circumstantial (there is no difference in implementation) and should be disregarded.
  }%
  \label{tbl:details}%
{%
  \small%
  \setlength{\tabcolsep}{3.2pt}% % Compress horizontally.
\begin{tabular}{|r||r|r|r||r|r||r|r|r|}
 \hline
1 & 2 & 3 & 4 & 5 & 6 & 7 & 8 & 9
 \\
 \hline
\multirow{2}{*}{\Benchmark} &
\multicolumn{3}{c||}{\UDTRWEvict} &
\multicolumn{2}{c||}{\UDTRW} &
\multicolumn{3}{c|}{\SPDOfflineUD} \\ \cline{2-4} \cline{5-6} \cline{7-9}
& \Cycles &  \Dependencies & \Time \ (\PhaseOne+\PhaseTwo)
& \Cycles & \Time \ (\PhaseOne+\PhaseTwo)
& \Cycles & \Dependencies & \Time \ (\PhaseOne+\PhaseTwo)
 \\
 \hline
\ldots & \ldots & \ldots & \ldots  & \ldots & \ldots & \ldots & \ldots & \ldots  \\   \hline
Vector & 1 & 3 & 11 (11+0) & 1 & 10 (10+0) & 1 & 200\mbox{K} & 9 (9+0)  \\   \hline
%% jspider & 0 & 158 & 72 (72+0) & 0 & 73 (73+0) & 0 & 2\mbox{K} & 72 (72+0)  \\   \hline
tradesoap & 0 & 9\mbox{K} & 173 (166+6) & 0 & 174 (168+6) & 0 & 40\mbox{K} & 163 (157+6)  \\   \hline
%% TestPerf & 0 & 0 & 196 (196+0) & 0 & 199 (199+0) & 0 & 0 & 192 (192+0)  \\   \hline
Groovy2 & 0 & 11\mbox{K} & 380 (379+1) & 0 & 386 (385+1) & 0 & 29\mbox{K} & 372 (371+1)  \\   \hline
%% tsp & 0 & 0 & 992 (992+0) & 0 & 989 (989+0) & 0 & 0 & 997 (997+0)  \\   \hline
%% lusearch & 0 & 87 & 719 (719+0) & 0 & 723 (723+0) & 0 & 41\mbox{K} & 719 (719+0)  \\   \hline
%%biojava & 0 & 89 & 656 (656+0) & 0 & 656 (656+0) & 0 & 545 & 661 (661+0)  \\   \hline
%% \hline
\Sum & 27 & 33\mbox{K} & 4533 (4518+14) & 27 & 4546 (4531+15) & 27 & 626\mbox{K} & 4348 (4333+15)
\\  \hline   \end{tabular}%
}%
\end{table*}

\paragraph{Reducing the number of concrete lock dependencies.}

As also observed in \cite{conf/pldi/TuncMPV23},
the number of concrete lock dependencies can be huge while
the number of abstract lock dependencies remains small.
Recall trace~$T_{10}$ in \cref{fig:ld-abstract}, where
subtrace $[e_{11},...,e_{14}]$ might result from the body of a loop.
In further loop iterations, further entries
will be added to the abstract lock dependency~$\LDMap{\thread{1}}{\LKB}{\{\LKA\}}$.
The number of entries (i.e., concrete lock dependency) increases,
whereas the number of abstract lock dependencies remains the same.

In our approach, we can aggressively remove `duplicates' of concrete lock dependencies.
We consider two entries as duplicates if no inter-thread synchronization took place between
processing the respective acquires.
Such an optimization is not possible for \SPDOfflineUD, because the trace order
of concrete lock dependencies matters (for sync-preservation):
removing a concrete lock dependency may cause a deadlock pattern to no longer be sync-preserving.

We implemented a variant of \UDTRW, referred to as \UDTRWEvict, where duplicates are removed.
\Cref{tbl:details} compares these variants to each other and to \SPDOfflineUD by including separately the running time of Phases~(1) and~(2);
for brevity, we only detail a few selected cases but include the totals over all benchmark traces.
The table shows that the number of concrete lock dependencies
can be reduced substantially.
For example, in case of benchmark ``Vector'' the number of concrete lock dependencies
is reduced from 200K to three.
We might expect that fewer concrete lock dependencies causes Phase~(2) of \UDTRWEvict to
run faster, because the loop in
function~\textsc{checkTRW}~(\cref{alg:dp-trw-block}) needs to consider fewer candidates.
However, the measurements in \cref{tbl:details} show that this is not the case in practice:
\UDTRW and \UDTRWEvict run equally fast.
We believe that this is due to the few and rather simple deadlocks in our benchmark suite.

\paragraph{Deadlocks reported in \cite{conf/pldi/TuncMPV23} and trace anomalies.}

The number of deadlocks reported in
\cref{tbl:benchmarks}
differ to that in~\cite[Table~1]{conf/pldi/TuncMPV23} for \SPDOffline.
For example, for ``ArrayList'',
\SPDOfflineUD reports four deadlocks while
\SPDOffline only reports three.
We believe that this is caused by a difference in handling lock-request versus lock-acquire events:

\begin{enumerate}
  \item\label{it:UD}
    Our implementations (derived from UNDEAD) treats the vector clock of a thread just before processing an acquire event to correspond to the corresponding request event, effectively allowing us to
    ignore all request events.
  \item\label{it:SPD}
    \SPDOffline assumes that every acquire event is preceded by a request
    event; based on our own knowledge and correspondence with the authors of~\cite{conf/pldi/TuncMPV23}, \SPDOffline uses these request events explicitly.
\end{enumerate}

Both approaches have advantages and disadvantages.
Some (prematurely-ended) traces may end in request events not followed by a corresponding acquire event.
Hence, Approach~\labelcref{it:SPD} may identify more cycles than Approach~\labelcref{it:UD};
this seems to be the case for ``JDBCMySQL-2'' (based on a comparison of \cref{tbl:benchmarks} and \cite[Table~1]{conf/pldi/TuncMPV23}).
However, not every acquire event is preceded by a request event, so Approach~\labelcref{it:UD} may report more cycles;
this applies to trace~``ArrayList''.

In fact, we even encountered a trace (``Log4j2'')
where there is a matching request-release pair of events without an acquire event between.
Moreover, we encountered some ill-formed traces, where a lock has been acquired by two distinct threads
without being released in between.
This explains the difference in number of deadlocks reported between \cref{tbl:benchmarks} and \cite[Table~1]{conf/pldi/TuncMPV23}.

%%% Local Variables:
%%% mode: latex
%%% TeX-master: "main"
%%% End:

%--------------------------------------------------------
%--------------------------------------------------------
\section{Related Work}
\label{sec:related}

\paragraph{Lockset-based dynamic resource-deadlock analysis.}

The idea of identifying deadlocks via circularity in the lock-order dependency
of threads dates back to the work by %
Havelund~\cite{10.5555/645880.672085} and
Harrow~\cite{DBLP:conf/spin/Harrow00}.
In subsequent work,
Bensalem and Havelund~\cite{10.1007/11678779_15} introduce lock-order graphs
that capture the lock-order dependency for threads,
where deadlock analysis reduces to checking for cycles.
Extensions of lock-order graphs to avoid false positives if a circular lock-order
dependency takes place within a single thread or is protected by a common lock
are discussed by
Agarwal et al.~\cite{5571951}.

Instead of lock-order graphs,
Joshi et al.~\cite{10.1145/1542476.1542489}
introduce lock dependencies on a per-thread basis.
%% MS: shorter
%% of the form $\LD{t}{l}{ls}$ of the form
%% we also consider,
%% where $t$ is the thread of the acquire operation for lock $l$
%% and $ls$ is the lock set of thread $t$ at this point.
%% Lock dependencies form a cycle if all thread ids are distinct,
%% all lock sets are disjoint, and lock dependencies can be ordered such that
%% each acquired lock $l$ is in the lock set of the following lock dependency.
The advantage compared to lock-order graphs is that common locks or singular threads
can be easily detected.
Several works~\cite{Samak:2014:TDD:2692916.2555262,conf/ase/ZhouSLCL17,6718069}
improve on this idea, e.g., by using an efficient representation
for lock dependencies and/or ignoring impossible cyclic chains
due to fork-join dependencies.
Our experiments show that considering fork-join dependencies
only avoids some false positives, but a significant number remains.

\paragraph{Sound dynamic resource-deadlock analysis.}

Lockset-based analysis methods are prone to false positives.
One way to eliminate false positives is to re-run the trace
(and/or the program)
to verify that a deadlock exists (e.g.~\cite{10.1007/978-3-319-23404-5_13,10.1145/1542476.1542489,Samak:2014:TDD:2692916.2555262}).

%% PickLock by Serrentino~\cite{10.1007/978-3-319-23404-5_13}.
%%
%% Does some rescheduling derived from
%% "PENELOPE: weaving threads to expose atomicity violations."
%%
%% Apparently includes happens-before information to eliminate false positives.
%% Then, the formalization does not cover fork/join dependencies, nor reads/writes.
%% Seems to use the standard lockset construction.

Kalhauge and Palsberg~\cite{conf/oopsla/KalhaugeP18} rely on an SMT-solver for exhaustive trace
exploration to eliminate false positives among deadlock patterns, but
the use of SMT-solving may severely impact performance~\cite{conf/pldi/TuncMPV23}.
Also, false positives may arise if
a request is guarded by a lock in another thread (cf., e.g., the example
in \cref{fig:guarded}, adapted from \Tunc et al.~\cite{conf/pldi/TuncMPV23}).
Our soundness result (\cref{thm:TRWSound}) explicitly excludes such traces.

%% \ms{just focus on performance issues}
%% \texttt{Dirk} is claimed to be sound.
%% The work by~\citet{conf/pldi/TuncMPV23} disputes the soundness claim
%% and gives a program for which \texttt{Dirk} yields a false positive.
%% The trace resulting from this program effectively corresponds to ... %% \cref{fig:ex2}.
%% The formalization on which \texttt{Dirk} is based
%% employs the standard lock set construction and does not consider guard locks.
%% \ms{should somewhere mention guard locks, maybe in conclusion, maybe mention the Mathur ``guard'' lock example, excluded by our theory, doesn't seem to arise in practice ...}
%% Hence, the false positive reported for \texttt{Dirk} is not due to an
%% implementation error but because the underlying theory does not cover
%% all possible lock acquisition scenarios.

\Cref{sec:experiments} contains an extensive discussion of the relation
to \SPDOffline~\cite{conf/pldi/TuncMPV23}.

SeqCheck~\cite{conf/fse/CaiYWQP21}
is similar to \SPDOffline, also employing a closure construction
that relies on a partial order to eliminate infeasible deadlock patterns.
The SeqCheck partial order does not impose last-write dependencies
but a weaker form of observation order,
so further checks are required during the closure construction.
According to \Tunc et al.~\cite{conf/pldi/TuncMPV23},
SeqCheck infers almost the same deadlocks as \SPDOffline,
but the running time is significantly higher.
We were not able to include SeqCheck in our own measurements, because our requests to access the artifact were not answered.
%% MS: I wish we could say this
%% A full reimplementation turned out to be difficult
%% as the technical description~\cite{conf/fse/CaiYWQP21}
%% appears to be very dense and details are missing.

Ang and Mathur~\cite{DBLP:conf/cav/AngM24} discuss a novel approach to predictive
monitoring and its application to race and deadlock
detection, based on Mazurkiewicz's trace equivalence rather than
classical partial-order techniques.
The idea is to match critical
patterns like conflicting events or deadlock patterns against a trace
while assuming commutativity of all events except those that are
marked dependent (like those in the same thread, operating on the same
lock, etcetera).
As standard trace equivalence is unsuitable for predicting
deadlocks, they introduce two refinements:
strong trace prefixes and
strong reads-from prefixes.
With these refinements, they obtain
algorithms for detecting sync-preserving data races and
deadlocks.
Their experimental results for deadlocks match those of
previous work~\cite{conf/pldi/TuncMPV23,10.1145/3434317}, but with
substantial slowdown~\cite[Table~3]{DBLP:conf/cav/AngM24}.

\paragraph{Partial-order methods for dynamic data race prediction.}

Starting with Lamport's Happens-Before relation~\cite{lamport1978time},
the literature offers a wide range of partial orders~\cite{Smaragdakis:2012:SPR:2103621.2103702,conf/pldi/KiniMV17,Mathur:2018:HFR:3288538:3276515,10.1145/3360605}.
As discussed in detail in \cref{sec:overview},
none of these partial orders is suitable for sound deadlock prediction,
motivating the introduction of the new TRW partial order.
Like in the case of data-race prediction, establishing soundness is a non-trivial task.
Our soundness proof shares  some similarities with the WCP soundness proof
in that we argue that either (a)~requests in a deadlock pattern can be placed
next to each other, or (b)~there must be a blocking cycle.
The details differ and are specific to TRW and the deadlock-prediction setting.

% \ms{maybe say something about PWR here}

%% MS: omit, don't know much about this work
%% \citet{7423814}
%% More on capture and replay.
%% Use the should happen-before relations to enforce a certain schedule.

%% MS: omit
%%
%%
%%
%%
%% Airlock by Cai, Meng and Palsberg~\cite{10.1145/3377811.3380367}.
%% Uses lock graphs. Online cycle search.
%% Maintains ``small predictive reachability graph''.
%% \ms{note} seems to be a data race in their algo.
%%
%%
%% \ms{note}
%% \cite{kalhauge2018sound} uses ``request'' events (like we do in PPDP'18 where it's called pre events).
%% Don't think this adds much to their work.
%% One benefit is that they explicitly reason about a blocked execution paths (threads end up with
%% a request event). It seems that's what they advertise in their paper.
%%
%% Another benefit (not mentioned) is that we can uncover additional lock dependencies.
%% Consider the following example.
%%
%% \begin{verbatim}
%%     T1              T2
%%
%%    acq(x)
%%    pre_acq(y)
%%                    acq(y)
%%                    pre_acq(x)
%% \end{verbatim}
%%
%% Other than that there seems not much point for ``request'' events.

%%% Local Variables:
%%% mode: latex
%%% TeX-master: "main"
%%% End:

\section{Conclusions}
\label{sec:conclusions}

This paper considered the application of partial-order methods
to  eliminate false positives often reported by traditional lockset-based deadlock predictors.
Inspired by dynamic data-race prediction,
key ingredients for our refined deadlock
patterns are concurrency of lock acquisitions and absence of earlier blocking deadlock patterns.
We established soundness under
our novel TRW partial order and
completeness under the slightly weaker PWR.

We implemented our approach in an offline version of
the UNDEAD deadlock predictor.
Our experimental results demonstrate the effectiveness and
precision  of our approach, based on a sizeable set of
pre-recorded program traces.
Avenues for future work are improving the efficiency of the implementation
for large numbers of threads,
as well as exploring ways to allow for
unbounded critical sections without introducing false positives.

% We show how to apply partial-order methods known from dynamic data race prediction to
% eliminate false positives that are reported by existing lockset-based deadlock predictors.
% We establish soundness and completeness for the theoretically ideal partial order, and
% present a practical implementation that efficiently approximates the theoretical ideal.
% Our experiments show that our approach scales and leads to high precision in practice.

% \ms{todo}
% Should say something about further instances (future work).
% PWR variants that are more precise.

% Partial orders that overapproximate.
% See TWR.

% Dealing with guard locks.
% Though does not seem to arise in practice for the benchmarks
% we consider, it is an important topic to be explored.

%%% Local Variables:
%%% mode: latex
%%% TeX-master: "main"
%%% End:

\clearpage

\appendix

\section{Proofs}
\label{sec:proofs}

\begin{lemma}
  \label{lem:DPLt}
  Given well-formed and well-nested trace $T$, $\DPLt[T]$ is a strict partial order.
\end{lemma}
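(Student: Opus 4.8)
The relation $\DPLt[T]$ is a strict partial order exactly when it is irreflexive and transitive (these two properties together imply asymmetry). The plan is to establish each property separately, keeping in mind that $\DPLt[T]$ is defined on \emph{cycles}, so I may use only Condition~\cond{DP-Cycle} and the structural constraints (acquire-request pairs, $n \ge 2$, requests in pairwise-distinct threads), together with well-formedness and well-nestedness of $T$.

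For irreflexivity, suppose towards a contradiction that $A \DPLt[T] A$. By definition every pair $(a,q) \in A$ has a witness $(a',q') \in A$ with $q \TrLt[T] q'$; in particular every request of $A$ is strictly preceded in trace order by another request of $A$. Since a cycle is finite and nonempty and distinct events occupy distinct trace positions, I would pick the pair $(a,q) \in A$ whose request $q$ has \emph{maximal} trace position. Its witness would then have a request strictly later than $q$, contradicting maximality. Hence $A \not\DPLt[T] A$.

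For transitivity, assume $A \DPLt[T] B$ and $B \DPLt[T] C$; I would prove $A \DPLt[T] C$ by composing the two witness maps. Fix $(a,q) \in A$, take its witness $(a',q') \in B$ (so $q \TrLt[T] q'$ and $a \in \AH[T](q')$), and then the witness $(a'',q'') \in C$ of $(a',q')$ (so $q' \TrLt[T] q''$ and $a' \in \AH[T](q'')$). The natural candidate witness for $(a,q)$ in $C$ is $(a'',q'')$: the request condition $q \TrLt[T] q''$ is immediate by transitivity of trace order, so everything reduces to showing $a \in \AH[T](q'')$. A helpful preliminary is that the witness map preserves threads: $a \in \AH[T](q')$ gives $q' \in \CS[T](a)$ and hence $\thd(a) = \thd(q')$, while $q \in \CS[T](a)$ gives $\thd(q) = \thd(a)$, so $\thd(q) = \thd(q')$, and likewise $\thd(q') = \thd(q'')$. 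Thus $a,a',a'',q,q',q''$ all live in one thread, and the argument localizes entirely to that thread's well-nested critical-section structure. Within that thread I know $a$ is held at $q'$, that $a'$ is held at $q'$ (because $(a',q') \in B$ forces $q' \in \CS[T](a')$), and that $a'$ is held at the later request $q''$. Since $a$ and $a'$ are simultaneously held at $q'$, their critical sections are nested; I would argue that $a$ is the \emph{outer} one (its acquire precedes $a'$ in program order), so that by well-nestedness $a$'s matching release occurs only after $a'$'s, and since $a'$ is still held at $q''$ the lock of $a$ is not released before $q''$ either, giving $a \in \AH[T](q'')$.

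I expect this last persistence step --- showing that the lock held by $a$ is not released strictly between $q'$ and $q''$ --- to be the main obstacle. It is the only point at which well-nestedness is genuinely needed, and it is delicate precisely because it hinges on pinning down the nesting order of $a$ and $a'$ inside the shared thread: one must rule out that $a$ is the inner critical section and closes before $q''$ while $a'$ remains open. Establishing the correct nesting order from the ordering conditions, via careful bookkeeping of matching acquire/release pairs (Conditions~\cond{WF-Acq/-Rel}), is where the technical weight lies; by contrast, irreflexivity and the derivation of $q \TrLt[T] q''$ are routine.
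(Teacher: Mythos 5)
Your overall architecture matches the paper's: reduce to irreflexivity and transitivity (the paper proves asymmetry separately, but you are right that it is redundant given the other two), and for transitivity compose the two witness maps, show that all six events collapse into a single thread, and then use well-nestedness to propagate ``$a$ is held'' from $q'$ to $q''$. Your irreflexivity argument via a maximal-position request is a valid and arguably simpler alternative to the paper's, which instead uses that the requests of a cycle lie in pairwise distinct threads to force $q = q'$.

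There is, however, a genuine gap in your transitivity argument, and it sits exactly where you place the ``technical weight'': the claim that $a$ is the \emph{outer} of the two critical sections open at $q'$. Nothing you have derived yields this. From the hypotheses you only know that $a$ and $a'$ are both held at $q'$, so by well-nestedness one critical section encloses the other --- but the definition of $\DPLt[T]$ imposes no constraint on the relative program order of $a$ and $a'$. If $a'$ were acquired first with $a$ nested inside it, well-nestedness would force $a$'s matching release to occur \emph{before} $a'$'s, and that release may fall between $q'$ and $q''$ while $a'$ remains held; then $a \notin \AH[T](q'')$ and the candidate witness $(a'',q'')$ fails. You explicitly acknowledge that this configuration must be excluded, but you do not exclude it, so the proof is incomplete at its decisive step. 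The paper closes this hole by invoking Condition~\cond{WF-Req} to first establish $a \in \AH[T](a')$ (i.e., that $a'$ lies inside $a$'s critical section) and only then applies well-nestedness; any completion of your argument needs an analogous derivation of the nesting direction from the well-formedness conditions, not a deferral to ``careful bookkeeping''.
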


\begin{proof}
  \leavevmode
  \begin{description}
    \item[Irreflexivity]
      Take any DP $A$ in $T$.
      Towards contradiction, suppose $A \DPLt[T] A$.
      Take any $(a,q) \in A$.
      There exists $(a',q') \in A$ such that $q \TrLt[T] q'$ and $a \in \AH[T](q')$.
      Since also $a \in \AH[T](q)$, then $\thd(q) = \thd(q') = \thd(a)$.
      But if $q \ne q'$, we must have $\thd(q) \ne \thd(q')$.
      Hence, $q = q'$, contradicting $q \TrLt[T] q'$.
    \item[Asymmetry]
      Take any DPs $A,B$ in $T$ such that $A \DPLt[T] B$.
      Towards contradiction, suppose $B \DPLt[T] A$.
      Take any $(a,q) \in A$.
      There exists $(a',q') \in B$ such that $q \TrLt[T] q'$ and $a \in \AH[T](q')$.
      Since also $a \in \AH[T](q)$, then $\thd(q) = \thd(q') = \thd(a)$.
      Similarly, there exists $(a'',q'') \in A$ such that $q' \TrLt[T] q''$ and $a' \in \AH[T](q'')$.
      Hence, $\thd(q) = \thd(q') = \thd(q'') = \thd(a')$.
      Similar to the previous subproof, then $q = q''$, contradicting $q \TrLt[T] q' \TrLt[T] q''$.
    \item[Transitivity]
      Take any DPs $A,B,C$ in $T$ such that $A \DPLt[T] B \DPLt[T] C$.
      Take any $(a,q) \in A$.
      There exists $(a',q') \in B$ such that $q \TrLt[T] q'$ and $a \in \AH[T](q')$.
      Similarly, there exists $(a'',q'') \in C$ such that $q' \TrLt[T] q''$ and $a' \in \AH[T](q'')$.
      Similar to the previous subproofs, we have $\thd(a) = \thd(q) = \thd(a') = \thd(q') = \thd(a'') = \thd(q'')$.
      By Condition~\cond{WF-Req}, $a \in \AH[T](a')$, so by well nestedness, $a \in \AH[T](q'')$.
      By transitivity, $q \TrLt[T] q''$.
      Hence, $A \DPLt[T] C$.
      \qedhere
  \end{description}
\end{proof}

\begin{lemma}
  \label{l:mustTRWAcq}
  Suppose given well-formed trace $T$, and event $e \in T$ and acquire $a \in T$.
  Assume that, in any $T' \in \crp(T)$ where $e$ is the last event in its thread, $a \TrLt[T'] e$.
  Then $a \TRWLt[T] e$.
\end{lemma}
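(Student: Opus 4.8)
The plan is to prove the contrapositive: assuming $a \not\TRWLt[T] e$, I would build a witness $T' \in \crp(T)$ in which $e$ is the last event in its thread yet $a \not\TrLt[T'] e$, directly contradicting the hypothesis. (We may take $a \ne e$, since if $a = e$ the hypothesis already fails on any prefix containing $e$ as a last event.) The natural candidate is the TRW-downward closure $D = \{ f \in \evts(T) \mid f \TRWLt[T] e \} \cup \{ e \}$, linearized by trace order $\TrLt[T]$. First I would record that $\TRWLt[T]$ is contained in $\TrLt[T]$: each of \cond{TRW-PO}, \cond{TRW-Conf}, \cond{TRW-Rel} points forward in trace order (for \cond{TRW-Rel}, well formedness puts the release $r_1$ before $a_2 \TrLt f$), and \cond{TRW-Tr} preserves this. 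Hence $\TRWLt[T]$ is a strict partial order and $D$ is acyclic.

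Next I would dispatch the ``easy'' requirements on this linearization. For \cond{CRP-PO}: if $f \in D$ and $g \POLt[T] f$, then $g \TRWLt[T] f \TRWLt[T] e$ by \cond{TRW-PO} and \cond{TRW-Tr}, so $D$ is closed under program-order predecessors and each thread's projection is a genuine prefix. For \cond{CRP-LW}: any two writes on a variable conflict, so by \cond{TRW-Conf} all accesses to a variable are totally ordered by $\TRWLt[T]$ in trace order; hence for every read in $D$ its $T$-last write lies in $D$ and remains its last write under any $\TRWLt[T]$-respecting order. That $e$ is last in its thread follows because every program-order successor $f$ of $e$ satisfies $e \TRWLt[T] f$, so by acyclicity $f \notin D$; and $a \notin D$ because $a \not\TRWLt[T] e$ and $a \ne e$, giving $a \not\TrLt[T'] e$.

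The real work, and the main obstacle, is well formedness of the lock discipline, \cond{WF-Acq}: for two acquires $a_1 \TrLt[T] a_2$ on a common lock, the release $r_1$ matching $a_1$ must sit between them. When $a_1 \TRWLt[T] a_2$ this is exactly what \cond{TRW-Rel} delivers: taking $f = a_2 \in \CS[T](a_2)$ yields $r_1 \TRWLt[T] a_2 \TRWLt[T] e$ (or $a_2 = e$), so $r_1 \in D$ and, by trace order, is correctly placed. The genuinely hard case is a pair of \emph{TRW-incomparable} same-lock acquires in $D$ whose matching releases both fall outside $D$: then $D$ carries two ``open'' critical sections on one lock and cannot be realized as a well-formed trace at all. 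I would resolve this by \emph{augmenting} $D$ with the missing matching releases (and, cascading downward, the program-order and last-write predecessors these drag in), using that such additions are trace-bounded so the process terminates, preserves \cond{CRP-LW}, and keeps $e$ last in its thread. The delicate point I expect to consume most of the argument is showing the cascade never forces the acquire $a$ before $e$: an acquire is never needed to \emph{close} a critical section, and the only way $a$ could be pulled in is as a program-order predecessor of some added event $g$, whence $a \TRWLt[T] g$. Bounding this cascade so that it never reaches $a$ is where the well-nestedness hypothesis of the soundness setting does the work, after which the augmented, trace-ordered $D$ is the desired $T'$.
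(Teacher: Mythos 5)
Your strategy is genuinely different from the paper's: you build a canonical witness (the TRW-downward closure $D$ of $e$, linearized by trace order, then patched for lock well-formedness), whereas the paper argues by contradiction about \emph{why} $a$ would be forced before $e$ in every witness, reduces the only possible forcing reason to a chain of same-lock critical sections, and inducts on the length of that chain. Your preliminary steps are sound: $\TRWLt[T]$ is contained in $\TrLt[T]$, $D$ is program-order downward closed, last writes are preserved, $e$ stays last in its thread, $a \notin D$, and the case of same-lock acquires $b_1 \TrLt[T] b_2$ with $b_1 \TRWLt[T] b_2$ is correctly discharged by \cond{TRW-Rel}. You also correctly isolate the hard case: TRW-incomparable same-lock acquires in $D$ whose matching releases lie outside $D$.

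The gap is in how you resolve that hard case. Consider the trace $b_1 = \acqE{l}$, $\writeE{x}$, $a = \acqE{m}$, $\relE{m}$, $r_1 = \relE{l}$ in thread $\thread1$; then $b_2 = \acqE{l}$, $\writeE{y}$, $\relE{l}$ in thread $\thread2$; then $\readE{x}$, $\readE{y}$, $e$ in thread $\thread3$. Here $b_1$ and $b_2$ are TRW-incomparable, both lie in $D$ (via the write--read conflicts feeding into $\thread3$), and $r_1, a \notin D$. Your trace-order linearization places $b_1$ before $b_2$, so \cond{WF-Acq} forces you to insert $r_1$, and \cond{CRP-PO} then drags in its program-order predecessor $a$. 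The patched witness satisfies all conditions but has $a \TrLt[T'] e$, so it yields no contradiction with the hypothesis. Your safeguard fails exactly here: $a$ is pulled in as a predecessor of $r_1$, and $a \TRWLt[T] r_1$ gives nothing because $r_1 \not\TRWLt[T] e$; well-nestedness does not prevent $a$ from sitting inside the critical section whose release you are forced to add (and is not a hypothesis of this lemma in any case --- only \cref{l:mustTRWRel} assumes it). The only way out in this configuration is to abandon trace order and schedule the needed prefix of $\CS[T](b_2)$ \emph{before} $b_1$ (which is what the valid witness excluding $a$ looks like in the example above). Proving that such a swap is always achievable and never cascades into a fresh unorderable pair is precisely the content of the paper's induction on chains of critical sections, and it is the piece missing from your argument.
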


\begin{proofsketch}
  Towards contradiction, suppose $a \not\TRWLt[T] e$.
  For they would be TRW ordered otherwise, then $a$ and $e$ cannot appear in the same thread or be conflicting.
  Hence, there must be a chain of critical sections forcing $a$ to precede $e$ in any of the assume $T' \in \crp(T)$.
  We apply induction on the length of this chain.
  The inductive case is a simple extension of the base case by applying the IH.

  In the base case, there are two critical sections on the same lock, and they contain events that force an ordering.
  W.l.o.g., assume $a$ appears before some $f$, both in a CS between $a_1$ and $r_1$ on lock $l$, where $f \TRWLt[T] g$ and $g$ appears after $e$, both in another CS between $a_2$ and $r_2$ on lock $l$: we have $r_1 \TRWLt[T] g$, but not $r_1 \TRWLt[T] e$.
  However, since $g$ appears after $e$, $g$ does not appear in any of the assumed $T' \in \crp(T)$, so having to place $f$ before $g$ cannot be the reason why $a$ appears before $e$ in every such $T'$.
  Hence, the ordering must be due to $g$ appearing before $e$, so $a \TRWLt[T] e$.
\end{proofsketch}

\begin{lemma}
  \label{l:mustTRWRel}
  Suppose given well-formed and well-nested trace $T$, and event $e \in T$ and acquire~$a \in T$ with matching release $r \in T$.
  Assume that, in any $T' \in \crp(T)$ where $e$ is the last event in its thread,~$r \notin T'$.
  If $a \TRWLt[T] e$, then $e \TRWLt[T] r$.
\end{lemma}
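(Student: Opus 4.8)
The plan is to argue by contradiction: assume $a \TRWLt[T] e$ but $e \not\TRWLt[T] r$, and construct a witness $T' \in \crp(T)$ in which $e$ is the last event of its thread while $r \in T'$, contradicting the hypothesis. First I would dispose of the case $\thd(e) = \thd(a)$: there $a \TRWLt[T] e$ forces $a \POLt[T] e$ (program order is total on a thread), and the matching release satisfies $a \POLt[T] r$; if $r \POLt[T] e$ then $r$ would belong to every correctly reordered prefix in which $e$ is last (by \cond{CRP-PO}), contradicting the hypothesis, so $e \POLt[T] r$ and hence $e \TRWLt[T] r$ by \cond{TRW-PO}. The remaining, interesting case is $\thd(e) \neq \thd(a)$.

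The structural fact I would establish next is that every $g \in \CS[T](a)$ satisfies $e \not\TRWLt[T] g$. Since $\CS[T](a)$ runs contiguously from $a$ to its matching release $r$ (using well-nestedness), \cond{CS-PO} gives $a \POLeq[T] g \POLeq[T] r$, so $g \TRWLt[T] r$ by \cond{TRW-PO} whenever $g \neq r$. Were $e \TRWLt[T] g$, then \cond{TRW-Tr} would yield $e \TRWLt[T] r$, contradicting the standing assumption; the case $g = r$ is immediate. Thus no event of the critical section of $a$ is TRW-after $e$, which is precisely what allows the whole section to be scheduled around a prefix that keeps $e$ last.

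I would then take $T'$ to be the closure of $\{\, g \mid g \TRWLt[T] e \,\} \cup \{e\} \cup \CS[T](a)$ under the dependencies demanded by \cref{def:crp} — program-order predecessors, last writes of included reads (\cond{CRP-LW}), and the matching acquires/releases forced by lock well-formedness (\cond{CRP-WF}) — linearised compatibly with $\TRWLt[T]$, appending the $\TRWLt[T]$-concurrent remainder of $\CS[T](a)$ (up to and including $r$) after $e$. The driving invariant is that every event $h$ placed in $T'$ satisfies $e \not\TRWLt[T] h$: it holds initially by the claim above and irreflexivity, and is preserved by each closure step, because a last write $w$ of an included read $\rho$ has $w \TRWLt[T] \rho$ (\cond{TRW-Conf}) and a program-order predecessor $h'$ of $h$ has $h' \TRWLt[T] h$, so $e \TRWLt[T] w$ or $e \TRWLt[T] h'$ would reinstate $e \TRWLt[T] \rho$ or $e \TRWLt[T] h$ by transitivity. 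The invariant at once guarantees that no event of $\thd(e)$ lying after $e$ is ever added (such $h$ would give $e \POLt[T] h$, hence $e \TRWLt[T] h$), so $e$ stays last in its thread, while $r \in \CS[T](a) \subseteq T'$ delivers the contradiction.

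The main obstacle I expect is showing that this closure really is a correctly reordered prefix and not merely a TRW-respecting set of events. Concretely, one must check that the lock-well-formedness closure under \cond{WF-Acq} terminates while staying inside the invariant — this is exactly where well-nestedness is needed, so that $\CS[T](a)$ and each critical section nested within it is a contiguous block whose enclosing releases are pulled in coherently — and that a single linear order simultaneously meets \cond{CRP-PO}, \cond{CRP-LW}, and $\TRWLt[T]$ without a cycle. I would isolate this as a reusable building block, in the spirit of the construction underlying $\TRWcrp(T)$ and of \cref{l:mustTRWAcq}: the $\TRWLt[T]$-downward closure of a set of pairwise-concurrent events, extended by their required dependencies, forms a correctly reordered prefix with those events schedulable last. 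The delicate but routine residue is last-write preservation for reads inside $\CS[T](a)$, which holds because each such read is conflicting with its last write and \cond{TRW-Conf} totally orders conflicting accesses by trace order, barring any intervening write from entering $T'$.
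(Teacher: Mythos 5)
Your opening moves capture the same underlying insight as the paper: the reduction of the case $\thd(e)=\thd(a)$, and the observation that $e \not\TRWLt[T] g$ for every $g \in \CS[T](a)$ (since otherwise \cond{TRW-PO/-Tr} would immediately give $e \TRWLt[T] r$), are both correct and are exactly what makes the critical section of $a$ ``schedulable''. But your overall route diverges from the paper's. You aim to exhibit a concrete witness $T' \in \crp(T)$ containing $r$ via a downward-closure-plus-linearization construction; the paper instead assumes no such witness exists, observes that the only possible obstruction to placing $r$ is a chain of same-lock critical sections containing TRW-ordered events, and performs an induction on the length of that chain, refuting the base case directly.

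The gap sits exactly where you flag the ``main obstacle'', and it is not a routine residue. The reusable building block you invoke --- that the $\TRWLt[T]$-downward closure of a set of pairwise-concurrent events, extended by its required dependencies, always forms a correctly reordered prefix with those events schedulable last --- is false in the generality you state it. That is precisely the content of Condition~\cond{DP-Block} and of case~(2.b) of the proof of \cref{thm:TRWSound}: in trace $T'_4$ of \cref{f:firstWithReqs}, the requests of the cycle $B=\DP{(e_5,e_6),(e_{15},e_{16})}$ are pairwise TRW-concurrent, yet no correctly reordered prefix schedules them both last, because an earlier cycle blocks the serialization of the critical sections that \cond{WF-Acq} forces you to interleave. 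Consequently, your invariant ($e \not\TRWLt[T] h$ for every included $h$) does not by itself guarantee that, for each pair of same-lock acquires pulled into $T'$, one of the two matching releases can be chosen consistently and acyclically; you need a separate argument that with only the single thread $\thd(e)$ truncated (rather than the two or more threads of a deadlock pattern) no cyclic obstruction can arise --- and that argument is essentially the induction on critical-section chains that the paper's proof performs. Until that piece is supplied, the closure you build is not known to be an element of $\crp(T)$, and the intended contradiction with the hypothesis ``$r \notin T'$ for all such $T'$'' is not established.
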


\begin{proofsketch}
  Towards contradiction, suppose $e \not\TRWLt[T] r$.
  We apply a similar induction as in the proof of \cref{l:mustTRWAcq}, and only discuss the base case.
  Here, $r$ appears before some $f$, both in a CS between $a_1$ and $r_1$ on lock $l$, where $g \TRWLt[T] f$ and $g$ appears after $e$, both in another CS between $a_2$ and $r_2$ on lock $l$: we have $e \TRWLt[T] f$, but not $e \TRWLt[T] r$.
  By well nestedness, also $a$ in the CS between $a_1$ and $r_1$.
  Since $a \TRWLt[T] e$, in any of the assumed $T' \in \crp(T)$, we must have $r_1 \TrLt[T'] a_2$.
  But then $r \TrLt[T'] e$, i.e., $e \in T'$: contradiction.
\end{proofsketch}

\begin{proof}
  [Proof of \cref{thm:TRWSound}]
  Assume $A = \DP{(\ACQ1,\REQ1),\ldots,(\ACQ{n},\REQ{n})}$.
  We need to show that there is $T' \in \crp(T)$ such that the requests in $A$ are the last events in their threads in $T'$, witnessing that $A$ is a predictable deadlock.
  To make full use of Condition~\cond{DP-TRW}, we prove the stronger property that there is such $T' \in \TRWcrp(T)$.
  We do so by contradiction, distinguishing cases on (1)~there being such $T'$ but none in $\TRWcrp(T)$ where we show contradiction of Condition~\cond{DP-TRW}, and (2)~there being no such $T' \in \crp(T)$ at all where we show contradiction of Condition~\cond{DP-Block}.

  To be precise, let $\mathbb{T} \subseteq \crp(T)$ such that every $T' \in \mathbb{T}$ contains all $q \in A$ as the last in their respective thread, and let $\mathbb{T}^\star = \mathbb{T} \cap \TRWcrp(T)$.
  We prove the theorem by proving the stronger property that $\mathbb{T}^\star$ is not empty.
  Towards contradiction, assume $\mathbb{T}^\star$ is empty.
  The rest of the analysis depends on whether (1)~$\mathbb{T}$ is not empty or (2)~$\mathbb{T}$ is empty.

  \begin{description}

    \item[Case~(1) contradicting Condition~\cond{DP-TRW}.]
      In case~(1), $\mathbb{T}$ is not empty.
      Since $\mathbb{T}^\star$ is empty, this means that $\mathbb{T} \setminus \TRWcrp(T)$ is not empty.
      That means that in any $T' \in \mathbb{T} \setminus \TRWcrp(T)$, there are $e,f \in T'$ such that $e \TRWLt[T] f$ but $f \TrLt[T'] e$, where $e \POLt[T] q_1$ and $f \POLt[T] q_2$ for some $q_1,q_2 \in A$.
      Hence, there must be a request that leaves an early critical section open, such that a later critical section is reordered before it, and the critical sections contain TRW-ordered events.
      To be precise, there are acquires $b_1,b_2 \in T$ on the same lock with respectively matching releases $r_1,r_2$, $e,f \in T$, and $q_1,q_2 \in A$, such that (a)~$b_1 \POLt[T] e \POLt[T] q_1 \POLt[T] r_1$, (b)~$f \POLt[T] r_2 \POLt[T] q_2$, and (c)~$e \TRWLt[T] f$.
      It is not necessarily the case that $f \in \CS[T](b_2)$ (when $f \POLt[T] b_2$), but this is no problem: there is $f' \in \CS[T](b_2)$ such that $f \POLeq[T] f'$ (i.e., possibly $f = f'$).
      Since $r_1 \notin T'$, we have $r_2 \TrLt[T'] b_1$ and hence $f \TrLt[T'] e$.
      We make the following inferences by definition of TRW:
      \begin{align*}
        e &\TRWLt[T] f \tag*{((c))} \\
        b_1 &\TRWLt[T] f \tag*{((a) and Conditions~\cond{TRW-PO/-Tr})} \\
        b_1 &\TRWLt[T] f' \tag*{(Condition~\cond{TRW-Tr})} \\
        r_1 &\TRWLt[T] f' \tag*{(Condition~\cond{TRW-Rel})} \\
        q_1 &\TRWLt[T] f' \tag*{((a) and Condition~\cond{TRW-Tr})} \\
        q_1 &\TRWLt[T] q_2 \tag*{((b) and Condition~\cond{TRW-Tr})}
      \end{align*}
      But then not $q_1 \TRWConc[T] q_2$, contradicting the Condition~\cond{DP-TRW} that DP $A$ satisfies.

    \item[Case~(2).]
      In case~(2), $\mathbb{T}$ is empty entirely.
      There may yet be candidate reordering, but they are not in $\crp(T)$ because of violation of Condition~\cond{CRP-LW}.
      If there are (subcase~(2.a)), we show a straightforward contradiction of Condition~\cond{DP-TRW}.
      Otherwise (subcase~(2.b)), the problem must be in reordering critical sections, and we show contradiction of Condition~\cond{DP-Block} (i.e., we show that there is an earlier cycle).

      \begin{description}

        \item[Subcase~(2.a) contradicting Condition~\cond{DP-TRW}.]
          In subcase~(2.a), all candidate reordering are rejected (i.e., not in $\crp(T)$) only because of violation of Condition~\cond{CRP-LW}.
          Either (i)~a read before some $q_1 \in A$ has its last write after some $q_2 \in A$, or (ii)~a read is assigned the wrong last write but its own last write is in the trace.
          In case~(i), trivially $q_1 \TRWLt[T] q_2$, again contradicting Condition~\cond{DP-TRW} that DP $A$ satisfies.
          The contradiction in case~(ii) follows analogously to that in case~(i).

        \item[Subcase~(2.b) contradicting Condition~\cond{DP-Block}.]
          In subcase~(2.b), the absence of the required reorderings is not only due to violation of Condition~\cond{CRP-LW}.
          Since satisfaction of all other conditions is trivial, the problem must be in violation of Conditions~\cond{CRP-WF/WF-Acq/WF-Rel}.
          That is, there are no candidate reorderings in $\crp(T)$, because there is a pair of acquires on the same lock that cannot be ordered in any way.

          To derive a contradiction, we show that there is a cycle $B$ such that $B \DPLt[T] A$.
          We apply induction on the size of $A$ (\ih1).
          In base case and inductive case, the idea is the same: we identify the problematic pair of acquires on the same lock, and show that their respective ordering is impossible due to another pair of acquires on another lock that cannot be ordered.
          In the base case, which we detail, the size of $A$ is two.
          We show that this new pair of acquires must also precede the two requests in $A$, and hence we can construct $B$ to contradiction Condition~\cond{DP-Block}.
          In the inductive case, the new pair of acquires may involve a third request in $A$; since there are less uninvolved requests left, the contradiction follows from \ih1 (the soundness of this induction follows from the finite size of $A$, so we ought to eventually encounter a ``new'' request we have seen before).

          To find the pair of problematic acquires, we apply induction on the distance between the problematic acquires and requests in $A$ (\ih2).
          In the inductive case, assume that both acquires have matching releases that precede requests in $A$: one of the critical sections must contain another acquire that causes the issue.
          Since this acquire is closer to a request in $A$, the thesis follows from \ih2.
          If one of the acquires has a matching release that succeeds a request in~$A$, we proceed as in the base case.

          In the base case, the acquires are as close to requests as possible, so there are no further problematics acquires.
          It must then be that one of the acquires has a matching release that must succeed a request in $A$; otherwise, there could not be an issue in ordering the acquires (remember that we ruled out write-read issues).

          Thus, we have acquires~$b_1,b'_1$ on the same lock~$l_1$ that, for a prefix of $T$ to be correctly reordered, have to occur before some $q_1,q_2 \in A$ and the release $r_1$ matching $b_1$ cannot occur before $q_1$ (w.l.o.g., assume $r_1 \in T$).
          Hence, by \cref{l:mustTRWAcq,l:mustTRWRel}, $b_1 \TRWLt[T] q_1 \TRWLt[T] r_1$, and $b'_1 \TRWLt[T] q_2$.
          Note that $q_1 \ne q_2$, for otherwise we would already have $b'_1 \TrLt[T] r'_1 \TrLt[T] b_1$, where release $r'_1$ matches $b'_1$: reordering would be unnecessary and hence there is no issue.

          By TRW boundedness, $\thd(b_1) = \thd(q_1)$.
          It follows that $b_1 \in \AH[T](q_1)$ and hence $l_1 \in \LH[T](q_1)$.
          Then, by Condition~\cond{DP-Guard}, $l_1 \notin \LH[T](q_2)$ so $b'_1 \notin \AH[T](q_2)$.
          Hence, $q_2 \not\TRWLt[T] r'_1$.

          This means that $q_2$ being the last in its thread cannot prevent $r'_1$ from being placed.
          However, by assuming that $\mathbb{T}$ is empty, doing so is impossible.
          Hence, there must be acquire $b_2$ that precedes $b'_1$ ($b_2 \TRWLt[T] b'_1$ by \cref{l:mustTRWAcq}) that cannot be moved to precede some $b'_2$ on the same lock~$l_2$, because the release $r_2$ matching $b_2$ succeeds $q_2$ ($q_2 \TRWLt[T] r_2$ by \cref{l:mustTRWRel}).
          Assuming $A = \DP{ (a_1,q_1) , (a_2,q_2) }$, following similar reasoning as above, it must then be that $b'_2 \TRWLt[T] q_1 \not\TRWLt[T] r'_2$, where release $r'_2$ matches~$b'_2$.

          There are many (well-nested) ways in which $b_2$ and $b'_2$ can be arranged with respect to $b'_1$ and $b_1$, respectively.
          Some are impossible by Condition~\cond{DP-Guard}, and all others (except one) can be ordered validly.
          Hence, the only arrangement that cannot be ordered applies.
          In this arrangement, we have $b_1 \TRWLt[T] b'_2 \TRWLt[T] r'_2 \TRWLt[T] q_1 \TRWLt[T] r_1$ and $b_2 \TRWLt[T] b'_1 \TRWLt[T] r'_1 \TRWLt[T] q_2 \TRWLt[T] r_2$.
          Hence, by TRW boundedness, $\thd(b_1) = \thd(b'_2) = \thd(q_1)$ and $\thd(b_2) = \thd(b'_1) = \thd(q_2)$.

          Let $q'_1$ and~$q'_2$ be the requests requesting $b'_2$ and $b'_1$, respectively, and let $B = \DP{ (b_1,q'_1) , (b_2,q'_2) }$.
          Clearly, $B$ is a cycle (satisfying Condition~\cond{DP-Cycle}) where $b_1 \in \AH[T](q_1)$ and $q'_1 \TrLt[T] q_1$, and $b_2 \in \AH[T](q_2)$ and $q'_2 \TrLt[T] q_2$.
          Hence, $B \DPLt[T] A$ (\cref{def:ltDP}), reaching our desired contradiction that DP $A$ satisfies Condition~\cond{DP-Block}.

          %   Below are all well-nested arrangements of CSs w.r.t. q1,q2.
          %   Combinations 3.e, 3.f, 4.e, 3.f are impossible by DP-1.
          %   All other combinations except 2.e have possible orderings.
          %   Combination 2.e cannot be ordered, and entails a prior DP.
          %
          %   1. a2' - r2' - a1  - q1  - r1
          %   2. a1  - a2' - r2' - q2  - r1
          %   3. a2' - a1  - q1  - r1  - r2'
          %   4. a1  - a2' - q1  - r2' - r1
          %
          %   a. a2  - r2  - a1' - r1' - q2
          %   b. a1' - a2  - r2  - r1' - q2
          %   c. a2  - a1' - r1' - r2  - q2
          %   d. a1' - r1' - a2  - r2  - q2
          %   e. a2  - a1' - r1' - q2  - r2
          %   f. a1' - r1' - a2  - q2  - r2

          If there are more than two requests in $A$, then $b'_2$ may also be related to some $q_3 \in A$.
          Analogous to above, an analysis of arrangements of critical sections reveals that there must be yet another pair of unorderable acquires.
          Since there are only a finite number of requests in $A$, at some point we must encounter a request we have seen before.
          Since every acquire encountered is assumed to contribute to unorderability, it must then be that we would uncover a pair of unorderable acquires that we have seen before.
          This way, we reach the cycle of requires for us to build $B$ as above.
          \qedhere
      \end{description}
  \end{description}
\end{proof}

\begin{lemma}
  \label{l:PWRTr}
  Suppose given well-formed trace $T$, and events $e,f \in T$ such that $e \PWRLt[T] f$.
  Then, for any $T' \in \crp(T)$ where $f \in T'$, we have $e \TrLt[T'] f$.
\end{lemma}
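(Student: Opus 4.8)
The plan is to induct on the derivation of $e \PWRLt[T] f$ according to the four clauses of \cref{d:pwr}, fixing an arbitrary $T' \in \crp(T)$ with $f \in T'$ and showing in each case that $e \in T'$ and $e \TrLt[T'] f$. The workhorse I would use repeatedly is the prefix property from \cond{CRP-PO}: if $g \POLeq[T] f$ and $f \in T'$, then $g \in T'$ with $g \POLeq[T'] f$, since a thread's projection in $T'$ is a prefix of its projection in $T$.

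The three non-lock clauses should be quick. For \cond{PWR-PO} the prefix property gives the claim directly. For \cond{PWR-LW}, where $e$ is the last write of the read $f$ in $T$, I would invoke \cond{CRP-LW} to conclude that $e$ is still the last write of $f$ in $T'$, hence $e \TrLt[T'] f$. For \cond{PWR-Tr}, I would apply the induction hypothesis first to $g \PWRLt[T] f$ to land $g \in T'$, then to $e \PWRLt[T] g$, and close by transitivity of $\TrLt[T']$.

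The real work is in \cond{PWR-Rel}, where $e$ is the release $r_1$ matching an acquire $a_1$, there is a later acquire $a_2$ on the same lock $l$ with $a_1 \TrLt[T] a_2$ and $f \in \CS[T](a_2)$, and $a_1 \PWRLt[T] f$. Here I would first transport two facts into $T'$: from $f \in \CS[T](a_2)$ and the prefix property, $a_2 \in T'$ and $f$ still sits inside $a_2$'s critical section (so $f$ precedes or equals any release matching $a_2$ in $T'$); and from the induction hypothesis on $a_1 \PWRLt[T] f$, $a_1 \in T'$ with $a_1 \TrLt[T'] f$. Since $a_1$ and $a_2$ are distinct acquires of $l$ in the well-formed trace $T'$ (\cond{CRP-WF}), \cond{WF-Acq} totally orders them with an intervening release. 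I would rule out $a_2 \TrLt[T'] a_1$: it would force the release matching $a_2$ before $a_1$, yet $a_1 \TrLt[T'] f$ and $f$ precedes that release, producing a cycle in $\TrLt[T']$. Thus $a_1 \TrLt[T'] a_2$, and \cond{WF-Acq} yields a release of $l$ by $\thd(a_1)$ between them, which by uniqueness of the matching release is exactly $r_1$; hence $r_1 \TrLt[T'] a_2 \POLeq[T'] f$, i.e., $e \TrLt[T'] f$.

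The main obstacle will be this exclusion of the ``wrong'' interleaving $a_2 \TrLt[T'] a_1$ in the \cond{PWR-Rel} case, which depends on carrying ``$f$ lies in the critical section of $a_2$'' from $T$ to $T'$ and colliding it with the induction hypothesis $a_1 \TrLt[T'] f$. A secondary technical point is confirming that the release supplied by \cond{WF-Acq} is the specific matching release $r_1 = e$ rather than another release of $l$ by the same thread; this I would settle by appealing to uniqueness of matching under the preserved program order. The remaining steps are routine bookkeeping with \cref{def:crp}.
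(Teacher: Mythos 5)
Your proposal is correct and follows essentially the same route as the paper's proof: induction on the derivation of $e \PWRLt[T] f$, with \cond{PWR-Rel} as the only interesting case, where the induction hypothesis gives $a_1 \TrLt[T'] f$, Condition~\cond{CRP-PO} gives $a_2 \TrLt[T'] f$, and \cond{WF-Acq} then places the matching release $e = r_1$ before $a_2$ and hence before~$f$. Your explicit exclusion of the interleaving $a_2 \TrLt[T'] a_1$ (via the cycle with $f \in \CS[T'](a_2)$) is a detail the paper's one-line appeal to \cond{WF-Acq} leaves implicit, and it is a correct and welcome addition rather than a divergence.
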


\begin{proof}
  Take any $T' \in \crp(T)$ where $f \in T'$.
  We apply induction on the derivation of $e \PWRLt[T] f$.
  The only interesting case is Condition~\cond{PWR-Rel}:
  there are acquires $a_1,a_2 \in T$ on the same lock such that (i)~$e$ is the release matching $a_1$, (ii)~$a_1 \TrLt[T] a_2$, (iii)~$f \in \CS[T](a_2)$, and (iv)~$a_1 \PWRLt[T] f$.
  By the IH on (iv), $a_1 \TrLt[T'] f$.
  By (iii), $\thd(f) = \thd(a_2)$, so $a_2 \POLt[T] f$;
  by Condition~\cond{CRP-PO}, $a_2 \TrLt[T'] f$.
  Then, by Condition~\cond{WF-Acq}, $e \TrLt[T'] f$.
\end{proof}

\begin{proof}
  [Proof of \cref{t:PWRcomplete}]
  Assume $A = \DP{(\ACQ{1},\REQ{1}),\ldots,(\ACQ{n},\REQ{n})}$.
  Since $A$ is a predictable deadlock (\cref{d:deadlock}), there is $T' \in \crp(T)$ such that every $\REQ{i}$ is the last event in its thread.

  To prove Condition~\cond{DP-PWR}, towards contradiction assume there are $1 \leq i < j \leq n$ such that $\REQ{i} \PWRLt[T] \REQ{j}$.
  By \cref{l:PWRTr}, $\REQ{i} \TrLt[T'] \REQ{j}$.
  Since they are the last in their respective thread, we can construct $T''$ from $T'$ by (1)~moving all requests to be the last events in the trace overall, and then (2)~swapping $\REQ{i}$ and $\REQ{j}$.
  The resulting $T''$ is in $\crp(T)$, because the only Condition~\cond{WF-/CRP-$\ast$} that applies to requests is Condition~\cond{WF-Req} but that condition is about the event following a request of which there are none.
  Now we have $\REQ{j} \TrLt[T''] \REQ{i}$.
  This contradicts our earlier conclusion from \cref{l:PWRTr}.

  To prove Condition~\cond{DP-Block}, towards contradiction assume that there is a cycle $B \DPLt[T] A$.
  Assume $B = \DP{(\ACQ{1}'',\REQB{1}),(\ACQ{2}'',\REQB{2})}$; the argument extends to size $j \leq n$ by induction.
  By \cref{def:ltDP}, $q'_1 \ne q_1$ and $q'_2 \ne q_2$.
  Moreover, for $i \in \{1,2\}$, $\ACQ{i}'' \in \AH(\REQ{i})$ and $q'_i \TrLt[T] q_i$.

  Take any $i \in \{1,2\}$.
  Because $\REQB{i}$ and $\REQ{i}$ have an acquire held in common, by \cref{d:CS}, $\thd(\REQB{i}) = \thd(\REQ{i})$.
  Hence, $q'_i \POLt[T] q_i$, so $q'_i \TrLt[T'] q_i$.
  Moreover, $\ACQ{i}'' \TrLt[T'] \REQB{i}$.
  Let $\ACQ{i}'$ be the acquire requested by $\REQ{i}'$.
  By Condition~\cond{WF-Req} (\cref{d:WF}), $\REQB{i} \TrLt[T'] \ACQ{i}' \TrLt[T'] \REQ{i}$.
  Let $\REL{i}'$ and $\REL{i}''$ be the releases matching $\ACQ{i}'$ and $\ACQ{i}''$, respectively.
  Since $a''_i \in \AH(q_i)$, $r''_i \in T$ implies $q_i \POLt[T] r''_i$.
  Because $\REQ{i}$ is the last event in its thread in $T'$, it follows that $\REL{i}'' \notin T'$.

  Let $j \in \{1,2\} \setminus \{i\}$.
  Analogously, $a''_j \TrLt[T'] q'_j \TrLt[T'] a'_j \TrLt[T'] q_j$, $r''_j \notin T'$, and all are in the same thread.
  By Condition~\cond{DP-Cycle}, $a''_i,q'_j,a'_j$ and $a''_j,q'_i,a'_i$ are pairwise on the same lock.
  Hence, by well formedness of $T'$, $r'_i,r'_j \in T'$ with $\REL{i}' \TrLt[T'] \ACQ{j}''$ and $\REL{j}' \TrLt[T'] \ACQ{i}''$.
  But then we obtain the following:
  $\REL{i}' \TrLt[T'] \ACQ{j}'' \TrLt[T'] \ACQ{j}' \TrLt[T'] \REL{j}' \TrLt[T'] \ACQ{i}'' \TrLt[T'] \ACQ{i}' \TrLt[T'] \REL{i}'$.
  Clearly, this cyclic ordering is impossible: a contradiction.
\end{proof}

\section{TRW for Soundness}

\begin{figure}[t]
%%latexTrace $ addLoc ex_ub9d_lt
\bda{|l|l|l|}
\hline  T_{11} & \thread{1} & \thread{2}\\ \hline
\eventE{1}  & \lockE{\LKA}&\\
\eventE{2}  & \lockE{\LKB}&\\
\eventE{3}  & \lockE{\LKC}&\\
\eventE{4}  & \writeE{\VA}&\\
\eventE{5}  & \unlockE{\LKC}&\\
\eventE{6}  & \unlockE{\LKB}&\\
\eventE{7}  & \lockE{\LKD}&\\
\eventE{8}  & \reqLockE{\LKE}&\\
\eventE{9}  & \lockE{\LKE}&\\
\eventE{10}  & \unlockE{\LKE}&\\
\eventE{11}  & \unlockE{\LKD}&\\
\eventE{12}  & \unlockE{\LKA}&\\
\eventE{13}  & &\lockE{\LKB}\\
\eventE{14}  & &\writeE{\VA}\\
\eventE{15}  & &\unlockE{\LKB}\\
\eventE{16}  & &\lockE{\LKA}\\
\eventE{17}  & &\unlockE{\LKA}\\
\eventE{18}  & &\lockE{\LKC}\\
\eventE{19}  & &\readE{\VA}\\
\eventE{20}  & &\lockE{\LKE}\\
\eventE{21}  & &\reqLockE{\LKD}\\
\eventE{22}  & &\lockE{\LKD}\\
\eventE{23}  & &\unlockE{\LKD}\\
\eventE{24}  & &\unlockE{\LKE}\\
\eventE{25}  & &\unlockE{\LKC}\\

\hline \eda{}
  \caption{Ordering conflicting write-write memory operations is critical for soundness.}
  \label{f:w-w-trw}
\end{figure}

The examples in \cref{fig:ex_3,fig:ex_5} show that
ordering conflicting write-read and read-write memory operations is critical to achieve soundness.
Trace $T_{11}$ in \cref{f:w-w-trw} shows that ordering
conflicting write-write memory operations is important as well.
We find that $A = \{(e_7,e_8), (e_{20},e_{21})\}$
satisfies Conditions~\cond{DP-Guard/-Cycle/-Block}
but \emph{not} \cond{DP-TRW} because $e_8 \TRWLt e_{21}$.
Were we to ignore the write-write dependency $e_4 \TRWLt e_{14}$, then
$e_8$ and $e_{21}$ become unordered.
However, $A$ is not a predictable deadlock: there is no reordering
where $e_8$ and $e_{21}$ are the last events in their respective threads.
This shows that ordering \emph{all} conflicting memory operations is critical for soundness.

\section{Full Table ``Number of concrete lock dependencies''}

See \cref{tbl:details-full}.

%%table_benchmarks_time_details_time_vcd_no_evt_stats
\begin{table*}[p]
    \caption{
    \textbf{Extended version of \cref{tbl:details} containing all benchmarks.}
  }
  \label{tbl:details-full}
  {
  \small
  \setlength{\tabcolsep}{4.8pt} % Compress horizontally.
  \renewcommand{\arraystretch}{0.91} % Compress vertically.
\begin{tabular}{|r||r|r|r||r|r||r|r|r|}
 \hline
1 & 2 & 3 & 4 & 5 & 6 & 7 & 8 & 9
 \\
 \hline
\multirow{2}{*}{\Benchmark} &
\multicolumn{3}{c||}{\UDTRWEvict} &
\multicolumn{2}{c||}{\UDTRW} &
\multicolumn{3}{c|}{\SPDOfflineUD} \\ \cline{2-4} \cline{5-6} \cline{7-9}
& \Cycles &  \Dependencies & \Time \ (\PhaseOne+\PhaseTwo)
& \Cycles & \Time \ (\PhaseOne+\PhaseTwo)
& \Cycles & \Dependencies & \Time \ (\PhaseOne+\PhaseTwo)
 \\
 \hline
Deadlock & 0 & 2 & 0 (0+0) & 0 & 0 (0+0) & 0 & 1 & 0 (0+0)  \\   \hline
NotADeadlock & 0 & 2 & 0 (0+0) & 0 & 0 (0+0) & 0 & 1 & 0 (0+0)  \\   \hline
Picklock & 1 & 6 & 0 (0+0) & 1 & 0 (0+0) & 1 & 5 & 0 (0+0)  \\   \hline
Bensalem & 1 & 6 & 0 (0+0) & 1 & 0 (0+0) & 1 & 6 & 0 (0+0)  \\   \hline
Transfer & 0 & 2 & 0 (0+0) & 0 & 0 (0+0) & 0 & 1 & 0 (0+0)  \\   \hline
Test-Dimminux & 2 & 7 & 0 (0+0) & 2 & 0 (0+0) & 2 & 7 & 0 (0+0)  \\   \hline
StringBuffer & 1 & 2 & 0 (0+0) & 1 & 0 (0+0) & 1 & 3 & 0 (0+0)  \\   \hline
Test-Calfuzzer & 1 & 5 & 0 (0+0) & 1 & 0 (0+0) & 1 & 5 & 0 (0+0)  \\   \hline
DiningPhil & 1 & 5 & 0 (0+0) & 1 & 0 (0+0) & 1 & 25 & 0 (0+0)  \\   \hline
HashTable & 0 & 1 & 0 (0+0) & 0 & 0 (0+0) & 0 & 42 & 0 (0+0)  \\   \hline
Account & 0 & 9 & 0 (0+0) & 0 & 0 (0+0) & 0 & 9 & 0 (0+0)  \\   \hline
Log4j2 & 0 & 3 & 0 (0+0) & 0 & 0 (0+0) & 0 & 3 & 0 (0+0)  \\   \hline
Dbcp1 & 1 & 5 & 0 (0+0) & 1 & 0 (0+0) & 1 & 5 & 0 (0+0)  \\   \hline
Dbcp2 & 0 & 10 & 0 (0+0) & 0 & 0 (0+0) & 0 & 17 & 0 (0+0)  \\   \hline
Derby2 & 0 & 0 & 0 (0+0) & 0 & 0 (0+0) & 0 & 0 & 0 (0+0)  \\   \hline
elevator & 0 & 0 & 1 (1+0) & 0 & 1 (1+0) & 0 & 0 & 1 (1+0)  \\   \hline
hedc & 0 & 4 & 1 (1+0) & 0 & 1 (1+0) & 0 & 4 & 1 (1+0)  \\   \hline
JDBCMySQL-1 & 0 & 23 & 1 (1+0) & 0 & 1 (1+0) & 0 & 3\mbox{K} & 1 (1+0)  \\   \hline
JDBCMySQL-2 & 0 & 24 & 1 (1+0) & 0 & 1 (1+0) & 0 & 3\mbox{K} & 1 (1+0)  \\   \hline
JDBCMySQL-3 & 1 & 28 & 1 (1+0) & 1 & 1 (1+0) & 1 & 3\mbox{K} & 1 (1+0)  \\   \hline
JDBCMySQL-4 & 1 & 30 & 1 (1+0) & 1 & 1 (1+0) & 1 & 3\mbox{K} & 1 (1+0)  \\   \hline
cache4j & 0 & 31 & 2 (2+0) & 0 & 2 (2+0) & 0 & 10\mbox{K} & 2 (2+0)  \\   \hline
ArrayList & 4 & 123 & 24 (24+0) & 4 & 24 (24+0) & 4 & 8\mbox{K} & 7 (7+0)  \\   \hline
IdentityHashMap & 1 & 42 & 25 (25+0) & 1 & 25 (25+0) & 1 & 79 & 8 (8+0)  \\   \hline
Stack & 3 & 2\mbox{K} & 44 (44+0) & 3 & 43 (43+0) & 3 & 95\mbox{K} & 8 (8+1)  \\   \hline
LinkedList & 4 & 118 & 27 (27+0) & 4 & 27 (27+0) & 4 & 7\mbox{K} & 9 (9+0)  \\   \hline
HashMap & 1 & 40 & 27 (27+0) & 1 & 27 (27+0) & 1 & 4\mbox{K} & 10 (10+0)  \\   \hline
WeakHashMap & 1 & 40 & 27 (27+0) & 1 & 28 (28+0) & 1 & 4\mbox{K} & 10 (10+0)  \\   \hline
Vector & 1 & 3 & 11 (11+0) & 1 & 10 (10+0) & 1 & 200\mbox{K} & 9 (9+0)  \\   \hline
LinkedHashMap & 1 & 40 & 29 (29+0) & 1 & 30 (30+0) & 1 & 4\mbox{K} & 12 (12+0)  \\   \hline
montecarlo & 0 & 0 & 23 (23+0) & 0 & 23 (23+0) & 0 & 0 & 23 (23+0)  \\   \hline
TreeMap & 1 & 40 & 41 (41+0) & 1 & 41 (41+0) & 1 & 4\mbox{K} & 24 (24+0)  \\   \hline
hsqldb & 0 & 2\mbox{K} & 57 (57+0) & 0 & 57 (56+0) & 0 & 125\mbox{K} & 56 (55+0)  \\   \hline
sunflow & 0 & 45 & 67 (67+0) & 0 & 68 (68+0) & 0 & 248 & 66 (66+0)  \\   \hline
jspider & 0 & 158 & 72 (72+0) & 0 & 73 (73+0) & 0 & 2\mbox{K} & 72 (72+0)  \\   \hline
tradesoap & 0 & 9\mbox{K} & 173 (166+6) & 0 & 174 (168+6) & 0 & 40\mbox{K} & 163 (157+6)  \\   \hline
tradebeans & 0 & 9\mbox{K} & 177 (171+6) & 0 & 177 (171+6) & 0 & 40\mbox{K} & 166 (160+6)  \\   \hline
TestPerf & 0 & 0 & 196 (196+0) & 0 & 199 (199+0) & 0 & 0 & 192 (192+0)  \\   \hline
Groovy2 & 0 & 11\mbox{K} & 380 (379+1) & 0 & 386 (385+1) & 0 & 29\mbox{K} & 372 (371+1)  \\   \hline
tsp & 0 & 0 & 992 (992+0) & 0 & 989 (989+0) & 0 & 0 & 997 (997+0)  \\   \hline
lusearch & 0 & 87 & 719 (719+0) & 0 & 723 (723+0) & 0 & 41\mbox{K} & 719 (719+0)  \\   \hline
biojava & 0 & 89 & 656 (656+0) & 0 & 656 (656+0) & 0 & 545 & 661 (661+0)  \\   \hline
graphchi & 0 & 50 & 756 (756+0) & 0 & 756 (756+0) & 0 & 82 & 754 (754+0)  \\   \hline
 \hline
\Sum & 27 & 33\mbox{K} & 4533 (4518+14) & 27 & 4545 (4531+14) & 27 & 626\mbox{K} & 4348 (4333+15)
 \\  \hline   \end{tabular}
}
\end{table*}

\section{TRW-Boundedness Check}

\begin{algorithm*}[t!]
  \caption{\cref{alg:lock-deps} extended with TRW-boundedness check.}
  \label{alg:lock-deps-guard-check}

  {\small
    \begin{algorithmic}[1]
      \Function{computeTRWLockDeps}{$T$}
      \label{ln:g-cLDs}
        \State $\forall t \colon \threadVC{t} = [\bar{0}]; \incC{\threadVC{t}}{t}$
        \Comment{Vector clock $\threadVC{t}$ of thread $t$}
        \label{ln:g-thvc}
        \State $\forall x \colon \lastWriteVC{x} = [\bar{0}]; \lastReadVC{x} = [\bar{0}]$
        \Comment{Vector clocks $\lastWriteVC{x},\lastReadVC{x}$ of most recent $\writeE{x},\readE{x}$}
        \label{ln:g-lwlr}
        \State $\forall l \colon \acqVC{l} = [\bar{0}]$
        \Comment{Vector clock $\acqVC{l}$ of most recent $\acqE{l}$}
        \label{ln:g-acqv}
        \State $\forall l \colon \Hist{l} = []$
        \Comment{History $\Hist{l}$ of acquire-release pairs $(\Vacq,\Vrel)$ for lock $l$}
        \label{ln:g-hist}
        \State $\forall t \colon \AcqHeld(t) = []$
        \Comment{Sequence $\AcqHeld(t)$ of acquires held by thread $t$}
        \label{ln:g-acqhd}
        \State $\LDMapSym = \emptyset$
        \Comment{Map  with keys $(t,l,ls)$,
                 list values with elements $(i,V,\{a_1,\ldots,a_n\})$}
        \label{ln:g-ld}
        \State $\GlobalLS = \emptyset$
        \Comment{Global lockset across all threads}
        \label{ln:g-gls}
        \State $\GCMapSym = \emptyset$
        \Comment{Map  with keys $(t,l)$ and vector clock values}
        \label{ln:g-gmap}
        \ForDo {$e$ in $T$} {\Call{process}{$e$}}
        \State \Return \LDt
      \EndFunction
      \algstore{cld}
    \end{algorithmic}

    \begin{minipage}[t]{.54\textwidth}
      \begin{algorithmic}[1]
        \algrestore{cld}
        \Procedure{process}{$(\alpha,t,acq(l))$}
          \label{ln:g-procAcq}
          \If {$\AcqHeld(t) \not = []$}
          \label{ln:g-acq-non-empty}
          \State $ls = \{ l' \mid \acqE{l'} \in \AcqHeld(t) \}$
          \label{ln:g-ld-ls}
          \State $\LDMap{t}{l}{ls}.pushBack(\alpha,\threadVC{t},\AcqHeld(t))$
          %% \State $\LDt = \LDt \cup \{ \LD{t}{l,\threadVC{t}}{\AcqHeld(t)} \}$
            \label{ln:g-ld-add}
          \EndIf
          \State $\AcqHeld(t) = \AcqHeld(t) \cup \{ (\alpha,t,acq(l)) \}$
          \label{ln:g-acq-push}
          \State $\GlobalLS = \GlobalLS \cup \{ l \}$
          \label{ln:g-global-lockset-push}
          \For{$l \in \GlobalLS, acq(l) \not\in \AcqHeld(t), \acqVC{l} < \threadVC{t}$}
          \label{ln:g-guard-candidate}
          \If{$\GCMapSym[l][t].empty()$}
          \State $\GCMapSym[l][t] = \threadVC{t}$
          \label{ln:g-first-entry}
          \EndIf
          \EndFor
          %% The above represents the ``pre'' vector clock = request
          \State $\threadVC{t} = \Call{syncCS}{\threadVC{t},\AcqHeld(t)}$
          \label{ln:g-acq-sync}
          \State $\acqVC{l} = \threadVC{t}$
          \label{ln:g-acq-store}
          \State $\incC{\threadVC{t}}{t}$
          \label{ln:g-acq-vc-inc}
        \EndProcedure
        \algstore{eacq}
      \end{algorithmic}

      \begin{algorithmic}[1]
        \algrestore{eacq}
        \Procedure{process}{$(\dontCare,t,rel(l))$}
        \label{ln:g-procRel}
        \For{$t', \neg \GCMapSym[l][t'].empty()$}
        \If{$\GCMapSym[l][t'] < \threadVC{t}$}
        \label{ln:g-guard-check}
        \State ``TRW-boundedness violated''
        \EndIf
        \EndFor
        \State $\GCMapSym[l] = \emptyset$
        \label{ln:g-init-map}
          \State $\AcqHeld(t) = \{\dontCare,\dontCare,acq(l')) \in \AcqHeld(t) \mid l' \not= l \}$
          \label{ln:g-acq-pop}
          \State $\GlobalLS = \GlobalLS - \{ l \}$
          \label{ln:g-global-lockset-pop}
          \State $\Hist{l} = \Hist{l} \cup \{(\Acq{l}, \threadVC{t})\}$
          \label{ln:g-hist-add}
          \State $\incC{\threadVC{t}}{t}$
          \label{ln:g-rel-vc-inc}
        \EndProcedure
        \algstore{erel}
      \end{algorithmic}

      %% MS: omit for brevity
      %% \begin{algorithmic}[1]
      %%   \Procedure{\mbox{$e$}@fork}{$t,s$}
      %% \State $\threadVC{s} = \threadVC{t}$
      %% \State $\incC{\threadVC{t}}{t}$
      %% \EndProcedure
      %% \end{algorithmic}

      %% MS: omit for brevity
      %% \begin{algorithmic}[1]
      %%   \Procedure{\mbox{$e$}@join}{$t,s$}
      %%   \State $\threadVC{t} = \threadVC{s} \sqcup \threadVC{t}$
      %%   \If {RO}
      %%   \State $\threadVC{t} = \Call{syncCS}{\threadVC{t},\StdLocksSym(t)}$
      %%   \EndIf
      %% \State $\incC{\threadVC{t}}{t}$
      %% \EndProcedure
      %% \end{algorithmic}
    \end{minipage}%
    \hfill%
    \begin{minipage}[t]{.46\textwidth}
      \begin{algorithmic}[1]
        \algrestore{erel}
        \Procedure{process}{$(\dontCare,t,wr(x))$}
          %% TWR-only
          \label{ln:g-procWr}
          \State $\threadVC{t} = \threadVC{t} \sqcup \lastWriteVC{x}$
          \label{ln:g-ww-sync}
          \State $\threadVC{t} = \threadVC{t} \sqcup \lastReadVC{x}$
          \label{ln:g-rw-sync}
          \State $\threadVC{t} = \Call{syncCS}{\threadVC{t},\AcqHeld(t)}$
          \label{ln:g-csw-sync}
          %% TWR + PWR
          \State $\lastWriteVC{x} = \threadVC{t}$
          \label{ln:g-lw-store}
          \State $\incC{\threadVC{t}}{t}$
          \label{ln:g-wr-vc-inc}
        \EndProcedure
        \algstore{ewr}
      \end{algorithmic}

      \begin{algorithmic}[1]
        \algrestore{ewr}
        \Procedure{process}{$(\dontCare,t,rd(x))$}
          \label{ln:g-procRd}
          \State $\threadVC{t} = \threadVC{t} \sqcup \lastWriteVC{x}$
          \label{ln:g-wr-sync}
          \State $\threadVC{t} = \Call{syncCS}{\threadVC{t},\AcqHeld(t)}$
          \label{ln:g-csr-sync}
            \State $\lastReadVC{x} = \threadVC{t}$ %% TWR-only
          \label{ln:g-lr-store}
          \State $\incC{\threadVC{t}}{t}$
          \label{ln:g-rd-vc-inc}
        \EndProcedure
        \algstore{erd}
      \end{algorithmic}

      \begin{algorithmic}[1]
        \algrestore{erd}
        \Function{syncCS}{$V, A$}
          \label{ln:g-syncCS}
          \For {$\acqE{l} \in A , (\Vacq, \Vrel) \in \Hist{l}$}
          \label{ln:g-sync-hist-entry}
            \IfThen {$\Vacq < V$} {$V = V \sqcup \Vrel$}
            \label{ln:g-ro-sync}
          \EndFor
        \State \Return V
        \EndFunction
        % \algstore{sync}
      \end{algorithmic}
    \end{minipage}
  }
\end{algorithm*}

In terms of vector clocks, TWR boundedness is violated
if there is a pair of acquire-release vector clocks $(V_{acq}, V_{rel})$
and a request vector clock $V'$
such that $V_{acq} < V' < V_{rel}$ and the request results from
a thread other than the acquire-release pair.

\cref{alg:lock-deps-guard-check} integrates
the TRW-boundedness check in Phase~(1).
We additionally maintain the set $\GlobalLS$ of locks
held across all threads and
use map $\GCMapSym$ to check for potential `guard' locks
that violate the TRW-boundedness condition.

For every acquire event, we check for a potential guard lock that
has been acquired by some other thread and that is TRW ordered
(\cref{ln:g-guard-candidate} in \cref{alg:lock-deps-guard-check}).
For each potential guard lock~$l$ and thread~$t$, we record
the vector clock $\threadVC{t}$ (corresponding to the vector clock of the request).
There may be multiple requests with vector clocks $V_1,...,V_n$
that satisfy \cref{ln:g-guard-candidate}.
For every such $V_i$, we check whether $V_i < V_{rel(l)}$
where $V_{rel(l)}$ is the vector clock of the release event
that belongs to the acquire event $\acqVC{l}$.
If there is such $V_i$,
then TRW boundedness is violated.
Because $V_1 < \ldots < V_n$, it sufficies to only
check the ``earlier'' vector clock $V_1$.
Storage of $V_1$ is done on \cref{ln:g-first-entry}.
The check $V_1 < V_{rel(l)}$ is carried out on \cref{ln:g-guard-check}.

Access to the map $\GCMapSym$ takes constant time, and
the set of guard locks and threads to consider can be treated
as a constant.
Hence, the main cost factors of our TRW-boundedness check are vector-clock operations.
The number of vector-clock operations remains linear in the number of events.
We conclude that the time complexity of Phase~(1) is unaffected.
Measurements show that the TRW-boundedness check
does not cause any additional overhead.
See \cref{tbl:guard}, where \UDTRWGuardCheck
is the variant that includes the TRW-boundedness check described
above.

%%table_benchmarks_time_details_time_guard_check
\begin{table*}[t]
    \caption{
      \textbf{Impact of TWR-boundedness check.}
          Columns~2--5 contain the number of events, of threads,
    of memory locations, and of locks, respectively.
    Columns~6--9 contain the number of deadlocks reported and running time for each candidate.
      Times are rounded to the nearest second, and reported for Phases~(1) and~(2) separately.
  }
    \label{tbl:guard}
  {
  % \scriptsize
  \small
  \setlength{\tabcolsep}{4.7pt} % Compress horizontally.
  \renewcommand{\arraystretch}{0.91} % Compress vertically.
\begin{tabular}{|r|r|r|r|r||r|r||r|r|}
 \hline
1 & 2 & 3 & 4 & 5 & 6 & 7 & 8 & 9
 \\
 \hline
\multirow{2}{*}{\Benchmark} &
\multirow{2}{*}{\EE} &
\multirow{2}{*}{\TT} &
\multirow{2}{*}{\MM} &
\multirow{2}{*}{\LL} &
\multicolumn{2}{c||}{\UDTRW} &
\multicolumn{2}{c|}{\UDTRWGuardCheck} \\ \cline{6-7} \cline{8-9}
 & & & &
& \Cycles &  \Time \ (\PhaseOne+\PhaseTwo)
& \Cycles & \Time \ (\PhaseOne+\PhaseTwo)
 \\
 \hline
Deadlock & 28 & 3 & 3 & 2 & 0 & 0 (0+0) & 0 & 0 (0+0)  \\   \hline
NotADeadlock & 42 & 3 & 3 & 4 & 0 & 0 (0+0) & 0 & 0 (0+0)  \\   \hline
Picklock & 46 & 3 & 5 & 5 & 1 & 0 (0+0) & 1 & 0 (0+0)  \\   \hline
Bensalem & 45 & 4 & 4 & 4 & 1 & 0 (0+0) & 1 & 0 (0+0)  \\   \hline
Transfer & 56 & 3 & 10 & 3 & 0 & 0 (0+0) & 0 & 0 (0+0)  \\   \hline
Test-Dimminux & 50 & 3 & 8 & 6 & 2 & 0 (0+0) & 2 & 0 (0+0)  \\   \hline
StringBuffer & 57 & 3 & 13 & 3 & 1 & 0 (0+0) & 1 & 0 (0+0)  \\   \hline
Test-Calfuzzer & 126 & 5 & 15 & 5 & 1 & 0 (0+0) & 1 & 0 (0+0)  \\   \hline
DiningPhil & 210 & 6 & 20 & 5 & 1 & 0 (0+0) & 1 & 0 (0+0)  \\   \hline
HashTable & 222 & 3 & 4 & 2 & 0 & 0 (0+0) & 0 & 0 (0+0)  \\   \hline
Account & 617 & 6 & 46 & 6 & 0 & 0 (0+0) & 0 & 0 (0+0)  \\   \hline
Log4j2 & 1\mbox{K} & 4 & 333 & 10 & 0 & 0 (0+0) & 0 & 0 (0+0)  \\   \hline
Dbcp1 & 2\mbox{K} & 3 & 767 & 4 & 1 & 0 (0+0) & 1 & 0 (0+0)  \\   \hline
Dbcp2 & 2\mbox{K} & 3 & 591 & 9 & 0 & 0 (0+0) & 0 & 0 (0+0)  \\   \hline
Derby2 & 3\mbox{K} & 3 & 1\mbox{K} & 3 & 0 & 0 (0+0) & 0 & 0 (0+0)  \\   \hline
elevator & 222\mbox{K} & 5 & 726 & 51 & 0 & 1 (1+0) & 0 & 1 (1+0)  \\   \hline
hedc & 410\mbox{K} & 7 & 109\mbox{K} & 7 & 0 & 1 (1+0) & 0 & 1 (1+0)  \\   \hline
JDBCMySQL-1 & 436\mbox{K} & 3 & 73\mbox{K} & 10 & 0 & 1 (1+0) & 0 & 1 (1+0)  \\   \hline
JDBCMySQL-2 & 436\mbox{K} & 3 & 73\mbox{K} & 10 & 0 & 1 (1+0) & 0 & 1 (1+0)  \\   \hline
JDBCMySQL-3 & 436\mbox{K} & 3 & 73\mbox{K} & 12 & 1 & 1 (1+0) & 1 & 1 (1+0)  \\   \hline
JDBCMySQL-4 & 437\mbox{K} & 3 & 73\mbox{K} & 13 & 1 & 1 (1+0) & 1 & 1 (1+0)  \\   \hline
cache4j & 758\mbox{K} & 2 & 46\mbox{K} & 19 & 0 & 2 (2+0) & 0 & 2 (2+0)  \\   \hline
ArrayList & 3\mbox{M} & 801 & 121\mbox{K} & 801 & 4 & 24 (24+0) & 4 & 24 (24+0)  \\   \hline
IdentityHashMap & 3\mbox{M} & 801 & 496\mbox{K} & 801 & 1 & 25 (25+0) & 1 & 25 (25+0)  \\   \hline
Stack & 3\mbox{M} & 801 & 118\mbox{K} & 2\mbox{K} & 3 & 43 (43+0) & 3 & 44 (43+0)  \\   \hline
LinkedList & 3\mbox{M} & 801 & 290\mbox{K} & 801 & 4 & 27 (27+0) & 4 & 27 (27+0)  \\   \hline
HashMap & 3\mbox{M} & 801 & 555\mbox{K} & 801 & 1 & 27 (27+0) & 1 & 27 (27+0)  \\   \hline
WeakHashMap & 3\mbox{M} & 801 & 540\mbox{K} & 801 & 1 & 28 (28+0) & 1 & 27 (27+0)  \\   \hline
Vector & 3\mbox{M} & 3 & 14 & 3 & 1 & 10 (10+0) & 1 & 10 (10+0)  \\   \hline
LinkedHashMap & 4\mbox{M} & 801 & 617\mbox{K} & 801 & 1 & 30 (30+0) & 1 & 29 (29+0)  \\   \hline
montecarlo & 8\mbox{M} & 3 & 850\mbox{K} & 2 & 0 & 23 (23+0) & 0 & 23 (23+0)  \\   \hline
TreeMap & 9\mbox{M} & 801 & 493\mbox{K} & 801 & 1 & 41 (41+0) & 1 & 41 (41+0)  \\   \hline
hsqldb & 20\mbox{M} & 46 & 945\mbox{K} & 402 & 0 & 57 (56+0) & 0 & 56 (56+0)  \\   \hline
sunflow & 21\mbox{M} & 15 & 2\mbox{M} & 11 & 0 & 68 (68+0) & 0 & 67 (67+0)  \\   \hline
jspider & 22\mbox{M} & 11 & 5\mbox{M} & 14 & 0 & 73 (73+0) & 0 & 72 (72+0)  \\   \hline
tradesoap & 42\mbox{M} & 236 & 3\mbox{M} & 6\mbox{K} & 0 & 174 (168+6) & 0 & 171 (165+6)  \\   \hline
tradebeans & 42\mbox{M} & 236 & 3\mbox{M} & 6\mbox{K} & 0 & 177 (171+6) & 0 & 175 (169+6)  \\   \hline
TestPerf & 80\mbox{M} & 50 & 598 & 8 & 0 & 199 (199+0) & 0 & 195 (195+0)  \\   \hline
Groovy2 & 120\mbox{M} & 13 & 13\mbox{M} & 10\mbox{K} & 0 & 386 (385+1) & 0 & 384 (383+1)  \\   \hline
tsp & 307\mbox{M} & 10 & 181\mbox{K} & 2 & 0 & 989 (989+0) & 0 & 991 (991+0)  \\   \hline
lusearch & 217\mbox{M} & 10 & 5\mbox{M} & 118 & 0 & 723 (723+0) & 0 & 718 (718+0)  \\   \hline
biojava & 221\mbox{M} & 6 & 121\mbox{K} & 78 & 0 & 656 (656+0) & 0 & 659 (659+0)  \\   \hline
graphchi & 216\mbox{M} & 20 & 25\mbox{M} & 60 & 0 & 756 (756+0) & 0 & 775 (775+0)  \\   \hline
 \hline
\Sum & 1354\mbox{M} & 7\mbox{K} & 61\mbox{M} & 30\mbox{K} & 27 & 4545 (4531+14) & 27 & 4546 (4532+14)
\\  \hline   \end{tabular}
}
\end{table*}

\section{Preliminary Access to our Implementation and Experiment Setup}

We plan to our implementation as an artifact; preliminary access is available at \url{https://osf.io/ku9fx/files/osfstorage?view_only=b7f53d3110894fe39ad1520ed0fed4ec} (anonymized link).
Reviewers are welcome to confirm our results by downloading our implementation and benchmark traces.
The repository contains a `README.md' with build and execution instructions.
Note that the repository formats MarkDown poorly, so we strongly advise to download `README.md' and view it locally.

%%% Local Variables:
%%% mode: latex
%%% TeX-master: "main.tex"
%%% End:

\end{document}